\documentclass{article}

\usepackage{arxiv}

\usepackage{amsmath}

\newcommand{\bfbeta}{\mbox{\boldmath $\beta$}}

	\newcommand{\bfSigma}{\mbox{\boldmath $\Sigma$}}

\newcommand{\bfa}{{\bf a}}		\newcommand{\bfb}{{\bf b}}	
		\newcommand{\bfd}{{\bf d}}	
\newcommand{\bfe}{{\bf e}}				
			
		\newcommand{\bfI}{{\bf I}}

		\newcommand{\bft}{{\bf t}}		
			
		\newcommand{\bfx}{{\bf x}}	\newcommand{\bfX}{{\bf X}}
\newcommand{\bfy}{{\bf y}}		\newcommand{\bfz}{{\bf z}}	

\newcommand{\bfone}{{\bf 1}}
\newcommand{\bfzero}{{\bf 0}}

\newcommand{\bfvarepsilon}{\mbox{\boldmath $\varepsilon$}}

\usepackage[utf8]{inputenc}
\usepackage[T1]{fontenc}
\usepackage{microtype}
\usepackage{textcomp}

\usepackage{amsmath}
\usepackage{amssymb}
\usepackage{amsthm}
\usepackage{mathrsfs}
\usepackage{nicefrac}

\usepackage{graphicx}
\usepackage{xcolor}
\usepackage{adjustbox}

\usepackage{booktabs}
\usepackage{multirow}
\usepackage{longtable}

\usepackage{algorithm}
\usepackage{algpseudocode}

\usepackage{setspace}
\usepackage{changepage}
\usepackage{pdflscape}
\usepackage[title]{appendix}
\usepackage{manyfoot}

\usepackage{listings}

\usepackage{natbib}
\usepackage{doi}

\usepackage{cleveref}

\newtheorem{theorem}{Theorem}[section]
\newtheorem{lemma}[theorem]{Lemma}
\newtheorem{corollary}[theorem]{Corollary}
\newtheorem{proposition}[theorem]{Proposition}
\newtheorem{assumption}[theorem]{Assumption}

\title{Structured Screen-and-Select for Ultra-High-Dimensional Variable Selection}

\newif\ifuniqueAffiliation
\uniqueAffiliationtrue

\ifuniqueAffiliation 
\author{
         \href{https://orcid.org/0000-0003-4814-7602}{\includegraphics[scale=0.06]{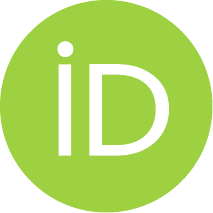}\hspace{1mm}Nilotpal~Sanyal\thanks{Corresponding author}} \\
	Department of Mathematical Sciences\\
	The University of Texas at El Paso\\
	El Paso, TX 79968\\
	\texttt{nsanyal@utep.edu} \\ 
	\And
	\href{https://orcid.org/0009-0007-0612-5188}{\includegraphics[scale=0.06]{orcid.pdf}\hspace{1mm}Padmore N.~Prempeh}\\
	Department of Mathematical Sciences\\
	The University of Texas at El Paso\\
	El Paso, TX 79968 \\
	\texttt{pprempeh@albany.edu}
}
\else
\usepackage{authblk}

\newbox{\orcid}\sbox{\orcid}{\includegraphics[scale=0.06]{orcid.pdf}} 
\author[1]{%
	\href{https://orcid.org/0000-0000-0000-0000}{\usebox{\orcid}\hspace{1mm}David S.~Hippocampus\thanks{\texttt{hippo@cs.cranberry-lemon.edu}}}%
}
\author[1,2]{%
	\href{https://orcid.org/0000-0000-0000-0000}{\usebox{\orcid}\hspace{1mm}Elias D.~Striatum\thanks{\texttt{stariate@ee.mount-sheikh.edu}}}%
}
\affil[1]{Department of Computer Science, Cranberry-Lemon University, Pittsburgh, PA 15213}
\affil[2]{Department of Electrical Engineering, Mount-Sheikh University, Santa Narimana, Levand}
\fi

\renewcommand{\headeright}{} 
\renewcommand{\undertitle}{} 
\renewcommand{\shorttitle}{Structured Screen-and-Select for Ultra-High-Dimensional Variable Selection}

\hypersetup{
pdftitle={Structured Screen-and-Select for Ultra-High-Dimensional Variable Selection},
pdfsubject={stat.ME, stat.TH},
pdfauthor={Nilotpal~Sanyal, Padmore N.~Prempeh},
pdfkeywords={High-dimensional variable selection, Structured screening and selection, Correlated predictors and proxy variables, Genomics and transcriptomics analysis},
}

\begin{document}
\maketitle

\addtocontents{toc}{\protect\setcounter{tocdepth}{-1}} 

\begin{abstract}
Ultra-high-dimensional data, in which $p$ greatly exceeds $n$, are common in genomics and biomedical research. Many screening methods rely mainly on marginal predictor--outcome associations and can miss active predictors with weak marginal signals, especially when predictors are strongly correlated. We develop Structured Screen-and-Select Variable Selection (S3VS), an iterative framework combining outcome-based screening with correlation-based local predictor sets. At each iteration, S3VS selects leading variables, forms local sets using predictor--predictor associations, applies a model-specific selector, aggregates selected and nonselected variables, and updates the candidate set and, when appropriate, the outcome representation. The framework supports multiple criteria for leading variables and leading sets, flexible aggregation rules, and implementations for linear, generalized linear, accelerated failure-time, and Cox models. For one specified one-step linear configuration, we establish a sure-screening result under conditions on proxy coverage, correlation separation, within-set retention, and active-preserving aggregation. Simulations compare full and first-iteration S3VS with one-pass SIS-based procedures in linear, binary-logistic, and Cox settings. S3VS can improve recovery or prediction when correlated predictors provide proxy information, but gains depend on predictor structure and the selector. In ovarian-cancer data, full-iteration S3VS with clinical variables had the strongest internal discrimination and early prediction, whereas SIS--Cox--LASSO with clinical variables had the strongest external discrimination. Neither molecular approach consistently reduced prediction error, and no gene was selected in all five outer folds. S3VS provides a framework for using predictor dependence before model-specific selection while clarifying when it helps. The methodology is implemented in the public \textsf{R} package \texttt{S3VS} on CRAN.
\end{abstract}

\keywords{High-dimensional variable selection \and Structured screening and selection \and Correlated predictors and proxy variables \and Genomics and transcriptomics analysis}

Ultra-high-dimensional data, in which the number of candidate predictors $p$ greatly exceeds the sample size $n$, are now common in genomics, transcriptomics, proteomics, medical imaging, and other data-rich fields. Scientific interest usually centers on a relatively small subset of relevant variables. Recovering that subset is difficult when the sample size is small relative to $p$, predictors are strongly correlated, and fitting a model to the full predictor set is computationally demanding. A relevant variable may be especially difficult to detect when its marginal association with the outcome is weak or when it is highly correlated with other predictors.

Penalized procedures such as LASSO \citep{tibshirani_1996}, adaptive LASSO \citep{zou_2006}, SCAD \citep{fan_li_2001}, and the Dantzig selector \citep{candes_tao_2007} are widely used for high-dimensional estimation and variable selection. When $p$ is extremely large, however, fitting these procedures directly can still be expensive, and strong collinearity can make the selected model unstable. Screening methods reduce the number of predictors before model fitting. Sure independence screening (SIS) \citep{fan_lv_2008}, for example, ranks predictors by marginal association and has a sure-screening property under suitable conditions. Its reliance on marginal association can also be a limitation. An active predictor may be missed when its individual association is weak even though it contributes jointly or is strongly correlated with another predictor that has a clearer marginal signal.

One way to address this problem is to use predictor dependence during screening rather than treating correlation only as an obstacle to estimation. The GWASinlps procedure of \citet{sanyal_et_al_2019} provided an important precursor to this perspective and introduced the terms \emph{leading variable} and \emph{leading set}. It combined outcome-based ranking with correlation-defined local neighborhoods and non-local-prior-based variable selection in a genomic setting. In the notation used here, its screening and neighborhood construction are represented by top-$k$ and top-$q$ rules, respectively, together with a single aggregation rule. In the present paper, we develop S3VS as a general iterative framework that builds on this insight and extends it to a unified methodology for linear, generalized linear, and survival models. S3VS makes leading-variable and leading-set construction, within-set selection, aggregation, and outcome updating modular, allowing these components to be matched to the statistical model and scientific objective. At each iteration, S3VS chooses leading variables according to their association with the current outcome quantity, forms a leading set around each leader using predictor--predictor association, applies a specified selector within each set, combines the selected and nonselected variables, and updates the outcome quantity and candidate predictor set.

S3VS provides a broad and explicitly configurable design space. Leading variables can be selected by Top-$k$, Fixed-Threshold, or Relative-to-Maximum criteria, and leading sets can be constructed by the analogous Top-$q$, Fixed-Threshold, or Relative-to-Maximum criteria. Selected variables can be aggregated by Conservative Selection or Liberal Selection, while nonselected variables can be aggregated by Conservative Exclusion from the Beginning, Conservative Exclusion from the End, or Liberal Exclusion. Within-set selection can use different model-specific procedures, including LASSO, non-local-prior selection \citep{johnson_rossell_2012}, bridge penalization \citep{huang_ma_2010}, AFTGEE \citep{aftgee}, and PVAFT \citep{afthd}. We give implementations for linear models, generalized linear models, and censored survival outcomes under accelerated failure time and Cox proportional hazards formulations. The paper also develops an asymptotic sure-screening result for a specified one-step linear configuration, conducts extensive simulations in linear, binary-logistic, and Cox settings, and provides the publicly available \textsf{R} package \texttt{S3VS} \citep{S3VS} on CRAN.

Our theoretical analysis focuses on a specified one-step linear S3VS configuration with top-$d_L$ marginal-score leading variables, top-$m_n$ absolute-correlation leading sets, a within-set selector satisfying the stated conditional sure-screening property, and an active-preserving aggregation map. This focused configuration permits a transparent analysis of proxy-assisted screening. We use the term proxy for a predictor that is correlated with an active predictor and has a stronger marginal association with the outcome. This use of proxy refers only to screening information and does not imply causal substitution or biological equivalence. Under the stated conditions, an active predictor with a weak marginal signal can be recovered through a more strongly associated proxy if the active predictor lies in the proxy's leading set and the within-set selector retains it. When the proxy signal is stronger than the active predictor's own marginal signal, the derived failure-probability bound can decrease faster than the corresponding SIS bound, and the local candidate sets can be smaller than an SIS candidate set. We also give post-screening estimation comparisons and a counterexample with no proxy-signal advantage. The theory therefore identifies the mechanism through which structured screening can improve recovery while clarifying the conditions under which that improvement is expected.

The numerical studies evaluate S3VS in linear, binary-logistic, and Cox models at dimensions up to $p=20{,}000$. The linear study compares full S3VS, the selection from its first iteration, and one-pass SIS across 13 dependence structures and three signal levels. The logistic and Cox studies focus on correlated-proxy and no-proxy settings to examine when predictor dependence is useful. Across these studies, the benefit of S3VS depends on the dependence structure and the within-set selector rather than being uniform across settings. In the ovarian-cancer application, we use event-stratified five-fold outer validation in TCGA followed by external validation in GSE9891. Under the primary $\lambda_{1\mathrm{se}}$ fit, full-iteration S3VS combined with clinical variables achieved the highest internal C-index (0.640) and 1-year AUC (0.754). SIS--Cox--LASSO combined with clinical variables achieved the highest external C-index (0.672). Neither molecular approach consistently reduced overall prediction error, and no gene was selected in all five outer folds.

The rest of the paper defines the general S3VS algorithm in Section~\ref{sec:general-s3vs}, gives the model-specific implementations in Section~\ref{sec:model-specific}, develops the theoretical results in Section~\ref{sec:asymptotic-analysis}, presents the simulation studies in Section~\ref{sec:simulation}, applies the method to ovarian-cancer survival data in Section~\ref{sec:real-data-analysis}, and concludes in Section~\ref{sec:conclusion}. Detailed proofs are given in the appendix, and additional simulation results are provided in the supplementary material.

\section{General S3VS Framework} 
\label{sec:general-s3vs}

\subsection{Overview} 
\label{sec:s3vs-overview}

Let $\bfX=(\bfx_1,\ldots,\bfx_p)\in\mathbb{R}^{n\times p}$ denote the predictor matrix, where $\bfx_r\in\mathbb{R}^n$ is the $r$th predictor, and let $\bfy$ denote the observed outcome representation for the statistical model under consideration. The representation of $\bfy$ and the model-specific treatment of the outcome are given in Section~\ref{sec:model-specific}. We consider the ultra-high-dimensional setting $p\gg n$ and assume that only a relatively small subset of the predictors is associated with the outcome. The columns of $\bfX$ are required to have unique names, which are used to identify predictors throughout the implementation.

S3VS builds on sure independence screening \citep{fan_lv_2008} and on the structured screen-and-select formulation introduced in GWASinlps \citep{sanyal_et_al_2019}. That work established the leading-variable and leading-set perspective for incorporating local predictor dependence into high-dimensional screening. The present paper develops this perspective into a common and modular framework for linear, generalized linear, and survival models. The top-size screening idea and non-local-prior selection used in the earlier formulation are incorporated as particular components of a broader framework that also supports threshold-based screening, multiple aggregation operators, and model-specific selection procedures. A preliminary arXiv report also considered an iterative screen-and-select construction for accelerated failure time models. Unlike purely marginal screening, S3VS combines predictor--outcome association with dependence among predictors. It first identifies leading variables, forms a local leading set around each leader, applies a selector within each set, combines the resulting decisions, and updates the candidate set and, when specified, the outcome quantity. The model-specific screening score, within-set selector, and outcome update are defined in Section~\ref{sec:model-specific}.

At the beginning of iteration $i$, let $\mathcal{A}^{(i)}\subseteq\{1,\ldots,p\}$ denote the current candidate index set, let $\bfX^{(i)}=\bfX_{\mathcal{A}^{(i)}}\in\mathbb{R}^{n\times|\mathcal{A}^{(i)}|}$ denote the corresponding predictor submatrix, let $\mathcal{M}^{(i)}$ denote the cumulative set of predictors selected before iteration $i$, and let $\bfy^{(i)}$ denote the current outcome quantity. Throughout, $p$ denotes the number of predictors remaining after deterministic preprocessing. A predictor with zero sample variance is removed before initialization and cannot be selected. All scores are oriented so that larger finite values indicate stronger association, and nonfinite scores are ineligible for the corresponding screening or grouping operation.

The algorithm is initialized by \[ \mathcal{A}^{(1)}=\{1,\ldots,p\},\qquad \mathcal{M}^{(1)}=\varnothing,\qquad \bfy^{(1)}=\bfy,\qquad c_1=0, \] where $c_i$ is the cumulative number of iterations before iteration $i$ in which no variables were accepted into the selected set. The counter is not reset after a successful iteration. All sets below contain the original predictor indices, and all submatrices retain the original predictor names and column order.

At each iteration, S3VS performs five steps: it chooses leading variables, constructs leading sets, selects variables within each leading set, combines the resulting decisions, and updates the outcome quantity and candidate predictor set.

\subsection{S3VS Algorithm} \label{sec:s3vs-algorithm}

Each iteration of S3VS consists of the following five steps.

\begin{enumerate}

\item \textbf{Determine leading variables.}

For $r\in\mathcal{A}^{(i)}$, define the generic predictor--outcome screening score 

\begin{equation} 
A_i(r)=A\!\left(\bfx_r,\bfy^{(i)};\mathcal{M}^{(i)}\right). 
\label{eq:generic-screening-score} 
\end{equation} 

The leading variables are selected according to an association measure between each covariate and the current outcome quantity $\bfy^{(i)}$. The specific association measure depends on the statistical model and is described in Section~\ref{sec:model-specific}. A candidate with a nonfinite score, or with a failed model fit used to compute that score, is ineligible for leading-variable selection.

A collection of leading variables is obtained using one of the criteria in Section~\ref{sec:leading-variable-criteria}. The retained indices are ordered by decreasing $A_i(r)$: \[ \mathcal{L}^{(i)}=\{\ell_1^{(i)},\ldots,\ell_{k_i}^{(i)}\}\subseteq\mathcal{A}^{(i)}, \] and the corresponding leading-variable vectors are 

\begin{equation} 
\mathcal{Z}^{(i)}=\{\bfz_1^{(i)},\ldots,\bfz_{k_i}^{(i)}\},\qquad \bfz_j^{(i)}=\bfx_{\ell_j^{(i)}}\in\mathbb{R}^n. 
\label{eq:leading-variable-vectors} 
\end{equation} 

The ordering of $\mathcal{L}^{(i)}$ is retained in all subsequent operations because the conservative aggregation rules depend on leader rank. Ties are resolved by the original predictor-column order. If no finite screening score is available, then $k_i=0$ and the algorithm terminates.

\item \textbf{Determine leading sets.}

For leader $\bfz_j^{(i)}=\bfx_{\ell_j^{(i)}}$, define the generic predictor--leader association score 

\begin{equation} 
B_{ij}(r)=B\!\left(\bfx_r,\bfz_j^{(i)}\right),\qquad r\in\mathcal{A}^{(i)}. 
\label{eq:generic-leading-set-score} 
\end{equation} 

For each $j=1,\ldots,k_i$, a leading set \[ \mathcal{S}_j^{(i)}\subseteq\mathcal{A}^{(i)} \] is constructed using one of the criteria in Section~\ref{sec:leading-set-criteria}. The association function $B$ and the treatment of nonfinite or degenerate pairwise scores are implementation-specific and are stated in Section~\ref{sec:model-specific}. The leader is included explicitly whenever it is a valid member of the current candidate set.

Leading sets are reconstructed at every iteration exclusively from the predictors still contained in $\mathcal{A}^{(i)}$. Variables selected or excluded in earlier iterations are not eligible to appear in later leading sets. The same predictor may occur in multiple leading sets. Such overlaps are retained. Each leading set is analyzed independently even when two leading sets overlap or coincide. The current implementation does not combine, cache, or deduplicate overlapping within-set fits.

The terms \emph{leading variable} and \emph{leading set}, together with the principle of using local predictor dependence during screening, were introduced by \citet{sanyal_et_al_2019}. Here, this principle is developed into a more general iterative framework with multiple leading-variable and leading-set criteria, flexible aggregation choices, and model-specific selection procedures. It decomposes the original ultra-high-dimensional problem into smaller subproblems while preserving local dependence among predictors. At iteration $i$, the within-set stage requires at most $k_i$ independent local fits, with the dimension of the $j$th fit equal to $|\mathcal{S}_j^{(i)}|$. The exact computational cost depends on the model-specific score and selection operator.

\item \textbf{Perform variable selection within leading sets.}

A fully specified selection operator $\mathcal{V}$ is fitted separately within each leading set: 

\begin{equation} 
\mathcal{V}\!\left(\bfy^{(i)},\bfX_{\mathcal{S}_j^{(i)}}\right)=\left(\mathcal{S}_{j,\mathrm{sel}}^{(i)},\mathcal{S}_{j,\mathrm{nosel}}^{(i)}\right),\qquad j=1,\ldots,k_i, 
\label{eq:selection-operator} 
\end{equation} 

where $\mathcal{S}_{j,\mathrm{sel}}^{(i)}\subseteq\mathcal{S}_j^{(i)}$ and 

\begin{equation} 
\mathcal{S}_{j,\mathrm{nosel}}^{(i)}=\mathcal{S}_j^{(i)}\setminus\mathcal{S}_{j,\mathrm{sel}}^{(i)}. 
\label{eq:within-set-nonselection} 
\end{equation} 

The operator may be a penalized-likelihood, Bayesian, or other model-specific selection procedure. Its fitting model, tuning rule, fold assignment, scaling, convergence controls, and coefficient-decision rule are fixed inputs to a run and are specified in Section~\ref{sec:model-specific}.

The fit in \eqref{eq:selection-operator} uses only the current outcome quantity and the predictors in $\mathcal{S}_j^{(i)}$. Previously selected predictors are not included automatically in the within-set fitting model, and local coefficient estimates from previous iterations or other leading sets are neither retained nor used as offsets or starting values. Predictor centering and scaling, when used by a selector, are performed separately within each local fit according to the declared selector settings.

A failed local fit or a nonfinite coefficient or decision statistic is treated as an empty selected set and a full nonselected set. It therefore follows the same aggregation and irreversible-exclusion rules as any other local fit that selects nothing. For reproducibility, the predictor ordering, software versions, and execution order should be fixed or recorded. Parallel runs should also record the random-number stream and task schedule.

\item \textbf{Aggregate the within-set decisions.}

The leading sets are ordered according to the decreasing screening scores of their leaders. The current implementation uses the highest-ranked leading set, $\mathcal{S}_1^{(i)}$, to determine whether iteration $i$ is a selection iteration or an exclusion iteration. Define

\begin{equation} 
D_i=I\!\left\{\mathcal{S}_{1,\mathrm{sel}}^{(i)}\neq\varnothing\right\}. 
\label{eq:iteration-branch-indicator} 
\end{equation} 

Under the aggregation rules defined below, $D_i=1$ is equivalent to the iteration having a nonempty selected aggregate, because the first selected set is nonempty in the selection branch and the aggregate always retains its first-set contribution.

If $D_i=1$, the sets $\{\mathcal{S}_{j,\mathrm{sel}}^{(i)}\}_{j=1}^{k_i}$ are aggregated using the selected-variable rule specified in Section~\ref{sec:aggregation-selected}, yielding $\mathcal{S}_{\mathrm{sel}}^{(i)}$. No variables are excluded in that iteration: 

\begin{equation} 
\mathcal{S}_{\mathrm{nosel}}^{(i)}=\varnothing. 
\label{eq:no-exclusion-selection-iteration} 
\end{equation}

If $D_i=0$, no selected variables from any leading set are accepted during that iteration, including variables selected within lower-ranked leading sets. Instead, the nonselected sets $\{\mathcal{S}_{j,\mathrm{nosel}}^{(i)}\}_{j=1}^{k_i}$ are aggregated using the specified exclusion rule in Section~\ref{sec:aggregation-nonselected}, and 

\begin{equation} 
\mathcal{S}_{\mathrm{sel}}^{(i)}=\varnothing. 
\label{eq:no-selection-exclusion-iteration} 
\end{equation}

In particular, if every within-set selector returns the empty set, then $D_i=0$, all variables in each leading set are nonselected, and the selected set is empty. The chosen exclusion rule is then applied exactly as written. If it also returns the empty set, no candidate is removed, but the unsuccessful-iteration counter still increases. Thus, selection and exclusion are mutually exclusive within an iteration. There is no concurrent selection--exclusion conflict requiring a secondary conflict-resolution rule.

\item \textbf{Update the outcome quantity and algorithmic state.}

The cumulative selected set is updated by 

\begin{equation} 
\mathcal{M}^{(i+1)}=\mathcal{M}^{(i)}\cup\mathcal{S}_{\mathrm{sel}}^{(i)}. 
\label{eq:cumulative-selected-update} 
\end{equation} 

The candidate index set is updated by 

\begin{equation} 
\mathcal{A}^{(i+1)}=\mathcal{A}^{(i)}\setminus\left(\mathcal{S}_{\mathrm{sel}}^{(i)}\cup\mathcal{S}_{\mathrm{nosel}}^{(i)}\right),\qquad \bfX^{(i+1)}=\bfX_{\mathcal{A}^{(i+1)}}. 
\label{eq:candidate-update} 
\end{equation} 

Consequently, both selected and excluded predictors are permanently removed from the candidate set. The current implementation has no re-entry mechanism: a predictor excluded at one iteration cannot be reconsidered at a later iteration.

The unsuccessful-iteration counter satisfies 

\begin{equation} 
c_{i+1}=c_i+I(D_i=0). 
\label{eq:no-selection-counter} 
\end{equation} 

The counter records the cumulative, rather than consecutive, number of unsuccessful iterations and is not reset after a successful iteration.

Define the iteration-specific action set by 

\begin{equation} 
\mathcal{Q}^{(i)}=\begin{cases}\mathcal{S}_{\mathrm{sel}}^{(i)},&D_i=1,\\\mathcal{S}_{\mathrm{nosel}}^{(i)},&D_i=0.\end{cases} 
\label{eq:iteration-action-set} 
\end{equation} 

The generic outcome update is 

\begin{equation} 
\bfy^{(i+1)}=\mathcal{U}\!\left(\bfy^{(i)},\mathcal{S}_{\mathrm{sel}}^{(i)},\mathcal{S}_{\mathrm{nosel}}^{(i)},D_i\right). 
\label{eq:generic-update} 
\end{equation} 

The operator $\mathcal{U}$ may preserve or transform the current outcome representation and is specified for each supported model in Section~\ref{sec:model-specific}. No model-specific outcome update is implied by the general framework.

\end{enumerate}

\subsection{Leading Variable Criteria} \label{sec:leading-variable-criteria}

S3VS allows three rules for constructing the leading-index set $\mathcal{L}^{(i)}$. For compactness, let \( \operatorname{Top}_q\{c_r:r\in\mathcal{I}\} \) denote the indices of the $\min(q,N_{\mathcal I})$ largest finite values among $\{c_r:r\in\mathcal I\}$, ordered by decreasing value with ties resolved by original predictor-column order, where $N_{\mathcal I}=|\{r\in\mathcal I:c_r\in\mathbb{R}\}|$. If $N_{\mathcal I}=0$, the result is empty. The resulting leading indices are always ordered by decreasing screening score.

\subsubsection{Top-$k$ Criterion}

For a prespecified positive integer $k$, define 

\begin{equation} 
\mathcal{L}^{(i)}=\operatorname{Top}_{k}\left\{A_i(r):r\in\mathcal{A}^{(i)}\right\}. 
\label{eq:topk_leading} 
\end{equation} 

Thus, $k_i=\min\{k,N_{\mathcal A^{(i)}}\}$, where $N_{\mathcal A^{(i)}}$ is the number of finite screening scores in the current candidate set. This rule generalizes the ranking step in SIS \citep{fan_lv_2008}. The value of $k$ controls the maximum number of leaders retained at each iteration.

\subsubsection{Fixed-Threshold Criterion}

For a prespecified threshold $\tau_A$, define 

\begin{equation} 
\mathcal{L}^{(i)}=\left\{r\in\mathcal{A}^{(i)}:A_i(r)\geq\tau_A\right\}. 
\label{eq:fixed_threshold_leading} 
\end{equation} 

The retained indices are ordered by decreasing $A_i(r)$, and $k_i=|\mathcal{L}^{(i)}|$. If the set in \eqref{eq:fixed_threshold_leading} is empty, the S3VS procedure terminates.

\subsubsection{Relative-to-Maximum Criterion}

Let $A_{\max}^{(i)}=\max\{A_i(r):r\in\mathcal A^{(i)},\,A_i(r)\in\mathbb R\}$ when this set is nonempty. The software accepts a percentage parameter $\pi_A\in(0,100]$ and sets $\alpha_A=\pi_A/100$. The relative-to-maximum rule is 

\begin{equation} 
\mathcal{L}^{(i)}=\left\{r\in\mathcal{A}^{(i)}:A_i(r)\geq\alpha_A A_{\max}^{(i)}\right\}. 
\label{eq:percentile_leading} 
\end{equation} 

The retained indices are ordered by decreasing $A_i(r)$, and $k_i=|\mathcal{L}^{(i)}|$. This is a fraction-of-maximum rule, not an empirical percentile rule. If no finite score exists or no predictor satisfies the relative threshold, the algorithm terminates.

\subsection{Leading Set Criteria} \label{sec:leading-set-criteria}

For each $\bfz_j^{(i)}=\bfx_{\ell_j^{(i)}}\in\mathcal{Z}^{(i)}$, S3VS forms $\mathcal{S}_j^{(i)}$ from the scores $B_{ij}(r)$. The generic criteria below apply to any valid nonnegative association score for which larger values indicate stronger predictor--leader dependence. Nonfinite scores are ignored, and the leader is included explicitly whenever it is valid.

\subsubsection{Top-$q$ Criterion}

For a prespecified leading-set size $q\geq1$, define 

\begin{equation} 
\mathcal{S}_j^{(i)}=\left\{\ell_j^{(i)}\right\}\cup\operatorname{Top}_{q-1}\left\{B_{ij}(r):r\in\mathcal{A}^{(i)}\setminus\{\ell_j^{(i)}\}\right\},\qquad j=1,\ldots,k_i. 
\label{eq:topk-leadingset} 
\end{equation} 

Thus, $|\mathcal{S}_j^{(i)}|=1+\min\{q-1,N_{\mathcal A^{(i)}\setminus\{\ell_j^{(i)}\}}\}$, where the second term counts only finite association scores. This rule includes the leader and directly controls the maximum dimension of each within-set selection problem.

\subsubsection{Fixed-Threshold Criterion}

For a prespecified threshold $\tau_B$, define 

\begin{equation} 
\mathcal{S}_j^{(i)}=\left\{\ell_j^{(i)}\right\}\cup\left\{r\in\mathcal{A}^{(i)}:B_{ij}(r)\geq\tau_B\right\}. 
\label{eq:fixed-leadingset} 
\end{equation} 

The leading-set size therefore adapts to the observed predictor--leader association scores.

\subsubsection{Relative-to-Maximum Criterion}

Let $B_{\max,ij}=\max\{B_{ij}(r):r\in\mathcal A^{(i)},\,B_{ij}(r)\in\mathbb R\}$ when this set is nonempty. The software accepts a percentage parameter $\pi_B\in(0,100]$ and sets $\alpha_B=\pi_B/100$. The leading set is 

\begin{equation} 
\mathcal{S}_j^{(i)}=\left\{\ell_j^{(i)}\right\}\cup\left\{r\in\mathcal{A}^{(i)}:B_{ij}(r)\geq\alpha_B B_{\max,ij}\right\}. 
\label{eq:percentile-leadingset} 
\end{equation} 

The size of $\mathcal{S}_j^{(i)}$ depends on the observed predictor--leader association scores. This is a fraction-of-maximum rule, not an empirical percentile rule. The association score used in the implementations considered here, and the resulting equivalence between this rule and a fixed threshold when applicable, are stated in Section~\ref{sec:model-specific}.

\subsection{Aggregation Strategies} \label{sec:aggregation}

A predictor may occur in several leading sets and receive different within-set selection decisions. Aggregation reconciles these decisions. Let \( \left\{\mathcal{S}_{j,\mathrm{sel}}^{(i)}\right\}_{j=1}^{k_i} \) and \( \left\{\mathcal{S}_{j,\mathrm{nosel}}^{(i)}\right\}_{j=1}^{k_i} \) denote the ordered selected and nonselected sets at iteration $i$.

Aggregation is a central modular component of S3VS. The earlier GWASinlps formulation used a single aggregation rule. S3VS distinguishes the aggregation of selected variables from the aggregation of nonselected variables and provides multiple options for each. This makes the balance between variable recovery and candidate-space reduction explicit and allows the aggregation behavior to be selected according to the statistical objective.

\subsubsection{Aggregation of Selected Variables} \label{sec:aggregation-selected}

The selected-variable aggregation rule is applied only when \( \mathcal{S}_{1,\mathrm{sel}}^{(i)}\neq\varnothing. \)

\paragraph{Conservative Selection.}

Starting with the highest-ranked leading set, define the successive prefix intersections \[ \mathcal{I}_{m,\mathrm{sel}}^{(i)}=\bigcap_{j=1}^{m}\mathcal{S}_{j,\mathrm{sel}}^{(i)},\qquad m=1,\ldots,k_i. \] Because $\mathcal{S}_{1,\mathrm{sel}}^{(i)}\neq\varnothing$, define 

\begin{equation} 
m_{\mathrm{sel}}^{(i)}=\max\left\{m\in\{1,\ldots,k_i\}:\mathcal{I}_{m,\mathrm{sel}}^{(i)}\neq\varnothing\right\}. 
\label{eq:conservative-selection-prefix} 
\end{equation} 

The conservative aggregate is 

\begin{equation} 
\mathcal{S}_{\mathrm{sel}}^{(i)}=\mathcal{I}_{m_{\mathrm{sel}}^{(i)},\mathrm{sel}}^{(i)}=\bigcap_{j=1}^{m_{\mathrm{sel}}^{(i)}}\mathcal{S}_{j,\mathrm{sel}}^{(i)}. 
\label{eq:cons-select} 
\end{equation} 

Operationally, the implementation starts with the selected set from the highest-ranked leader, successively intersects it with lower-ranked selected sets, and returns the last nonempty prefix intersection. Once a prefix intersection becomes empty, no lower-ranked sets are considered further under this rule. If $k_i=1$, conservative and liberal selected-variable aggregation are identical. If selected sets are disjoint after the first leading set, conservative selection returns the selected set from the highest-ranked leader, whereas liberal selection returns their union. These labels describe only the set operations and their dependence on leader order. They do not imply an ordering of model size, false-discovery proportion, or formal error control.

\paragraph{Liberal Selection.}

The liberal rule retains every predictor selected in at least one leading set: 

\begin{equation} 
\mathcal{S}_{\mathrm{sel}}^{(i)}=\bigcup_{j=1}^{k_i}\mathcal{S}_{j,\mathrm{sel}}^{(i)}. 
\label{eq:lib-select} 
\end{equation}

\subsubsection{Aggregation of Nonselected Variables} \label{sec:aggregation-nonselected}

The nonselected-variable aggregation rule is applied only when \( \mathcal{S}_{1,\mathrm{sel}}^{(i)}=\varnothing. \) In this branch, all selected-variable decisions are discarded and $\mathcal{S}_{\mathrm{sel}}^{(i)}=\varnothing$.

\paragraph{Conservative Exclusion from the Beginning.}

Define the prefix intersections \[ \mathcal{I}_{m,\mathrm{beg}}^{(i)}=\bigcap_{j=1}^{m}\mathcal{S}_{j,\mathrm{nosel}}^{(i)},\qquad m=1,\ldots,k_i. \] If $\mathcal{S}_{1,\mathrm{nosel}}^{(i)}=\varnothing$, set $\mathcal{S}_{\mathrm{nosel}}^{(i)}=\varnothing$. Otherwise, define 

\begin{equation} 
m_{\mathrm{beg}}^{(i)}=\max\left\{m\in\{1,\ldots,k_i\}:\mathcal{I}_{m,\mathrm{beg}}^{(i)}\neq\varnothing\right\} 
\label{eq:conservative-exclusion-begin-index} 
\end{equation} 

and set 

\begin{equation} 
\mathcal{S}_{\mathrm{nosel}}^{(i)}=\mathcal{I}_{m_{\mathrm{beg}}^{(i)},\mathrm{beg}}^{(i)}=\bigcap_{j=1}^{m_{\mathrm{beg}}^{(i)}}\mathcal{S}_{j,\mathrm{nosel}}^{(i)}. 
\label{eq:cons-excl-begin} 
\end{equation} 

Thus, the algorithm returns the last nonempty intersection obtained while proceeding from the highest-ranked leading set toward lower-ranked sets.

\paragraph{Conservative Exclusion from the End.}

Let 

\begin{equation} 
j_{\mathrm{end}}^{(i)}=\max\left\{j\in\{1,\ldots,k_i\}:\mathcal{S}_{j,\mathrm{nosel}}^{(i)}\neq\varnothing\right\}, 
\label{eq:rightmost-nonempty-nonselection} 
\end{equation} 

provided that at least one nonselected set is nonempty. Trailing empty nonselected sets with indices greater than $j_{\mathrm{end}}^{(i)}$ are ignored. For $r=1,\ldots,j_{\mathrm{end}}^{(i)}$, define the suffix intersection \[ \mathcal{I}_{r,\mathrm{end}}^{(i)}=\bigcap_{j=r}^{j_{\mathrm{end}}^{(i)}}\mathcal{S}_{j,\mathrm{nosel}}^{(i)}. \] Let 

\begin{equation} 
r_{\mathrm{end}}^{(i)}=\min\left\{r\in\{1,\ldots,j_{\mathrm{end}}^{(i)}\}:\mathcal{I}_{r,\mathrm{end}}^{(i)}\neq\varnothing\right\}, 
\label{eq:conservative-exclusion-end-index} 
\end{equation} 

and set 

\begin{equation} 
\mathcal{S}_{\mathrm{nosel}}^{(i)}=\mathcal{I}_{r_{\mathrm{end}}^{(i)},\mathrm{end}}^{(i)}=\bigcap_{j=r_{\mathrm{end}}^{(i)}}^{j_{\mathrm{end}}^{(i)}}\mathcal{S}_{j,\mathrm{nosel}}^{(i)}. 
\label{eq:cons-excl-end} 
\end{equation} 

If every nonselected set is empty, then $\mathcal{S}_{\mathrm{nosel}}^{(i)}=\varnothing$. If $k_i=1$, the two conservative exclusion rules and liberal exclusion are identical.

\paragraph{Liberal Exclusion.}

The liberal rule excludes every predictor that is nonselected in at least one leading set: 

\begin{equation} 
\mathcal{S}_{\mathrm{nosel}}^{(i)}=\bigcup_{j=1}^{k_i}\mathcal{S}_{j,\mathrm{nosel}}^{(i)}. 
\label{eq:lib-excl} 
\end{equation}

Because selection and exclusion occur in separate branches, the implementation does not subsequently subtract $\mathcal{S}_{\mathrm{sel}}^{(i)}$ from $\mathcal{S}_{\mathrm{nosel}}^{(i)}$. Conservative aggregation emphasizes agreement among ranked leading sets, whereas liberal aggregation favors broader selection or more rapid reduction of the candidate space. These rules are generally order dependent and do not, by themselves, provide formal false-discovery-rate control or protect an active variable from exclusion.

\subsection{Stopping Rule and Returned Model} \label{sec:stopping-and-output}

Let $m_{\max}\geq1$ denote the prespecified selected-model-size limit and let $n_{\mathrm{skip}}\geq1$ denote the permitted cumulative number of unsuccessful iterations. Before beginning iteration $i$, S3VS evaluates 

\begin{equation} 
|\mathcal{M}^{(i)}|<m_{\max},\qquad |\mathcal{A}^{(i)}|>0,\qquad c_i<n_{\mathrm{skip}}. 
\label{eq:loop-conditions} 
\end{equation} 

The algorithm continues only while all three conditions hold. It also terminates inside an iteration if no leading variable is returned. Let $i_{\mathrm{stop}}$ denote the iteration at which the procedure stops. If the loop condition fails before iteration $i$, then $i_{\mathrm{stop}}=i$. If iteration $i$ terminates because $k_i=0$, then $i_{\mathrm{stop}}=i$. Otherwise, after a completed iteration, $i$ is incremented, and the first subsequent failed loop condition defines $i_{\mathrm{stop}}$.

Because the selected-model-size condition is evaluated before an iteration begins, an iteration may add several variables and produce \( |\mathcal{M}^{(i+1)}|>m_{\max}. \) Thus, $m_{\max}$ is a pre-iteration stopping limit rather than a hard truncation of the returned model. For example, $m_{\max}=100$ means that an iteration is started only while fewer than 100 variables have already been selected. An iteration can therefore take the total above 100. Likewise, $n_{\mathrm{skip}}=3$ means that the procedure stops after three cumulative unsuccessful iterations, whether or not they are consecutive. All structural choices, aggregation rules, update flags, and tuning values are fixed inputs to a run. S3VS does not select them using the unknown active set. If they are chosen from the data, that choice must be made within the corresponding training sample and included in the stated validation procedure.

The principal S3VS output is the selected-variable set 

\begin{equation} 
\widehat{\mathcal{M}}=\mathcal{M}^{(i_{\mathrm{stop}})}. 
\label{eq:final-selected-model} 
\end{equation} 

The implementation additionally returns the variables selected at each successful iteration and the elapsed runtime. It does not retain the coefficient estimates from the within-set fits or from temporary response-update fits, and it does not automatically return a final joint regression model.

For estimation or prediction after S3VS, a separate post-selection model is fitted jointly using $\bfX_{\widehat{\mathcal{M}}}$. The post-selection fitting method and its tuning rule must therefore be specified separately from the S3VS screening procedure. S3VS itself returns no coefficient vector. If coefficient estimates are required, a prespecified model is fitted jointly to $\bfy$ and $\bfX_{\widehat{\mathcal{M}}}$ after selection. Coefficients obtained from different S3VS iterations are not accumulated to form that post-selection fit.

The exclusion decision is irreversible under the current implementation. Therefore, there is no separate mechanism that protects a truly active predictor from permanent removal after it enters $\mathcal{S}_{\mathrm{nosel}}^{(i)}$. The exclusion rule therefore determines how aggressively predictors are removed from further consideration.

The complete S3VS procedure is summarized in Algorithm~\ref{alg:S3VS}. The S3VS methodology is implemented in the \textsf{R} package \texttt{S3VS} \citep{S3VS}, available from CRAN. The package provides implementations for linear, generalized linear, and survival models, including Cox proportional hazards and accelerated failure time models.

\begin{algorithm}
\caption{Structured Screen-and-Select Variable Selection (S3VS)} 
\label{alg:S3VS} \begin{algorithmic}[1]
\Require Outcome $\bfy$, predictor matrix $\bfX$, model family, leading-variable criterion and parameters, leading-set criterion and parameters, within-set selector $\mathcal{V}$ and its tuning parameters, selected-variable aggregation rule, nonselected-variable aggregation rule, update options, $m_{\max}$, $n_{\mathrm{skip}}$, and optional random-number seed
\State Set $\mathcal{A}^{(1)}\gets\{1,\ldots,p\}$, $\mathcal{M}^{(1)}\gets\varnothing$, $\bfy^{(1)}\gets\bfy$, $c_1\gets0$, $i\gets1$, and $\mathsf{terminate}\gets\mathsf{false}$
\While{$|\mathcal{M}^{(i)}|<m_{\max}$, $|\mathcal{A}^{(i)}|>0$, $c_i<n_{\mathrm{skip}}$, and $\mathsf{terminate}=\mathsf{false}$}
    \State Compute $\{A_i(r):r\in\mathcal{A}^{(i)}\}$
    \State Construct the ordered leading-index set $\mathcal{L}^{(i)}=\{\ell_1^{(i)},\ldots,\ell_{k_i}^{(i)}\}$
    \If{$k_i=0$}
        \State Set $\mathsf{terminate}\gets\mathsf{true}$
    \Else
        \For{$j=1,\ldots,k_i$}
            \State Set $\bfz_j^{(i)}\gets\bfx_{\ell_j^{(i)}}$
            \State Construct $\mathcal{S}_j^{(i)}\subseteq\mathcal{A}^{(i)}$ from $\{B_{ij}(r):r\in\mathcal{A}^{(i)}\}$
            \State Fit $\mathcal{V}$ independently to $\left(\bfy^{(i)},\bfX_{\mathcal{S}_j^{(i)}}\right)$
            \State Obtain $\left(\mathcal{S}_{j,\mathrm{sel}}^{(i)},\mathcal{S}_{j,\mathrm{nosel}}^{(i)}\right)$
        \EndFor
        \State Set $D_i\gets I\{\mathcal{S}_{1,\mathrm{sel}}^{(i)}\neq\varnothing\}$
        \If{$D_i=1$}
            \State Aggregate $\{\mathcal{S}_{j,\mathrm{sel}}^{(i)}\}_{j=1}^{k_i}$ to obtain $\mathcal{S}_{\mathrm{sel}}^{(i)}$
            \State Set $\mathcal{S}_{\mathrm{nosel}}^{(i)}\gets\varnothing$ and $c_{i+1}\gets c_i$
        \Else
            \State Set $\mathcal{S}_{\mathrm{sel}}^{(i)}\gets\varnothing$
            \State Aggregate $\{\mathcal{S}_{j,\mathrm{nosel}}^{(i)}\}_{j=1}^{k_i}$ to obtain $\mathcal{S}_{\mathrm{nosel}}^{(i)}$
            \State Set $c_{i+1}\gets c_i+1$
        \EndIf
        \State Set $\mathcal{M}^{(i+1)}\gets\mathcal{M}^{(i)}\cup\mathcal{S}_{\mathrm{sel}}^{(i)}$
        \State Set $\bfy^{(i+1)}\gets\mathcal{U}\!\left(\bfy^{(i)},\mathcal{S}_{\mathrm{sel}}^{(i)},\mathcal{S}_{\mathrm{nosel}}^{(i)},D_i\right)$ according to the model-specific update options
        \State Set $\mathcal{A}^{(i+1)}\gets\mathcal{A}^{(i)}\setminus\left(\mathcal{S}_{\mathrm{sel}}^{(i)}\cup\mathcal{S}_{\mathrm{nosel}}^{(i)}\right)$
        \State Set $i\gets i+1$
    \EndIf
\EndWhile
\State Set $i_{\mathrm{stop}}\gets i$
\State \Return the cumulative selected set $\widehat{\mathcal{M}}=\mathcal{M}^{(i_{\mathrm{stop}})}$, the iteration-specific selection history, and the runtime
\end{algorithmic} \end{algorithm}

\section{Model-Specific Implementations}
\label{sec:model-specific}

Having defined the common S3VS algorithm, we now specify the components that depend on the statistical model. Three components are model-specific: the predictor--outcome screening score $A_i(r)$, the within-leading-set selector $\mathcal{V}$, and the outcome-update operator $\mathcal{U}$. Throughout this section, $\bfX_{l\cdot}$ denotes the $l$th row of $\bfX$, and $\bfbeta$ denotes the coefficient vector for the model under consideration.

For all implementations considered here, the predictor--leader association is the absolute sample Pearson correlation 

\begin{equation} 
B_{ij}(r)=\left|R_{\bfx_r,\bfz_j^{(i)}}\right|=\left|\operatorname{Cor}\!\left(\bfx_r,\bfz_j^{(i)}\right)\right|,\qquad r\in\mathcal{A}^{(i)},\quad j=1,\ldots,k_i, 
\label{eq:leading-set-correlation} 
\end{equation} 

where $R_{\bfa,\bfb}$ denotes the sample Pearson correlation between vectors $\bfa$ and $\bfb$. Because the leader is in the current candidate set and has $B_{ij}(\ell_j^{(i)})=1$ when it is nonconstant, the relative-to-maximum leading-set rule has $B_{\max,ij}=1$ and is identical to the fixed-threshold rule with $\tau_B=\alpha_B$. It is therefore not a distinct configuration when absolute Pearson correlation is used.

\subsection{Shared Within-Set Fitting Conventions}
The selection operator $\mathcal{V}$ is fitted independently for every leading set using only the current outcome quantity and the predictors in that set. For LASSO and elastic-net selection, a prespecified mixing parameter $\alpha\in[0,1]$ is used, with $\alpha=1$ giving LASSO. Unless an analysis specifies otherwise, predictors are standardized within each local fit, and the penalty parameter is selected by minimizing the cross-validated loss. Alternatively, the one-standard-error rule may be used to choose a more parsimonious penalty. For SCAD and MCP selection, the penalty is selected using the one-standard-error rule, which chooses the most parsimonious penalty whose cross-validated loss is no more than one standard error above the minimum. For linear and binary-outcome fits, an intercept is included and is not penalized. Cox fits have no separate intercept. All centering and scaling are performed separately within each local fit. Cross-validation folds are generated separately for each local fit unless an explicit fold assignment is supplied. Unless otherwise specified, ten-fold cross-validation is used.
\subsection{Linear Models}

Consider the linear regression model 

\begin{equation} 
\bfy=\beta_0\bfone_n+\bfX\bfbeta+\bfvarepsilon,\qquad \mathrm{E}(\bfvarepsilon)=\bfzero,\qquad \mathrm{Var}(\bfvarepsilon)=\sigma^2\bfI_n. 
\label{eq:lm-implementation} 
\end{equation} 

At iteration $i$, the predictor--outcome screening score is 

\begin{equation} 
A_i(r)=\left|R_{\bfx_r,\bfy^{(i)}}\right|,\qquad r\in\mathcal{A}^{(i)}. 
\label{eq:lm-scores} 
\end{equation} 

Leading sets use the common predictor--leader score in \eqref{eq:leading-set-correlation}.

The selection operator $\mathcal{V}$ is fitted independently within each leading set using only the current outcome representation and the predictors in that set. For LASSO and elastic-net selection, a prespecified mixing parameter $\alpha\in[0,1]$ is used, with $\alpha=1$ corresponding to LASSO. Centering and scaling are performed within each local fit, with predictor standardization used unless otherwise specified. The penalty parameter is selected by minimizing cross-validated loss unless the one-standard-error rule is specified, in which case the most parsimonious penalty whose loss is no more than one standard error above the minimum is chosen. For SCAD and MCP, the one-standard-error rule is used. Linear and binary-outcome fits include an unpenalized intercept, whereas Cox fits have no separate intercept. Cross-validation folds are generated independently for each local fit unless explicit fold assignments are supplied, and ten-fold cross-validation is used by default. Analyses using explicit fold assignments should record the fold labels. 

For a completed iteration, let $\mathcal{Q}^{(i)}$ be the action set defined in \eqref{eq:iteration-action-set}. The response update is obtained by fitting an ordinary least-squares model with an intercept to the current outcome quantity and the action set: 

\begin{equation} 
\left(\widehat{\beta}_0^{(i)},\widehat{\bfbeta}_{\mathcal{Q}^{(i)}}^{(i)}\right)=\arg\min_{\beta_0,\bfbeta}\left\|\bfy^{(i)}-\beta_0\bfone_n-\bfX_{\mathcal{Q}^{(i)}}\bfbeta\right\|_2^2. 
\label{eq:linear-update-fit} 
\end{equation} 

The updated outcome is the residual vector 

\begin{equation} 
\bfy^{(i+1)}=\bfy^{(i)}-\widehat{\beta}_0^{(i)}\bfone_n-\bfX_{\mathcal{Q}^{(i)}}\widehat{\bfbeta}_{\mathcal{Q}^{(i)}}^{(i)}. 
\label{eq:update-lm} 
\end{equation} 

If $D_i=0$, the update based on the excluded set is performed only when the optional regression-out step is enabled. Otherwise, $\bfy^{(i+1)}=\bfy^{(i)}$. This update regresses the current outcome quantity only on the variables selected or excluded at iteration $i$ and does not jointly refit all predictors in $\mathcal{M}^{(i+1)}$.

\subsection{Generalized Linear Models}

Suppose that $Y_l$ follows an exponential-family distribution with conditional mean $\mu_l=\mathrm{E}(Y_l\mid\bfX_{l\cdot})$ and link function $g$, so that 

\begin{equation} 
g(\mu_l)=\beta_0+\bfX_{l\cdot}\bfbeta,\qquad l=1,\ldots,n. 
\label{eq:glm-implementation} 
\end{equation} 

For the binary implementation, the predictor--outcome screening score is the squared point-biserial correlation between the current binary outcome and predictor $r$:

\begin{equation}
A_i(r)=\widehat{\operatorname{Cor}}\!\left(\bfy^{(i)},\bfx_r\right)^2,\qquad r\in\mathcal{A}^{(i)}.
\label{eq:glm-screening-score}
\end{equation}

This scale-invariant score measures the strength of the marginal linear association. It is used only for screening and is not obtained by fitting a separate marginal logistic model. Under an unweighted logistic null model containing an intercept, the squared point-biserial correlation is proportional to the marginal logistic score statistic for testing the addition of predictor $r$. Thus, the implemented score provides the same screening ranking as a marginal logistic score test while avoiding separate marginal logistic fits. Leading sets use the common predictor--leader score in \eqref{eq:leading-set-correlation}.

Within each leading set, the binary implementation supports LASSO, elastic-net, SCAD, MCP, and nonlocal-prior (NLP) selection. LASSO and elastic-net use cross-validated penalty selection, whereas SCAD and MCP use the one-standard-error tuning rule. NLP selection is available only when a response-specific Bayesian procedure, prior specification, and decision rule have been defined for the binary outcome. For a one-predictor leading set, the LASSO, elastic-net, SCAD, and MCP branches use an unpenalized logistic regression and retain the predictor when the two-sided test of its coefficient has a $p$-value less than $0.01$.

For binary outcomes, response updating is optional. If $D_i=1$, the update is performed only when the optional selected-set regression-out step is enabled. If $D_i=0$, the update is performed only when the optional excluded-set regression-out step is enabled. Whenever an update is performed, a logistic regression with an intercept is fitted using the current binary outcome and the iteration-specific action set:

\begin{equation} 
\operatorname{logit}\!\left\{\widehat{\mu}_l^{(i)}\right\}=\widehat{\beta}_0^{(i)}+\bfX_{l,\mathcal{Q}^{(i)}}\widehat{\bfbeta}_{\mathcal{Q}^{(i)}}^{(i)},\qquad l=1,\ldots,n. 
\label{eq:binary-update-fit} 
\end{equation} 

For a prespecified threshold $h\in[0,1]$, the updated binary outcome is 

\begin{equation} 
Y_l^{(i+1)}=\begin{cases}Y_l^{(i)},&\left|Y_l^{(i)}-\widehat{\mu}_l^{(i)}\right|>h,\\\operatorname{round}\!\left(\widehat{\mu}_l^{(i)}\right),&\left|Y_l^{(i)}-\widehat{\mu}_l^{(i)}\right|\leq h,\end{cases}\qquad l=1,\ldots,n. 
\label{eq:glm-update} 
\end{equation} 

The simulations use $h=0.5$. With a binary response, this value leaves the response unchanged except in the theoretically possible exact-tie case $\widehat{\mu}_l^{(i)}=0.5$ under the rounding convention used by \textsf{R}. Consequently, the implemented logistic procedure retains a valid binary response across iterations while the candidate predictor set changes. If the relevant updating option is false, then $\bfy^{(i+1)}=\bfy^{(i)}$. The logistic update contains only the current action set and is not a joint refit of the cumulative selected set. Intercepts and action-set coefficients are re-estimated at every invoked update, and no fitted coefficients or working weights are carried into the next iteration.

The general GLM notation permits other exponential-family responses only after a response-specific screening score, within-set selector, and update rule have been declared. The simulation studies in this paper focus on binary logistic regression.

\subsection{Survival Models}

For subject $l$, let $T_l$ and $C_l$ denote the event and censoring times, respectively, and define the observed time and event indicator by $Y_l=\min(T_l,C_l)$ and $\delta_l=I(T_l\leq C_l)$, $l=1,\ldots,n$. Thus, for survival outcomes, $\bfy=(\bft,\bfd)$ with $\bft=(Y_1,\ldots,Y_n)^T$ and $\bfd=(\delta_1,\ldots,\delta_n)^T$. The observed times and event indicators are retained as the outcome representation throughout S3VS, while the model-specific screening score may condition on the cumulative selected set $\mathcal{M}^{(i)}$.

\subsubsection{Accelerated Failure Time Models}

The accelerated failure time (AFT) model assumes 

\begin{equation} 
\log(T_l)=\mu+\bfX_{l\cdot}\bfbeta+\varepsilon_l,\qquad l=1,\ldots,n, 
\label{eq:aft-implementation} 
\end{equation} 

where $\mu$ is an intercept and $\varepsilon_l$ is an error term. The first S3VS iteration uses a marginal utility score $\mathrm{MU}_r$, whereas subsequent iterations use a conditional utility score $\mathrm{CU}_r^{(i)}$ that measures the additional contribution of predictor $r$ after adjustment for the cumulative selected set. This iterative construction was considered in preliminary form for accelerated failure time models in \citet{sanyal_2023}. Thus, 

\begin{equation} 
A_i(r)=\begin{cases}\mathrm{MU}_r,&i=1,\\\mathrm{CU}_r^{(i)},&i\geq2,\end{cases}\qquad r\in\mathcal{A}^{(i)}. 
\label{eq:aft-screening-score} 
\end{equation} 

The AFT screening utility is based on the maximized likelihood under a Weibull accelerated failure-time model, with observed events contributing their event likelihoods and right-censored observations contributing their survival probabilities. At the first iteration, each candidate predictor is evaluated marginally. At later iterations, each candidate is evaluated conditionally after adjustment for the cumulative selected predictors $\mathcal{M}^{(i)}$. Thus, $\mathrm{MU}_r$ and $\mathrm{CU}_r^{(i)}$ are marginal and conditional log-likelihood-based utility scores, respectively. Leading sets use the common predictor--leader score in \eqref{eq:leading-set-correlation}. Within each leading set, variable selection may be performed using bridge penalization \citep{huang_ma_2010}, AFTGEE \citep{aftgee}, PVAFT \citep{afthd}, or another explicitly declared AFT procedure, with its tuning and decision rule fixed before fitting. After aggregation, the survival outcome representation remains unchanged, and the cumulative selected set is used in the conditional-utility calculation at the next iteration.

\subsubsection{Cox Proportional Hazards Models}

The Cox proportional hazards model specifies 

\begin{equation} 
\lambda(t\mid\bfX_{l\cdot})=\lambda_0(t)\exp\!\left(\bfX_{l\cdot}\bfbeta\right), 
\label{eq:cox-implementation} 
\end{equation} 

where $\lambda_0(t)$ is the baseline hazard function. For candidate predictor $r$, let $\mathcal{L}_r(\beta_r)$ denote the marginal Cox partial likelihood, let $\widehat{\beta}_r$ be its maximizer, and define the marginal deviance $D_r^{\mathrm{marg}}=-2\log\mathcal{L}_r(\widehat{\beta}_r)$. If $D_0$ denotes the null-model deviance, the first-iteration screening score is the marginal deviance reduction 

\begin{equation} 
A_1(r)=D_0-D_r^{\mathrm{marg}}. 
\label{eq:cox-marginal-score} 
\end{equation} 

For iteration $i\geq2$, let $D_{\mathrm{base}}^{(i)}$ denote the deviance of the Cox model containing the cumulative selected predictors $\mathcal{M}^{(i)}$ and let $D_{\mathrm{new},r}^{(i)}$ denote the deviance after adding candidate predictor $r$. The conditional screening score is 

\begin{equation} 
A_i(r)=\mathrm{CU}_r^{(i)}=D_{\mathrm{base}}^{(i)}-D_{\mathrm{new},r}^{(i)}. 
\label{eq:cox-conditional-score} 
\end{equation} 

Thus, larger values of $A_i(r)$ indicate greater incremental improvement in model fit. All deviances used in a comparison are computed from the corresponding Cox partial-likelihood fits on the same survival data, and a failed or nonfinite fit yields an ineligible score.

Within each leading set, the Cox implementation uses the declared Cox-model selection procedure. For Cox LASSO or elastic-net selection, the local fit has no separate intercept and follows the shared cross-validation and scaling conventions, with the penalty choice recorded for the analysis. If a Cox leading set contains only one predictor, the predictor is retained when its Cox-model Wald test has $p$-value less than $0.01$. The survival outcome representation is not modified after aggregation: 

\begin{equation} 
\bfy^{(i+1)}=\bfy^{(i)}=(\bft,\bfd). 
\label{eq:update-survival} 
\end{equation} 

Instead, $\mathcal{M}^{(i+1)}$ enters the subsequent conditional screening calculations. Cumulative selected variables are not automatically included in each within-leading-set selection fit. The settings used for the ovarian-cancer analysis and its SIS--Cox--LASSO comparator are given in Section~\ref{sec:real-data-analysis}.

\section{Asymptotic Analysis}
\label{sec:asymptotic-analysis}

We now study a simplified one-step linear S3VS operator to isolate the role of proxy variables. The analysis covers top-$d_L$ leading-variable screening, top-$m_n$ absolute-correlation leading sets, a within-set selector with a stated conditional sure-screening property, and an aggregation rule that preserves active variables selected in relevant leading sets. The theory does not cover later iterations, outcome updates, data-dependent stopping, irreversible exclusion, cross-validated tuning, conservative aggregation, or the GLM, AFT, and Cox implementations. 

To state this one-step result precisely, we first specify the model, the screening score, and the one-step operator.
We work at the first iteration, so no selection-history conditioning is used. Relabel the initial candidate indices as $\{1,\ldots,p\}$ and write $\bfX=(\bfx_1,\ldots,\bfx_p)\in\mathbb{R}^{n\times p}$ and $\bfy=\bfy^{(1)}$. The rows of $\bfX$ are i.i.d. copies of $\bfx^T$, and we assume that the variables have been centered before considering the linear model

\begin{equation} 
\bfy=\bfX\bfbeta+\bfvarepsilon,\qquad \bfvarepsilon\sim N(\bfzero,\sigma^2\bfI_n),\qquad \bfvarepsilon\ \text{independent of}\ \bfX. 
\label{eq:model} 
\end{equation} 

The predictors have zero population means and unit population variances. The theoretical screening score is the sample covariance score $\widehat\omega_r=n^{-1}\bfx_r^T\bfy$, with population counterpart $\omega_r^0=\operatorname{Cov}(X_r,Y)$. When the predictor columns are empirically standardized, ranking $|\widehat\omega_r|$ is equivalent to ranking the absolute sample correlations used by the linear implementation. Otherwise, the results below apply to the covariance-score version of the first screening step.

Define the active set and its size by 

\begin{equation} 
\mathcal{M}_*=\{r\in\{1,\ldots,p\}:\beta_r\neq0\},\qquad s=|\mathcal{M}_*|. 
\label{eq:active-set} 
\end{equation} 

All sets in this section contain original predictor indices. The theoretical one-step operator is defined as follows. Let $\mathcal{L}^{(1)}=\operatorname{Top}_{d_L}\{|\widehat\omega_r|:1\le r\le p\}$, with deterministic tie-breaking, and write its ordered indices as $\ell_1,\ldots,\ell_{d_L}$. For each leader, define 

\begin{equation} 
\mathcal{S}_r^{(1)}=\{\ell_r\}\cup\operatorname{Top}_{m_n-1}\{|\widehat\rho_{\ell_r k}|:k\in\{1,\ldots,p\}\setminus\{\ell_r\}\},\qquad r=1,\ldots,d_L, 
\label{eq:one-step-leading-set} 
\end{equation} 

where $1\le m_n\le p$ and ties are broken deterministically. The operator applies $\mathcal V$ independently to each leading set and applies an aggregation map to the within-set selected variables. The aggregation map is required to satisfy the active-preserving property in \eqref{eq:active-preserving}. No outcome update, candidate-set update, stopping rule, or exclusion operation is applied in this one-step analysis.

For clarity, $\operatorname{Top}_k$ denotes the indices of the $k$ largest listed values, $\operatorname{Top}_0=\varnothing$, and all ties are broken deterministically.

For each $r$, write the within-set output as 

\begin{equation} 
\mathcal V\!\left(\bfy,\bfX_{\mathcal S_r^{(1)}}\right)=\left(\mathcal S_{r,\mathrm{sel}}^{(1)},\mathcal S_{r,\mathrm{nosel}}^{(1)}\right),\qquad \mathcal S_{r,\mathrm{sel}}^{(1)}\subseteq\mathcal S_r^{(1)},\qquad \mathcal S_{r,\mathrm{nosel}}^{(1)}=\mathcal S_r^{(1)}\setminus\mathcal S_{r,\mathrm{sel}}^{(1)}. 
\label{eq:one-step-selector-output} 
\end{equation}

We next state the assumptions that connect marginal proxy screening, correlation-based leading-set coverage, within-set retention, and aggregation.

\subsection{Assumptions}
\label{sec:asymptotic-assumptions}

\subsubsection{Dimensionality and marginal-score premise}

\begin{assumption}[Dimensionality]
\label{ass:dim}
The dimensionality satisfies $p>n$ and $\log p=O(n^\xi)$ for some $\xi\in(0,1)$. The exponent $\tau\geq0$ below is the covariance-growth exponent appearing in a Fan--Lv-type marginal-screening condition.
\end{assumption}

\begin{assumption}[Fan--Lv-type marginal-score condition]
\label{ass:marginal}
For every deterministic target set $\mathcal{T}_n\subseteq\{1,\ldots,p\}$ satisfying $|\mathcal{T}_n|\le d_n$ and $\min_{j\in\mathcal{T}_n}|\operatorname{Cov}(X_j,Y)|\geq c_A n^{-\kappa_A}$ for constants $c_A>0$ and $\kappa_A\geq0$, and for every $d_n\asymp n^{\theta_A}$ with $2\kappa_A+\tau<\theta_A<1$ and $\xi<1-2\kappa_A$, the top-$d_n$ covariance-score rule satisfies, for constants $C_A,C_A'>0$, 

\begin{equation} 
\mathbb{P}\left\{\mathcal{T}_n\subseteq\operatorname{Top}_{d_n}\{|\widehat\omega_r|:1\leq r\leq p\}\right\}\geq1-C_A\exp\left\{-C_A'\frac{n^{1-2\kappa_A}}{\log n}\right\}. \label{eq:marginal-score-condition} 
\end{equation} 

This condition is the marginal-screening input to the present composition theorem. It is a Fan--Lv-type condition rather than a claim about the later S3VS iterations; verifying it for a particular design and score is not delegated to conditioning on a selection history.
\end{assumption}

The preceding condition provides the generic marginal-screening input. We next identify the proxy set that will serve as the target set in that condition.

\subsubsection{Proxy marginal signals}

For $j=1,\ldots,p$, let $\bfe_j\in\mathbb{R}^p$ denote the $j$th standard basis vector. The population marginal score is 

\begin{equation} 
\omega_j^0=\operatorname{Cov}(X_j,Y)=\bfe_j^T\bfSigma\bfbeta, \label{eq:population-score} 
\end{equation} 

where $\bfSigma=\operatorname{Cov}(\bfx)$. The active predictors need not all have strong marginal scores.

\begin{assumption}[Proxy marginal strength]
\label{ass:proxy}
There exists a deterministic map $q:\mathcal{M}_*\longrightarrow\{1,\ldots,p\}$ with image $\mathcal{P}_*=q(\mathcal{M}_*)$ and constants $c_L>0$ and $\kappa_L\geq0$ such that 

\begin{equation} 
\min_{\ell\in\mathcal{P}_*}|\operatorname{Cov}(X_\ell,Y)|\geq c_L n^{-\kappa_L}. \label{eq:proxy-strength} 
\end{equation} 

For each $j\in\mathcal{M}_*$, $X_{q(j)}$ is called a proxy leading variable for $X_j$. The proxy may coincide with the active predictor, $q(j)=j$, and the map need not be one-to-one.
\end{assumption}

If $q(j)=j$ for every $j\in\mathcal{M}_*$, one may take $\mathcal{P}_*=\mathcal{M}_*$ and $\kappa_L=\kappa_M$, where $\kappa_M$ denotes the corresponding SIS marginal-signal exponent, so the leading-variable step has no signal-strength advantage over SIS. The proxy mechanism considered here is the regime in which $\kappa_L<\kappa_M$ may occur.

Marginal strength places proxy variables among the leading variables, but it does not by itself place the corresponding active predictors in their leading sets. The next assumptions impose the predictor--predictor correlation conditions needed for that coverage.

\subsubsection{Uniform sample-correlation control and leading-set coverage}

For $j,k\in\{1,\ldots,p\}$, define the population and sample correlations by 

\begin{equation} 
\rho_{jk}=\operatorname{Corr}(X_j,X_k),\qquad \widehat\rho_{jk}=\operatorname{Corr}_n(\bfx_j,\bfx_k), \label{eq:correlations} 
\end{equation} 

where $\operatorname{Corr}_n$ denotes the usual sample correlation computed after empirical centering.

\begin{assumption}[Sub-Gaussian predictors]
\label{ass:subg}
The coordinates of $\bfx$ are uniformly sub-Gaussian, namely, for some constant $K<\infty$, 

\begin{equation} 
\max_{1\leq j\leq p}\|X_j\|_{\psi_2}\leq K,\qquad \|U\|_{\psi_2}=\inf\left\{t>0:\mathbb{E}\left[\exp(U^2/t^2)\right]\leq2\right\}. \label{eq:subgaussian} 
\end{equation}

\end{assumption}

\begin{assumption}[Correlation separation for leading-set inclusion]
\label{ass:gap}
There exists a deterministic sequence $a_n\downarrow0$ satisfying $a_n\sqrt{n/\log p}\longrightarrow\infty$ such that, for every $j\in\mathcal{M}_*$ with $q(j)\neq j$, at most $m_n-1$ indices $k\neq j$ satisfy 

\begin{equation} 
|\rho_{k,q(j)}|\geq|\rho_{j,q(j)}|-2a_n. \label{eq:gap} 
\end{equation} 

The leading-set size satisfies $1\le m_n\le p$ and $m_n=O(n^\alpha)$ for some $\alpha\in[0,1)$.
\end{assumption}

Assumption~\ref{ass:gap} places each active predictor among the $m_n$ predictors most strongly associated with its proxy at the population level, with a margin sufficient to withstand uniform sample-correlation error. No separation condition is needed when $q(j)=j$, because a leader is included in its own leading set by construction.

Once a proxy is selected as a leader and its corresponding active predictor is covered by the leading set, the within-set selector must retain that active predictor. We now state the required selector and aggregation conditions.

\subsubsection{Within-set selector and aggregation}

Let $\mathfrak{S}^{(1)}=\{\mathcal{S}_1^{(1)},\ldots,\mathcal{S}_{d_L}^{(1)}\}$ denote the random family of leading sets generated by the one-step operator.

\begin{assumption}[Conditional setwise sure-screening property]
\label{ass:selector}
There exist constants $C_V,c_V>0$ and $\zeta>0$ such that, almost surely, for every generated leading set $\mathcal{S}_r^{(1)}$ satisfying $\mathcal{S}_r^{(1)}\cap\mathcal{M}_*\neq\varnothing$, 

\begin{equation} 
\mathbb{P}\left\{\mathcal{S}_r^{(1)}\cap\mathcal{M}_*\not\subseteq\mathcal{S}_{r,\mathrm{sel}}^{(1)}\,\middle|\,\mathfrak{S}^{(1)}\right\}\leq C_V\exp(-c_Vn^\zeta). \label{eq:selector} 
\end{equation}

\end{assumption}

The conditional formulation in Assumption~\ref{ass:selector} is a strong uniform requirement over the data-dependent family of leading sets. It is not implied merely by conditioning on a selection history and is not claimed here for cross-validated LASSO without additional assumptions such as sample splitting, uniform design control, and a suitable signal condition.

The one-step aggregation map is called active preserving if, for every predictor index $v$, 

\begin{equation} 
v\in\bigcup_{j=1}^{d_L}\mathcal{S}_{j,\mathrm{sel}}^{(1)}\quad\Longrightarrow\quad v\in\mathcal{S}_{\mathrm{sel}}^{(1)}\ \text{and}\ v\notin\mathcal{S}_{\mathrm{nosel}}^{(1)}. \label{eq:active-preserving} 
\end{equation} 

The liberal selected-variable union 

\begin{equation} 
\mathcal{S}_{\mathrm{sel}}^{(1)}=\bigcup_{j=1}^{d_L}\mathcal{S}_{j,\mathrm{sel}}^{(1)} \label{eq:liberal-union} 
\end{equation} 

satisfies this property when no nonselected aggregate is allowed to override a selected variable. The theorem below assumes this property directly. It does not assert that the first-leading-set branch of the full iterative algorithm is active-preserving.

The assumptions above provide the four ingredients of the one-step argument: marginal screening reaches the proxy set, correlation-based leading sets cover the corresponding active predictors, within-set selection retains them, and aggregation preserves them. The resulting sure-screening bound is stated next.

\subsection{Sure-Screening Result}
\label{sec:s3vs-sure-screening}

\begin{theorem}[One-step S3VS sure screening under proxy coverage]
\label{thm:s3vs}
Suppose Assumptions~\ref{ass:dim}, \ref{ass:marginal}, \ref{ass:proxy}, \ref{ass:subg}, \ref{ass:gap}, and \ref{ass:selector} hold. Assume 

\begin{equation} 
\xi<1-2\kappa_L,\qquad 2\kappa_L+\tau<\theta_L<1,\qquad d_L\asymp n^{\theta_L},\qquad |\mathcal{P}_*|\le d_L. \label{eq:s3vs-rate-conditions} 
\end{equation} 

Let the one-step operator use the top-$d_L$ marginal-score rule for leading variables, the leading sets in \eqref{eq:one-step-leading-set}, a within-set selector satisfying Assumption~\ref{ass:selector}, and an aggregation map satisfying \eqref{eq:active-preserving}. Then there exist constants $C_1,C_2,C_3,c_2>0$ such that 

\begin{equation} 
\mathbb{P}\left\{\mathcal{M}_*\subseteq\mathcal{S}_{\mathrm{sel}}^{(1)}\right\}\geq1-C_1\exp\left\{-C_2\frac{n^{1-2\kappa_L}}{\log n}\right\}-C_3p^2\exp(-c_2na_n^2)-C_Vd_L\exp(-c_Vn^\zeta). \label{eq:s3vs-bound} 
\end{equation} 

Under Assumption~\ref{ass:gap}, $p^2\exp(-c_2na_n^2)\longrightarrow0$. If also $d_L\exp(-c_Vn^\zeta)\longrightarrow0$, then 

\begin{equation} 
\mathbb{P}\left\{\mathcal{M}_*\subseteq\mathcal{S}_{\mathrm{sel}}^{(1)}\right\}\longrightarrow1. \label{eq:s3vs-consistency} 
\end{equation} 

If the two remainder terms in \eqref{eq:s3vs-bound} are $o\!\left(\exp\{-C_2n^{1-2\kappa_L}/\log n\}\right)$, then the leading term in this derived one-step upper bound is 

\begin{equation} 
O\!\left(\exp\left\{-C_2\frac{n^{1-2\kappa_L}}{\log n}\right\}\right). \label{eq:s3vs-leading-rate} 
\end{equation}

\end{theorem}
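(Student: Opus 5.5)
The proof is a union bound over three bad events, one for each stage that can lose an active predictor. Define
\[
E_1=\{\mathcal{P}_*\subseteq\mathcal{L}^{(1)}\},\qquad
E_2=\Bigl\{\max_{j\neq k}|\widehat\rho_{jk}-\rho_{jk}|\le a_n\Bigr\},\qquad
E_3=\bigcap_{r=1}^{d_L}\Bigl\{\mathcal{S}_r^{(1)}\cap\mathcal{M}_*\subseteq\mathcal{S}_{r,\mathrm{sel}}^{(1)}\Bigr\}.
\]
I will show that $E_1\cap E_2\cap E_3\subseteq\{\mathcal{M}_*\subseteq\mathcal{S}_{\mathrm{sel}}^{(1)}\}$. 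I will then bound the three complements separately, giving the three remainder terms in \eqref{eq:s3vs-bound}.

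\textbf{Step 1: the deterministic inclusion.} Fix $j\in\mathcal{M}_*$. On $E_1$, the proxy $q(j)$ equals some leader $\ell_r$. If $q(j)=j$, then $j\in\mathcal{S}_r^{(1)}$ by construction of \eqref{eq:one-step-leading-set}.

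If $q(j)\neq j$, work on $E_2$. Any $k\neq q(j)$ with $|\widehat\rho_{k,q(j)}|\ge|\widehat\rho_{j,q(j)}|$ satisfies
\[
|\rho_{k,q(j)}|\ge|\widehat\rho_{k,q(j)}|-a_n\ge|\widehat\rho_{j,q(j)}|-a_n\ge|\rho_{j,q(j)}|-2a_n .
\]
By Assumption~\ref{ass:gap}, fewer than $m_n-1$ indices other than $j$ can therefore rank at or above $j$ in correlation with $q(j)$. A little care is needed here: the leader itself is excluded from the Top set, and ties must be handled. Granting this, $j$ lies in $\operatorname{Top}_{m_n-1}$, so $j\in\mathcal{S}_r^{(1)}$.

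On $E_3$, it follows that $j\in\mathcal{S}_{r,\mathrm{sel}}^{(1)}$. The active-preserving property \eqref{eq:active-preserving} then gives $j\in\mathcal{S}_{\mathrm{sel}}^{(1)}$.

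\textbf{Step 2: bounding $\mathbb{P}(E_1^c)$.} Apply Assumption~\ref{ass:marginal} with target set $\mathcal{T}_n=\mathcal{P}_*$, $d_n=d_L$, $c_A=c_L$ and $\kappa_A=\kappa_L$. The set $\mathcal{P}_*$ is deterministic, has size at most $d_L$, and has minimum covariance at least $c_Ln^{-\kappa_L}$ by Assumption~\ref{ass:proxy}. The rate conditions \eqref{eq:s3vs-rate-conditions} are exactly those required by Assumption~\ref{ass:marginal}. This yields the first remainder term.

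\textbf{Step 3: bounding $\mathbb{P}(E_2^c)$.} I will use a Bernstein inequality for products of sub-Gaussian variables, which are sub-exponential by Assumption~\ref{ass:subg}. It controls the sample moments $n^{-1}\sum x_{ij}x_{ik}$, $n^{-1}\sum x_{ij}$ and $n^{-1}\sum x_{ij}^2$ uniformly at deviation level $c\,a_n$. A union bound over the $p^2$ pairs then gives $C_3p^2\exp(-c_2na_n^2)$, using that $a_n\to0$ so the sub-Gaussian regime of Bernstein applies.

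Passing from moments to correlations is the step I expect to be the main technical obstacle. Empirical centering and the normalization by sample standard deviations must be handled, and the normalization needs the sample variances bounded away from zero; the unit population variances supply this. I would handle it with a Lipschitz bound on the map (moments) $\mapsto$ correlation on the high-probability event that every sample variance lies in $[1/2,2]$, absorbing constants into $c_2$ and $C_3$.

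Convergence of this term follows because $na_n^2\gg\log p$ by Assumption~\ref{ass:gap}, so $2\log p-c_2na_n^2\to-\infty$.

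\textbf{Step 4: bounding $\mathbb{P}(E_3^c)$.} Union-bound over the $d_L$ leading sets. For each set meeting $\mathcal{M}_*$, apply Assumption~\ref{ass:selector} conditionally on $\mathfrak{S}^{(1)}$ and take expectations; sets that miss $\mathcal{M}_*$ contribute nothing. This gives $C_Vd_L\exp(-c_Vn^\zeta)$.

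\textbf{Consequences.} Combining Steps 1--4 yields \eqref{eq:s3vs-bound}. The limit \eqref{eq:s3vs-consistency} follows because the first remainder term vanishes, since $1-2\kappa_L>\xi>0$. The leading-rate statement \eqref{eq:s3vs-leading-rate} is immediate from the stated $o(\cdot)$ hypothesis.
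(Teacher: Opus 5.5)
Your proposal is correct and follows essentially the same route as the paper. It uses the same three events (proxy leaders captured by marginal screening via Assumption~\ref{ass:marginal}, uniform sample-correlation control giving leading-set coverage, and conditional setwise retention union-bounded over the $d_L$ leading sets) and combines them through the active-preserving aggregation property, with the correlation concentration proved by Bernstein's inequality plus a Lipschitz argument on the event that all sample variances are bounded away from zero.

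The leader/tie caveat you flag in Step~1 closes immediately, as in the paper's Lemma~\ref{lem:set-inclusion}. The leader $q(j)$ satisfies $|\rho_{q(j),q(j)}|=1\ge|\rho_{j,q(j)}|-2a_n$, so it is itself one of the at most $m_n-1$ exceptional indices in Assumption~\ref{ass:gap}. That leaves at most $m_n-2$ non-leader competitors for $j$, so $j\in\operatorname{Top}_{m_n-1}$ under any tie-breaking.
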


Theorem~\ref{thm:s3vs} combines these sufficient conditions to give a one-step sure-screening result. Its conclusion applies only to the one-step operator defined above and does not automatically extend to later S3VS iterations, residualized outcomes, conservative aggregation, exclusion branches, or model-specific implementations for which the assumptions have not been established.

\subsection{Comparison with SIS}
The following corollaries compare the one-step S3VS bounds with analogous SIS quantities under additional assumptions. They are not guarantees for the full iterative algorithm. The first comparison concerns only the displayed upper bounds and does not compare the unknown true failure probabilities unless matching lower bounds are also available.

\begin{corollary}[Comparison of derived failure-probability upper bounds]
\label{cor:prob}
Suppose ordinary SIS has a derived upper-bound term of the form 

\begin{equation} 
\exp\left\{-C_M\frac{n^{1-2\kappa_M}}{\log n}\right\},\qquad C_M>0, \label{eq:sis-failure-bound} 
\end{equation} 

with $\kappa_M>\kappa_L$, and suppose the conditions of Theorem~\ref{thm:s3vs} hold with the remainder terms in \eqref{eq:s3vs-bound} negligible relative to its first term. Then, for every fixed $C_L,C_M>0$, 

\begin{equation} 
\exp\left\{-C_L\frac{n^{1-2\kappa_L}}{\log n}\right\}=o\!\left(\exp\left\{-C_M\frac{n^{1-2\kappa_M}}{\log n}\right\}\right). \label{eq:probability-bound-improvement} 
\end{equation} 

Thus, under the stated premises, the leading term in the one-step S3VS upper bound decays faster than the corresponding leading term in the SIS upper bound. This is not a claim that the actual S3VS failure probability is asymptotically smaller than the actual SIS failure probability.
\end{corollary}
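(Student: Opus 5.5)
The corollary is a statement about two explicit deterministic sequences, so the plan is to compare them directly by taking the ratio and showing it tends to zero. The other premises (the conditions of Theorem~\ref{thm:s3vs}, negligibility of its remainder terms, and the SIS bound of the form \eqref{eq:sis-failure-bound}) are only needed to interpret the displayed terms as the leading terms of the respective upper bounds. The asymptotic relation \eqref{eq:probability-bound-improvement} itself uses only $\kappa_M>\kappa_L\ge0$ and $C_L,C_M>0$.

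Concretely, I would write
\[
\frac{\exp\{-C_L n^{1-2\kappa_L}/\log n\}}{\exp\{-C_M n^{1-2\kappa_M}/\log n\}}
=\exp\left\{-\frac{n^{1-2\kappa_L}}{\log n}\left(C_L-C_M n^{-2(\kappa_M-\kappa_L)}\right)\right\}.
\]
Since $\kappa_M-\kappa_L>0$, the term $C_M n^{-2(\kappa_M-\kappa_L)}\to0$, so for all large $n$ the parenthesis is at least $C_L/2$. Hence the exponent is at most $-(C_L/2)\,n^{1-2\kappa_L}/\log n$.

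It remains to check that $n^{1-2\kappa_L}/\log n\to\infty$. This needs $1-2\kappa_L>0$, which follows from the rate conditions in \eqref{eq:s3vs-rate-conditions}: $2\kappa_L<2\kappa_L+\tau<\theta_L<1$, and also $\xi<1-2\kappa_L$ with $\xi>0$. Given this, the ratio tends to $0$, which is the little-$o$ claim. The case $1-2\kappa_M\le0$ needs no separate treatment: then the SIS term does not even vanish, and the same computation applies verbatim.

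Finally, I would connect this to the bound of Theorem~\ref{thm:s3vs}. Under the assumed negligibility of the remainder terms, \eqref{eq:s3vs-leading-rate} makes the S3VS failure upper bound $O(\exp\{-C_2 n^{1-2\kappa_L}/\log n\})$, and the computation above with $C_L=C_2$ shows this is $o$ of the SIS term. The only delicate point is bookkeeping rather than analysis: confirming $1-2\kappa_L>0$ from the stated rate conditions, and stressing that the conclusion compares upper bounds only, not true failure probabilities.
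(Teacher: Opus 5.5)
Your argument is correct and is essentially the paper's proof: both form the ratio of the two exponentials and show that its exponent, $-(C_Ln^{1-2\kappa_L}-C_Mn^{1-2\kappa_M})/\log n$, tends to $-\infty$. You explicitly check that $1-2\kappa_L>0$, via $\xi<1-2\kappa_L$ or $2\kappa_L+\tau<\theta_L<1$, which is a step the paper leaves implicit; however, your opening remark that the relation needs only $\kappa_M>\kappa_L\ge0$ is too strong, since for $\kappa_L=1/2$ the ratio tends to $1$, not $0$, and your later paragraph is what correctly repairs this.
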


The preceding corollary compares the leading failure-probability terms. A separate consequence concerns the dimension of the candidate pool passed to the post-screening selector.

\begin{corollary}[Post-screening candidate-pool comparison]
\label{cor:dim}
Let ordinary SIS retain a candidate model of size $d_{\mathrm{SIS}}\asymp n^{\theta_M}$ with $2\kappa_M+\tau<\theta_M<1$. Let the one-step S3VS operator use $d_L\asymp n^{\theta_L}$ leading variables and leading sets of size at most $m_n=O(n^\alpha)$, and define 

\begin{equation} 
\mathcal{S}_{\mathrm{union}}^{(1)}=\bigcup_{j=1}^{d_L}\mathcal{S}_j^{(1)},\qquad D_{\mathrm{S3VS}}=|\mathcal{S}_{\mathrm{union}}^{(1)}|. \label{eq:s3vs-candidate-union} 
\end{equation} 

Then 

\begin{equation} 
D_{\mathrm{S3VS}}\leq d_Lm_n=O(n^{\theta_L+\alpha}). \label{eq:s3vs-candidate-bound} 
\end{equation} 

Consequently, $D_{\mathrm{S3VS}}=o(d_{\mathrm{SIS}})$ whenever $\theta_L+\alpha<\theta_M$. An admissible $\theta_L$ satisfying this inequality exists whenever 

\begin{equation} 
2\kappa_L+\tau+\alpha<\theta_M. \label{eq:dim-improve-existence} 
\end{equation} 

If the SIS and S3VS exponents are chosen with the same positive slack above their respective lower boundaries, the sufficient condition reduces to 

\begin{equation} 
\alpha<2(\kappa_M-\kappa_L). \label{eq:alpha-cond} 
\end{equation} 

Moreover, each individual one-step S3VS selection problem has dimension at most $m_n$, so $m_n=o(d_{\mathrm{SIS}})$ whenever $\alpha<\theta_M$. These are deterministic dimension comparisons. They do not establish lower runtime, lower memory use, or lower computational complexity.
\end{corollary}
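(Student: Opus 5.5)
The plan is to treat Corollary~\ref{cor:dim} as a purely deterministic counting statement and then reduce the remaining claims to elementary comparisons of polynomial exponents. No probability is involved. Everything follows from the construction of the leading sets in \eqref{eq:one-step-leading-set} and the growth conditions $d_L\asymp n^{\theta_L}$, $m_n=O(n^\alpha)$ and $d_{\mathrm{SIS}}\asymp n^{\theta_M}$.

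\emph{Cardinality bound.} By \eqref{eq:one-step-leading-set}, each $\mathcal{S}_j^{(1)}$ is the leader $\{\ell_j\}$ together with at most $m_n-1$ further indices from $\operatorname{Top}_{m_n-1}$. Hence $|\mathcal{S}_j^{(1)}|\le m_n$ for every realization of the data. Subadditivity of cardinality over the union in \eqref{eq:s3vs-candidate-union} then gives
\[
D_{\mathrm{S3VS}}\le\sum_{j=1}^{d_L}|\mathcal{S}_j^{(1)}|\le d_Lm_n .
\]
Substituting $d_L\le C n^{\theta_L}$ and $m_n\le C'n^{\alpha}$ yields \eqref{eq:s3vs-candidate-bound}. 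This bound holds surely, which is exactly why the corollary can be called deterministic.

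\emph{Comparison with SIS.} Since $d_{\mathrm{SIS}}\ge c\,n^{\theta_M}$ for large $n$, we get
\[
D_{\mathrm{S3VS}}/d_{\mathrm{SIS}}\le (CC'/c)\,n^{\theta_L+\alpha-\theta_M},
\]
which tends to $0$ when $\theta_L+\alpha<\theta_M$.

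For existence, the admissible range for S3VS is $\theta_L\in(2\kappa_L+\tau,1)$, and we want $\theta_L<\theta_M-\alpha$. Because $\alpha\ge0$ and $\theta_M<1$, we have $\min(1,\theta_M-\alpha)=\theta_M-\alpha$. So the interval $(2\kappa_L+\tau,\theta_M-\alpha)$ is nonempty exactly when \eqref{eq:dim-improve-existence} holds, and any point in it is an admissible $\theta_L$.

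For the equal-slack version, write $\theta_L=2\kappa_L+\tau+\eta$ and $\theta_M=2\kappa_M+\tau+\eta$ with the same $\eta>0$. Then $\theta_L+\alpha<\theta_M$ reduces to \eqref{eq:alpha-cond}. I would also check that this $\theta_L$ is admissible: $\theta_L\le\theta_L+\alpha<\theta_M<1$ gives $\theta_L<1$, and $\theta_L>2\kappa_L+\tau$ holds by construction.

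\emph{Individual selection problems.} Each local problem has dimension $|\mathcal{S}_j^{(1)}|\le m_n=O(n^\alpha)$. Since $d_{\mathrm{SIS}}\gtrsim n^{\theta_M}$, we get $m_n/d_{\mathrm{SIS}}=O(n^{\alpha-\theta_M})\to0$ whenever $\alpha<\theta_M$.

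\emph{Expected obstacle.} None of these steps is genuinely hard. The only step needing care is the admissibility bookkeeping in the existence and equal-slack claims. There I must make sure the chosen $\theta_L$ respects both the lower boundary from Assumption~\ref{ass:marginal} and the upper bound $\theta_L<1$, using $\alpha\ge0$ from Assumption~\ref{ass:gap}. I would close by noting that these comparisons concern candidate-set sizes only, so no runtime or memory conclusion follows.
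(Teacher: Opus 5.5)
Your proposal is correct and follows the paper's proof essentially step for step: subadditivity of cardinality over the union gives $D_{\mathrm{S3VS}}\le d_Lm_n$, exponent comparisons give the $o(d_{\mathrm{SIS}})$ claim and the $m_n/d_{\mathrm{SIS}}\to0$ claim, the interval $(2\kappa_L+\tau,\theta_M-\alpha)$ is nonempty under the stated condition, and the common-slack substitution reduces the condition to $\alpha<2(\kappa_M-\kappa_L)$. Your extra checks that $\min(1,\theta_M-\alpha)=\theta_M-\alpha$ and that the equal-slack choice of $\theta_L$ stays below $1$ (using $\alpha\ge0$ and $\theta_M<1$) are bookkeeping details the paper leaves implicit, and they are sound.
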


A smaller candidate pool can also reduce the logarithmic dimension factor in a subsequent estimation bound. The next corollary states this implication conditionally, without deriving the post-screening estimator's rate.

\begin{corollary}[Conditional post-screening estimation implication]
\label{cor:est}
Let $q_n$ and $d_{\mathrm{SIS}}$ be deterministic candidate-model-size sequences with $2\leq q_n<d_{\mathrm{SIS}}<n$. Suppose the same post-screening estimator, applied after the one-step S3VS and SIS procedures, satisfies the respective assumed bounds 

\begin{equation} 
\|\widehat{\bfbeta}_{\mathrm{S3VS}}-\bfbeta\|_2=O_p\!\left(\sqrt{\frac{s\log q_n}{n}}\right),\qquad \|\widehat{\bfbeta}_{\mathrm{SIS}}-\bfbeta\|_2=O_p\!\left(\sqrt{\frac{s\log d_{\mathrm{SIS}}}{n}}\right). \label{eq:conditional-estimation-rates} 
\end{equation} 

If $q_n=n^{a+o(1)}$ and $d_{\mathrm{SIS}}=n^{b+o(1)}$ for constants $0<a<b$, then 

\begin{equation} 
\frac{\log q_n}{\log d_{\mathrm{SIS}}}\longrightarrow\frac{a}{b}<1. \label{eq:log-constant-improvement} 
\end{equation} 

In this polynomial case, the ratio of the square-root logarithmic factors in the two assumed rates tends to $\sqrt{a/b}$, a constant rather than an order improvement. If $\log q_n=o(\log d_{\mathrm{SIS}})$, the logarithmic factor has a smaller asymptotic order for the S3VS bound. This corollary is only a conditional comparison of assumed post-screening bounds. It neither derives those bounds from Theorem~\ref{thm:s3vs} nor establishes an estimator-specific S3VS rate.
\end{corollary}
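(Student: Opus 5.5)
Since the statement takes both post-screening rates as assumptions, nothing is derived from Theorem~\ref{thm:s3vs}. The whole argument is a short computation with logarithms under the polynomial-growth hypotheses $q_n=n^{a+o(1)}$ and $d_{\mathrm{SIS}}=n^{b+o(1)}$. I will do three things in order: compute the limit of the ratio of logarithms, translate it into the ratio of the square-root factors in \eqref{eq:conditional-estimation-rates}, and then treat the case $\log q_n=o(\log d_{\mathrm{SIS}})$.

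\textbf{Step 1.} Take logarithms: $\log q_n=(a+\epsilon_n)\log n$ and $\log d_{\mathrm{SIS}}=(b+\eta_n)\log n$, where $\epsilon_n,\eta_n\to0$. The standing inequality $2\le q_n<d_{\mathrm{SIS}}$ gives $\log d_{\mathrm{SIS}}\ge\log 2>0$, so the ratio is well defined. Because $b>0$, we have $b+\eta_n>0$ for all large $n$, and the common factor $\log n$ cancels:
\begin{equation*}
\frac{\log q_n}{\log d_{\mathrm{SIS}}}=\frac{a+\epsilon_n}{b+\eta_n}\longrightarrow\frac{a}{b}.
\end{equation*}
The limit is $<1$ since $0<a<b$. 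This proves \eqref{eq:log-constant-improvement}.

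\textbf{Step 2.} The two assumed rates have the same factor $\sqrt{s/n}$. Their ratio is therefore $\sqrt{\log q_n/\log d_{\mathrm{SIS}}}$, which tends to $\sqrt{a/b}$ because $x\mapsto\sqrt{x}$ is continuous on $[0,\infty)$. The limit $\sqrt{a/b}$ lies in $(0,1)$. So the ratio of the two rates is bounded above and below by positive constants for large $n$, and the rates have the same order: the improvement is a constant factor only. Under $\log q_n=o(\log d_{\mathrm{SIS}})$ the same ratio tends to $0$, so the logarithmic factor in the S3VS bound is of strictly smaller order.

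\textbf{Main caveat.} The only real issue is interpretation rather than computation. The conclusions concern only the displayed \emph{upper-bound} rates. Since these are $O_p$ statements, a smaller bound for S3VS does not imply that its actual estimation error is smaller, and I will say so explicitly. I will also note that the o(1) terms in the exponents must be handled through the additive decomposition used in Step 1, not by comparing $q_n$ and $d_{\mathrm{SIS}}$ directly.
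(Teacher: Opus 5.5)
Your proposal is correct and follows essentially the same route as the paper's proof. Both write $\log q_n=(a+o(1))\log n$ and $\log d_{\mathrm{SIS}}=(b+o(1))\log n$, cancel $\log n$ to get the limit $a/b$, take square roots to obtain $\sqrt{a/b}$, and note directly that $\log q_n=o(\log d_{\mathrm{SIS}})$ gives a smaller-order logarithmic factor, stressing that only the assumed upper bounds are being compared.
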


These corollaries describe potential gains under additional assumptions rather than uniform advantages. The following proposition gives a design in which proxy structure provides no exponent advantage over SIS.

\subsection{Limits of Uniform Improvement}
\begin{proposition}[A design with no proxy-signal advantage]
\label{prop:no-free}
There exist linear-model designs for which no proxy set has a marginal-signal exponent smaller than that of the active predictors. For such designs, the proxy mechanism provides no structural basis for a uniformly stronger one-step sure-screening bound than SIS.
\end{proposition}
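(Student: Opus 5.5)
The proof is constructive: I will exhibit one explicit family of designs and check both claims directly. The simplest choice is an independent design, $\bfSigma=\bfI_p$, with $\bfbeta$ supported on $\mathcal{M}_*$ and $|\beta_j|\asymp n^{-\kappa_M}$ for every $j\in\mathcal{M}_*$. I will take $s$ fixed for concreteness, although a slowly growing $s$ would also work.

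\textbf{Step 1: compute all marginal scores.} By \eqref{eq:population-score}, $\omega_\ell^0=\bfe_\ell^T\bfSigma\bfbeta=\beta_\ell$. Hence $\omega_\ell^0=0$ for every $\ell\notin\mathcal{M}_*$, and $|\omega_j^0|\asymp n^{-\kappa_M}$ for $j\in\mathcal{M}_*$. The SIS marginal-signal exponent of the active set is therefore exactly $\kappa_M$.

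\textbf{Step 2: every proxy map satisfies $\kappa_L\ge\kappa_M$.} Let $q$ be any map satisfying Assumption~\ref{ass:proxy} with some $c_L>0$ and $\kappa_L\ge0$, and let $j\in\mathcal{M}_*$. If $q(j)\notin\mathcal{M}_*$, then $\operatorname{Cov}(X_{q(j)},Y)=0<c_Ln^{-\kappa_L}$, which violates \eqref{eq:proxy-strength}. So $q(j)\in\mathcal{M}_*$, and then
\[
c_Ln^{-\kappa_L}\le|\beta_{q(j)}|\le Cn^{-\kappa_M}
\]
for all large $n$. This inequality forces $\kappa_L\ge\kappa_M$. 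Consequently every admissible proxy set is a subset of $\mathcal{M}_*$ and has marginal exponent at least $\kappa_M$. In other words, no proxy set has a smaller exponent than the active predictors, which proves the first claim.

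\textbf{Step 3: no structural advantage in the bound.} The first term of \eqref{eq:s3vs-bound} is $\exp\{-C_2n^{1-2\kappa_L}/\log n\}$. With $\kappa_L\ge\kappa_M$ it cannot decay faster in order than the SIS term \eqref{eq:sis-failure-bound}. The premise $\kappa_M>\kappa_L$ of Corollary~\ref{cor:prob} also fails, so that corollary gives no improvement.

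A second observation makes the point stronger. Since $\rho_{k,q(j)}=0$ for all $k\ne q(j)$, whenever $q(j)\ne j$ we would have $|\rho_{j,q(j)}|=0$. Then \eqref{eq:gap} would hold for all $p-1$ indices $k\neq j$, so Assumption~\ref{ass:gap} would fail unless $m_n=p$. The coverage route through a distinct proxy is therefore unavailable in this design, and effectively $q(j)=j$. The one-step operator then reduces to SIS-type top-$d_L$ screening with exponent $\kappa_M$, plus nonnegative remainder terms.

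\textbf{Main obstacle: wording, not computation.} The delicate point is making ``no structural basis for a uniformly stronger bound'' precise without claiming lower bounds on the true failure probabilities. I would phrase it as follows: in this design, every choice of $(q,\kappa_L)$ admissible under Assumptions~\ref{ass:proxy} and \ref{ass:gap} yields a bound \eqref{eq:s3vs-bound} whose leading exponent is no better than SIS's. I would also remark that the same argument extends beyond $\bfSigma=\bfI_p$, for instance to block designs in which inactive predictors are uncorrelated with the active block.
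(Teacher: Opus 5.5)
Your proposal is correct and follows essentially the same route as the paper. Both use an identity-covariance design with active coefficients of order $n^{-\kappa}$; the zero marginal covariance of the nulls forces every admissible proxy to be active, which gives $\kappa_L\ge\kappa_M$. The closing step differs: the paper observes that the valid choice $m_n=1$ makes each leading set its leader, so the candidate set coincides with SIS's, and it also restricts to $0\le\kappa<1/2$ with $\tau=0$ so the design fits the Fan--Lv premise. You instead show that Assumption~\ref{ass:gap} cannot hold for any distinct proxy, because $m_n=O(n^{\alpha})$ with $\alpha<1$ is eventually smaller than $p>n$; this is an equally valid, arguably sharper, way to rule out the proxy-coverage mechanism.
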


Together, these results show that the benefit of the one-step S3VS construction depends on both proxy coverage and the relationship between marginal signal and predictor dependence. The simulation studies that follow examine how these conditions and trade-offs appear in finite samples.

\subsection{Scope of the Theoretical Claims}
The theory gives a one-step result: proxy screening, correlation-based coverage, within-set retention, and active-preserving aggregation together imply retention of the active set. It does not show that these conditions hold for the full iterative procedure, nor does it give a guarantee after data-dependent outcome updates or stopping. The SIS and candidate-pool comparisons are, respectively, comparisons of derived upper-bound expressions and deterministic dimension bounds. No claim of uniform actual-probability improvement, runtime improvement, or prediction improvement is made.

\section{Simulation Study}
\label{sec:simulation}

\subsection{Linear Models}
\label{sec:simulation-lm}

\subsubsection{Simulation Design}
The linear-model study crossed sample sizes $n\in\{200,500,800\}$, dimensions $p\in\{10{,}000, 15{,}000, 20{,}000\}$, the 13 named design scenarios in Supplementary Table~S2
 and weak, moderate, and strong signal regimes with target signal-to-noise ratios (SNRs) $0.25$, $1$, and $4$, respectively. The reported results use 100 independently generated replicates for every combination of scenario, signal regime, $n$, and $p$, giving 35,100 replicates in total. Each replicate generates a new training design, an independent test design, a coefficient vector, and training and test responses. Thus, the simulation does not condition on one design matrix or one coefficient vector while varying only the response noise.

The design scenarios include independent predictors, unequal positive-correlation blocks, four prespecified proxy/no-proxy configurations, AR(1) dependence, latent factors, a sparse pairwise graph, overlapping factor modules, signed factor loadings with negative correlations, heavy-tailed predictors, and row/feature contamination. Each replicate contains 25 active coordinates. Their placement is modified in the proxy scenarios to create cancellation, shared-proxy, null-proxy, and misleading-null cases. Coefficients are generated with random signs and magnitudes and then rescaled so that the empirical training SNR, $\operatorname{var}(\bfX_{\rm train}\bfbeta)/\sigma^2$, equals the target for the selected signal regime, with $\sigma^2=1$. The calibration in Supplementary Figure~S1
 and Supplementary Table~S3
 confirms that the realized SNRs agree with these targets.

\subsubsection{Implementation and Performance Measures}
We compare 12 procedures. For each selector--LASSO, SCAD, MCP, and NLP--we compare full S3VS, the selection from the first S3VS iteration (labeled S3VS-one-step), and one-pass SIS followed by the same selector. The SIS procedures are one-pass benchmarks rather than ISIS. Full S3VS uses one leading variable, leading sets of size five, liberal aggregation, conservative-from-the-beginning removal, the pre-iteration selected-model-size limit $m_{\max}=100$, and the cumulative unsuccessful-iteration limit $n_{\mathrm{skip}}=3$. Complete method definitions are given in Supplementary Table~S1.
Because the S3VS-one-step results are taken from the first iteration of the full run, they show what is gained by later iterations and require no separate S3VS fit. Only full S3VS and SIS are timed separately.

Each selected set is evaluated against the replicate-specific active set using TPR, sure-screening probability, conditional FDP among nonempty selections, unconditional mean FDP with empty selections assigned FDP zero, precision, false-positive and false-negative counts, and F-score. We also report the probability of a nonempty selection, mean and median selected-model size, coefficient error, and independent-test MSE. Coefficient error and test MSE use a common post-selection LASSO refit on the selected variables, so these two measures compare selected sets rather than the four different selectors. All table entries are means with Monte Carlo standard errors in parentheses. The main figures show replicate means or distributions as indicated in their captions.

\begin{table}
\centering
\caption{Linear-model selection and prediction performance at $n=800$ and $p=20{,}000$. Entries are means with Monte Carlo standard errors (MCSEs) in parentheses, pooled over the named scenarios. Reps. is the number of replicates, $\lvert S\rvert$ is selected-model size, $P(\mathrm{nonempty})$ is the probability of a nonempty selection, and Sure is the probability of selecting all active variables. FDP $|$ nonempty is conditional on a nonempty selection, whereas Mean FDP is unconditional, with empty selections assigned FDP zero. TPR, FP, FN, MSE, and MCSE denote true positive rate, false positives, false negatives, mean squared error, and Monte Carlo standard error, respectively.}
\label{tab:LM-primary-performance}
{\fontsize{5.4}{6.4}\selectfont
\setlength{\tabcolsep}{1pt}
\begin{tabular}{lrrrrrrrrrrrrrr}
\toprule
Method & Signal & Reps. & \shortstack{$P$\\$(\mathrm{nonempty})$} & Mean $\lvert S\rvert$ & Med. $\lvert S\rvert$ & TPR & Sure & \shortstack{FDP $|$\\nonempty} & \shortstack{Mean\\FDP} & Precision & FP & FN & F-score & \shortstack{Test\\MSE} \\
\midrule
S3VS-full-LASSO & moderate & 1300 & 1.000(0.000) & 100.997(0.031) & 101.000(0.000) & 0.652(0.005) & 0.003(0.002) & 0.839(0.001) & 0.839(0.001) & 0.161(0.001) & 84.701(0.141) & 8.704(0.137) & 0.259(0.002) & 1.548(0.006) \\
S3VS-full-MCP & moderate & 1300 & 0.798(0.011) & 2.077(0.051) & 2.000(0.000) & 0.079(0.002) & 0.000(0.000) & 0.046(0.005) & 0.036(0.004) & 0.954(0.005) & 0.105(0.012) & 23.028(0.050) & 0.138(0.003) & 1.805(0.008) \\
S3VS-full-NLP & moderate & 1300 & 0.998(0.001) & 21.098(0.225) & 22.000(0.422) & 0.580(0.007) & 0.001(0.001) & 0.317(0.004) & 0.317(0.004) & 0.683(0.004) & 6.601(0.105) & 10.503(0.164) & 0.603(0.005) & 1.287(0.004) \\
S3VS-full-SCAD & moderate & 1300 & 0.798(0.011) & 2.613(0.072) & 2.000(0.000) & 0.092(0.003) & 0.000(0.000) & 0.089(0.007) & 0.071(0.005) & 0.911(0.007) & 0.302(0.024) & 22.688(0.069) & 0.154(0.004) & 1.792(0.008) \\
S3VS-one-step-LASSO & moderate & 1300 & 1.000(0.000) & 1.000(0.000) & 1.000(0.000) & 0.039(0.000) & 0.000(0.000) & 0.026(0.004) & 0.026(0.004) & 0.974(0.004) & 0.026(0.004) & 24.026(0.004) & 0.075(0.000) & 2.004(0.003) \\
S3VS-one-step-MCP & moderate & 1300 & 0.785(0.011) & 0.785(0.011) & 1.000(0.000) & 0.030(0.000) & 0.000(0.000) & 0.030(0.005) & 0.024(0.004) & 0.970(0.005) & 0.024(0.004) & 24.238(0.012) & 0.059(0.001) & 2.004(0.003) \\
S3VS-one-step-NLP & moderate & 1300 & 0.998(0.001) & 0.998(0.001) & 1.000(0.000) & 0.039(0.000) & 0.000(0.000) & 0.025(0.004) & 0.025(0.004) & 0.975(0.004) & 0.025(0.004) & 24.026(0.004) & 0.075(0.000) & 2.004(0.003) \\
S3VS-one-step-SCAD & moderate & 1300 & 0.785(0.011) & 0.785(0.011) & 1.000(0.000) & 0.030(0.000) & 0.000(0.000) & 0.030(0.005) & 0.024(0.004) & 0.970(0.005) & 0.024(0.004) & 24.239(0.012) & 0.059(0.001) & 2.004(0.003) \\
SIS-LASSO & moderate & 1300 & 1.000(0.000) & 131.826(1.864) & 175.000(0.826) & 0.605(0.007) & 0.003(0.002) & 0.840(0.004) & 0.840(0.004) & 0.160(0.004) & 116.712(1.711) & 9.885(0.177) & 0.204(0.002) & 1.586(0.005) \\
SIS-MCP & moderate & 1300 & 1.000(0.000) & 112.479(1.879) & 152.500(1.322) & 0.566(0.008) & 0.000(0.000) & 0.756(0.007) & 0.756(0.007) & 0.244(0.007) & 98.328(1.708) & 10.848(0.193) & 0.220(0.002) & 1.601(0.005) \\
SIS-NLP & moderate & 1300 & 0.998(0.001) & 15.830(0.254) & 18.000(0.536) & 0.494(0.008) & 0.000(0.000) & 0.197(0.004) & 0.196(0.004) & 0.803(0.004) & 3.479(0.094) & 12.649(0.189) & 0.557(0.007) & 1.348(0.006) \\
SIS-SCAD & moderate & 1300 & 1.000(0.000) & 122.432(1.810) & 162.000(1.425) & 0.581(0.007) & 0.002(0.001) & 0.837(0.004) & 0.837(0.004) & 0.163(0.004) & 107.902(1.649) & 10.470(0.185) & 0.206(0.002) & 1.592(0.005) \\
S3VS-full-LASSO & strong & 1300 & 1.000(0.000) & 101.004(0.031) & 101.000(0.000) & 0.898(0.006) & 0.505(0.014) & 0.778(0.001) & 0.778(0.001) & 0.222(0.001) & 78.542(0.154) & 2.538(0.150) & 0.357(0.002) & 1.434(0.013) \\
S3VS-full-MCP & strong & 1300 & 0.997(0.002) & 13.639(0.182) & 14.000(0.489) & 0.539(0.007) & 0.007(0.002) & 0.025(0.003) & 0.025(0.003) & 0.975(0.003) & 0.158(0.015) & 11.518(0.187) & 0.655(0.007) & 1.939(0.023) \\
S3VS-full-NLP & strong & 1300 & 1.000(0.000) & 29.225(0.217) & 31.000(0.374) & 0.869(0.007) & 0.508(0.014) & 0.264(0.004) & 0.264(0.004) & 0.736(0.004) & 7.505(0.103) & 3.280(0.166) & 0.781(0.005) & 1.266(0.010) \\
S3VS-full-SCAD & strong & 1300 & 0.998(0.001) & 14.417(0.180) & 16.000(0.389) & 0.559(0.008) & 0.006(0.002) & 0.051(0.004) & 0.051(0.004) & 0.949(0.004) & 0.430(0.031) & 11.013(0.191) & 0.668(0.007) & 1.915(0.024) \\
S3VS-one-step-LASSO & strong & 1300 & 1.000(0.000) & 1.000(0.000) & 1.000(0.000) & 0.040(0.000) & 0.000(0.000) & 0.002(0.001) & 0.002(0.001) & 0.998(0.001) & 0.002(0.001) & 24.002(0.001) & 0.077(0.000) & 4.992(0.009) \\
S3VS-one-step-MCP & strong & 1300 & 0.997(0.002) & 0.997(0.002) & 1.000(0.000) & 0.040(0.000) & 0.000(0.000) & 0.002(0.001) & 0.002(0.001) & 0.998(0.001) & 0.002(0.001) & 24.005(0.002) & 0.077(0.000) & 4.992(0.009) \\
S3VS-one-step-NLP & strong & 1300 & 1.000(0.000) & 1.000(0.000) & 1.000(0.000) & 0.040(0.000) & 0.000(0.000) & 0.002(0.001) & 0.002(0.001) & 0.998(0.001) & 0.002(0.001) & 24.002(0.001) & 0.077(0.000) & 4.992(0.009) \\
S3VS-one-step-SCAD & strong & 1300 & 0.998(0.001) & 0.998(0.001) & 1.000(0.000) & 0.040(0.000) & 0.000(0.000) & 0.002(0.001) & 0.002(0.001) & 0.998(0.001) & 0.002(0.001) & 24.004(0.002) & 0.077(0.000) & 4.992(0.009) \\
SIS-LASSO & strong & 1300 & 1.000(0.000) & 110.045(1.515) & 140.000(0.875) & 0.712(0.008) & 0.066(0.007) & 0.781(0.004) & 0.781(0.004) & 0.219(0.004) & 92.239(1.352) & 7.195(0.192) & 0.276(0.002) & 2.080(0.021) \\
SIS-MCP & strong & 1300 & 1.000(0.000) & 42.512(1.015) & 32.000(0.848) & 0.680(0.008) & 0.046(0.006) & 0.397(0.008) & 0.397(0.008) & 0.603(0.008) & 25.505(0.953) & 7.993(0.205) & 0.549(0.007) & 1.902(0.024) \\
SIS-NLP & strong & 1300 & 0.999(0.001) & 17.888(0.221) & 22.000(0.100) & 0.677(0.008) & 0.058(0.007) & 0.055(0.002) & 0.055(0.002) & 0.945(0.002) & 0.965(0.034) & 8.076(0.209) & 0.743(0.007) & 1.768(0.025) \\
SIS-SCAD & strong & 1300 & 1.000(0.000) & 52.572(0.829) & 52.000(0.781) & 0.689(0.008) & 0.048(0.006) & 0.601(0.006) & 0.601(0.006) & 0.399(0.006) & 35.354(0.751) & 7.782(0.200) & 0.456(0.005) & 1.931(0.023) \\
S3VS-full-LASSO & weak & 1300 & 1.000(0.000) & 101.015(0.032) & 101.000(0.000) & 0.220(0.003) & 0.000(0.000) & 0.945(0.001) & 0.945(0.001) & 0.055(0.001) & 95.509(0.091) & 19.494(0.085) & 0.087(0.001) & 1.614(0.004) \\
S3VS-full-MCP & weak & 1300 & 0.302(0.013) & 0.378(0.018) & 0.000(0.000) & 0.010(0.001) & 0.000(0.000) & 0.339(0.023) & 0.103(0.008) & 0.661(0.023) & 0.121(0.010) & 24.742(0.016) & 0.019(0.001) & 1.248(0.002) \\
S3VS-full-NLP & weak & 1300 & 0.998(0.001) & 9.981(0.141) & 10.000(0.409) & 0.137(0.002) & 0.000(0.000) & 0.643(0.006) & 0.641(0.006) & 0.357(0.006) & 6.560(0.111) & 21.579(0.059) & 0.188(0.003) & 1.222(0.003) \\
S3VS-full-SCAD & weak & 1300 & 0.305(0.013) & 0.574(0.030) & 0.000(0.000) & 0.016(0.001) & 0.000(0.000) & 0.373(0.023) & 0.114(0.008) & 0.627(0.023) & 0.177(0.014) & 24.603(0.026) & 0.029(0.002) & 1.238(0.002) \\
S3VS-one-step-LASSO & weak & 1300 & 1.000(0.000) & 1.000(0.000) & 1.000(0.000) & 0.030(0.000) & 0.000(0.000) & 0.243(0.012) & 0.243(0.012) & 0.757(0.012) & 0.243(0.012) & 24.243(0.012) & 0.058(0.001) & 1.254(0.002) \\
S3VS-one-step-MCP & weak & 1300 & 0.290(0.013) & 0.290(0.013) & 0.000(0.000) & 0.008(0.000) & 0.000(0.000) & 0.316(0.024) & 0.092(0.008) & 0.684(0.024) & 0.092(0.008) & 24.802(0.011) & 0.015(0.001) & 1.254(0.002) \\
S3VS-one-step-NLP & weak & 1300 & 0.998(0.001) & 0.998(0.001) & 1.000(0.000) & 0.030(0.000) & 0.000(0.000) & 0.241(0.012) & 0.241(0.012) & 0.759(0.012) & 0.241(0.012) & 24.243(0.012) & 0.058(0.001) & 1.254(0.002) \\
S3VS-one-step-SCAD & weak & 1300 & 0.292(0.013) & 0.292(0.013) & 0.000(0.000) & 0.008(0.000) & 0.000(0.000) & 0.327(0.024) & 0.095(0.008) & 0.673(0.024) & 0.095(0.008) & 24.804(0.011) & 0.015(0.001) & 1.254(0.002) \\
SIS-LASSO & weak & 1300 & 0.964(0.005) & 135.572(2.041) & 184.000(0.512) & 0.331(0.005) & 0.000(0.000) & 0.914(0.003) & 0.881(0.006) & 0.086(0.003) & 127.291(1.934) & 16.718(0.121) & 0.100(0.001) & 1.357(0.004) \\
SIS-MCP & weak & 1300 & 0.931(0.007) & 118.868(2.032) & 166.000(1.186) & 0.287(0.005) & 0.000(0.000) & 0.894(0.005) & 0.832(0.008) & 0.106(0.005) & 111.688(1.917) & 17.819(0.127) & 0.090(0.001) & 1.372(0.004) \\
SIS-NLP & weak & 1300 & 0.914(0.008) & 6.227(0.234) & 3.000(0.000) & 0.094(0.002) & 0.000(0.000) & 0.430(0.010) & 0.393(0.010) & 0.570(0.010) & 3.875(0.198) & 22.648(0.059) & 0.138(0.003) & 1.237(0.003) \\
SIS-SCAD & weak & 1300 & 0.964(0.005) & 126.525(1.986) & 173.000(0.847) & 0.303(0.005) & 0.000(0.000) & 0.912(0.003) & 0.879(0.006) & 0.088(0.003) & 118.939(1.879) & 17.415(0.123) & 0.096(0.001) & 1.362(0.004) \\
\bottomrule
\end{tabular}
}
\end{table}

\subsubsection{Results}
The primary comparison fixes $n=800$ and $p=20{,}000$ and pools the 13 named scenarios within each signal regime. Table~\ref{tab:LM-primary-performance} and Figure~\ref{fig:LM-tpr-fdp-pareto} show the trade-off between recovering active variables and keeping the selected model small. At the moderate signal level, full S3VS--LASSO has TPR $0.652$ but selects about 101 variables on average and has mean FDP $0.839$. Full S3VS--NLP has somewhat lower TPR ($0.580$) but a much smaller mean model size ($21.1$), lower mean FDP ($0.317$), and lower independent-test MSE. Full S3VS--MCP and full S3VS--SCAD are substantially sparser, with mean sizes $2.1$ and $2.6$ and mean FDPs $0.036$ and $0.071$, respectively, but their TPRs are only $0.079$ and $0.092$. The figure shows that no selector is uniformly best. The nonconvex selectors have low FDP but low TPR, LASSO has higher recovery but high FDP, and NLP lies between these extremes, with its performance depending on the design structure.

\begin{figure}[htbp]
\centering
\includegraphics[width=\textwidth]{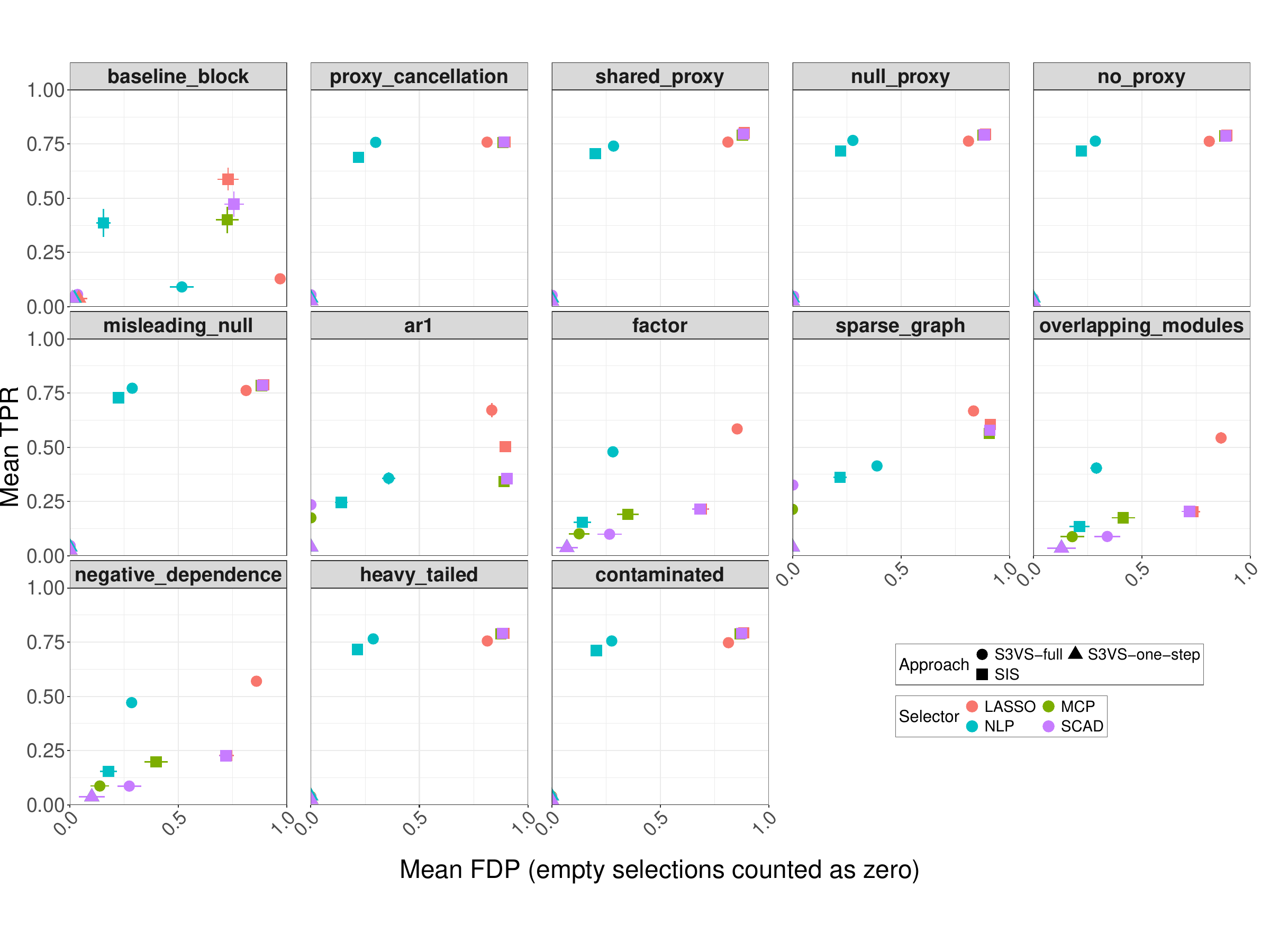}
\caption{Mean TPR and unconditional FDP at $n=800$, $p=20{,}000$, and the moderate signal level. Points are replicate means and bars are 95\% Monte Carlo intervals. Empty selections contribute zero to the unconditional FDP.}
\label{fig:LM-tpr-fdp-pareto}
\end{figure}

Recovery increased substantially with signal strength. For full S3VS--NLP, pooled TPR increases from $0.137$ under weak signals to $0.580$ under moderate signals and $0.869$ under strong signals. The corresponding full S3VS--LASSO values are $0.220$, $0.652$, and $0.898$. Figure~\ref{fig:LM-tpr-heatmap} shows that this improvement occurs across most scenarios, although the ordering of methods varies with the dependence structure. The weak-signal results also show why a small FDP alone is not sufficient: full S3VS--MCP and full S3VS--SCAD have mean model sizes below one and TPRs near zero in that regime. Figure~\ref{fig:LM-tpr-boxplots} displays the replicate-level distributions underlying these means and shows appreciable scenario-to-scenario variation, especially for correlated and non-Gaussian designs.

\begin{figure}[htbp]
\centering
\includegraphics[width=\textwidth]{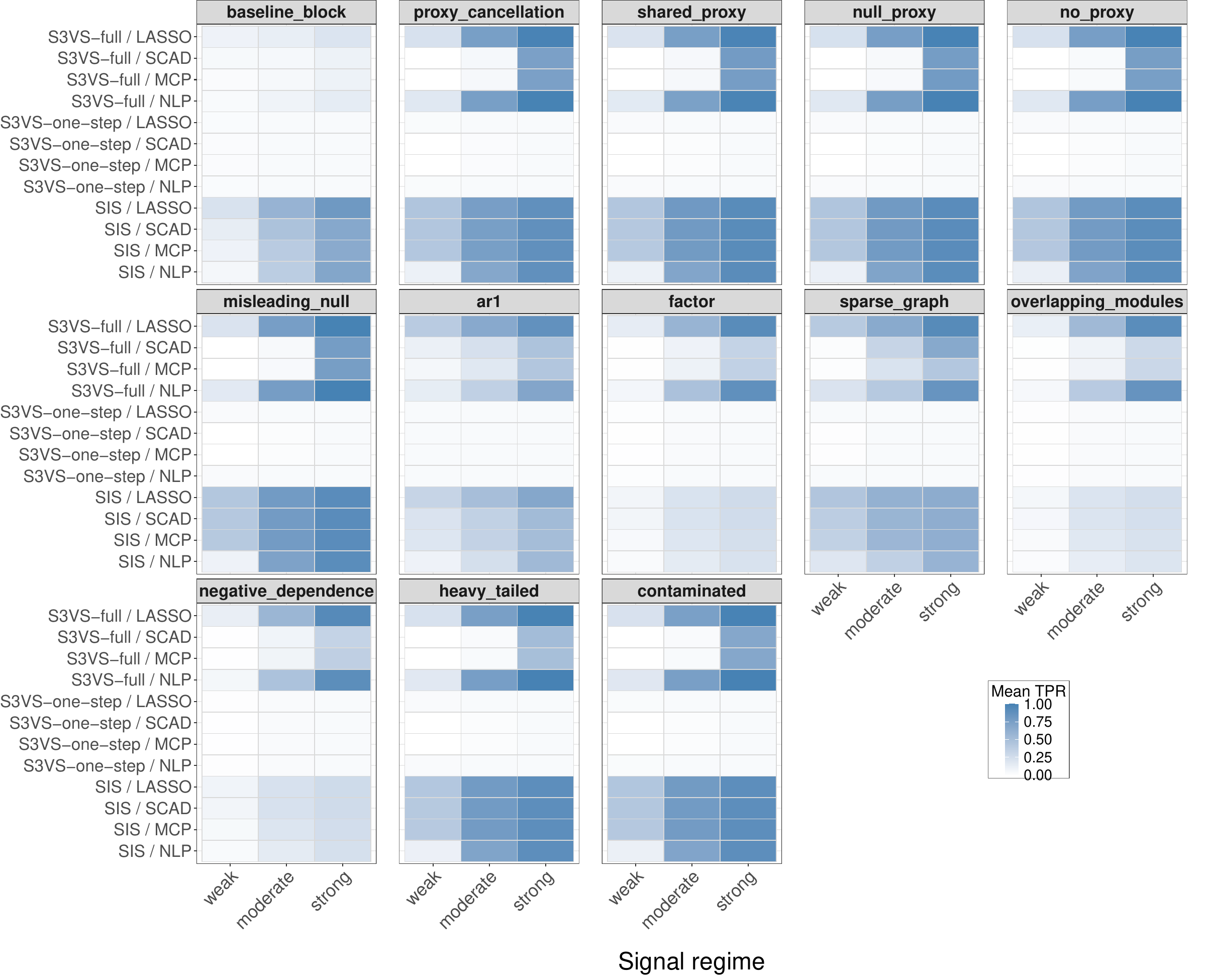}
\caption{Named-method TPR heat map at $n=800$ and $p=20{,}000$, stratified by signal regime and design scenario.}
\label{fig:LM-tpr-heatmap}
\end{figure}

\begin{figure}[htbp]
\centering
\includegraphics[width=\textwidth]{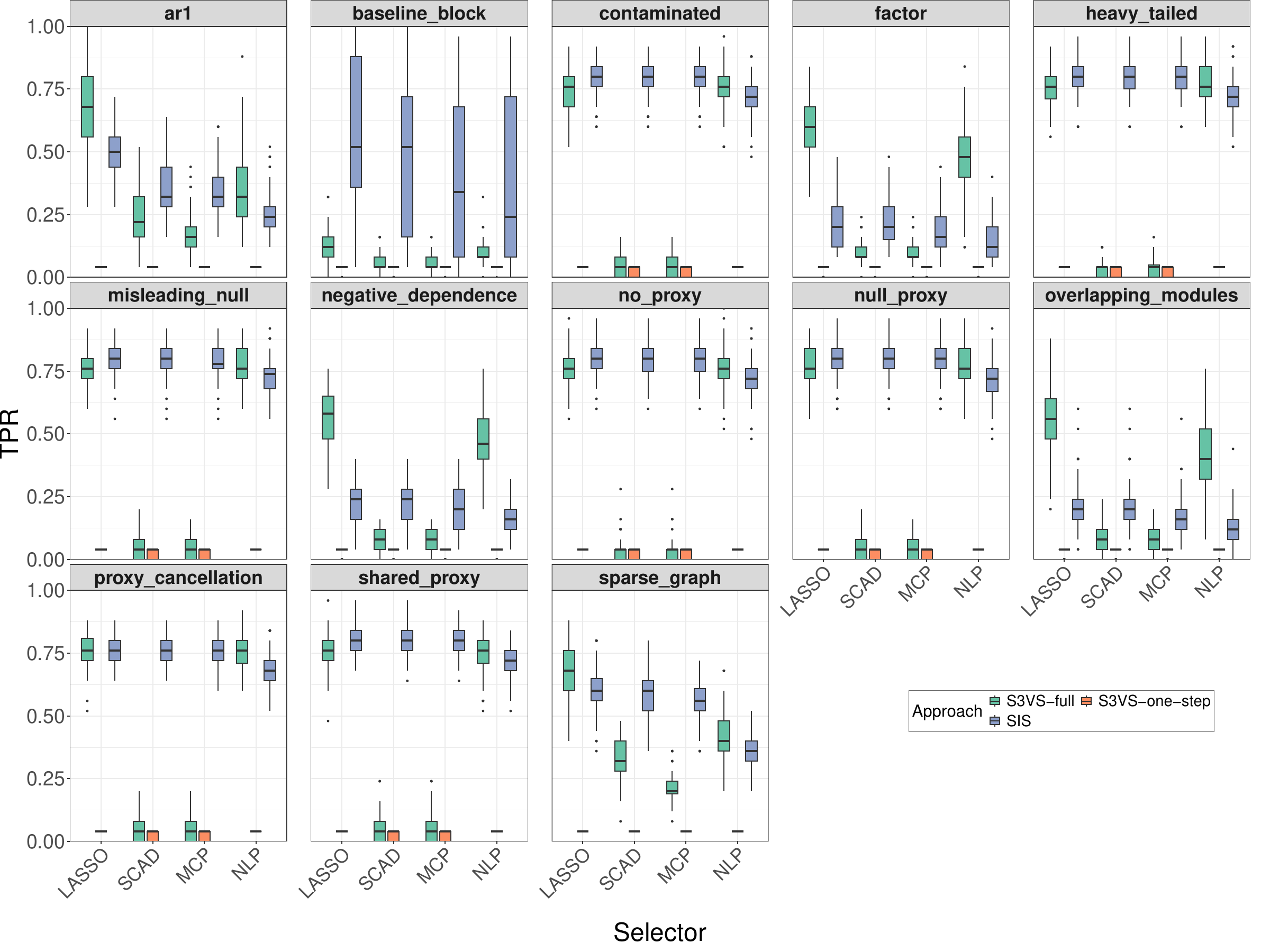}
\caption{Replicate-level TPR distributions at $n=800$, $p=20{,}000$, and the moderate signal level across the named design scenarios.}
\label{fig:LM-tpr-boxplots}
\end{figure}

Comparing first-iteration and full S3VS shows the effect of later iterations. As shown in Figure~\ref{fig:LM-selected-size}, the one-step S3VS procedures select roughly one variable on average and have TPRs between $0.008$ and $0.040$ in the primary table, whereas later S3VS iterations increase the selected model size and recovery. The SIS benchmarks also illustrate the effect of the selector: SIS--LASSO, SIS--SCAD, and SIS--MCP select more than 110 variables on average at the moderate signal level and have mean FDPs from $0.756$ to $0.840$, while SIS--NLP selects 15.8 variables with mean FDP $0.196$ and TPR $0.494$. These results distinguish the iterative S3VS search from a single screening-and-selection pass.

\begin{figure}[htbp]
\centering
\includegraphics[width=\textwidth]{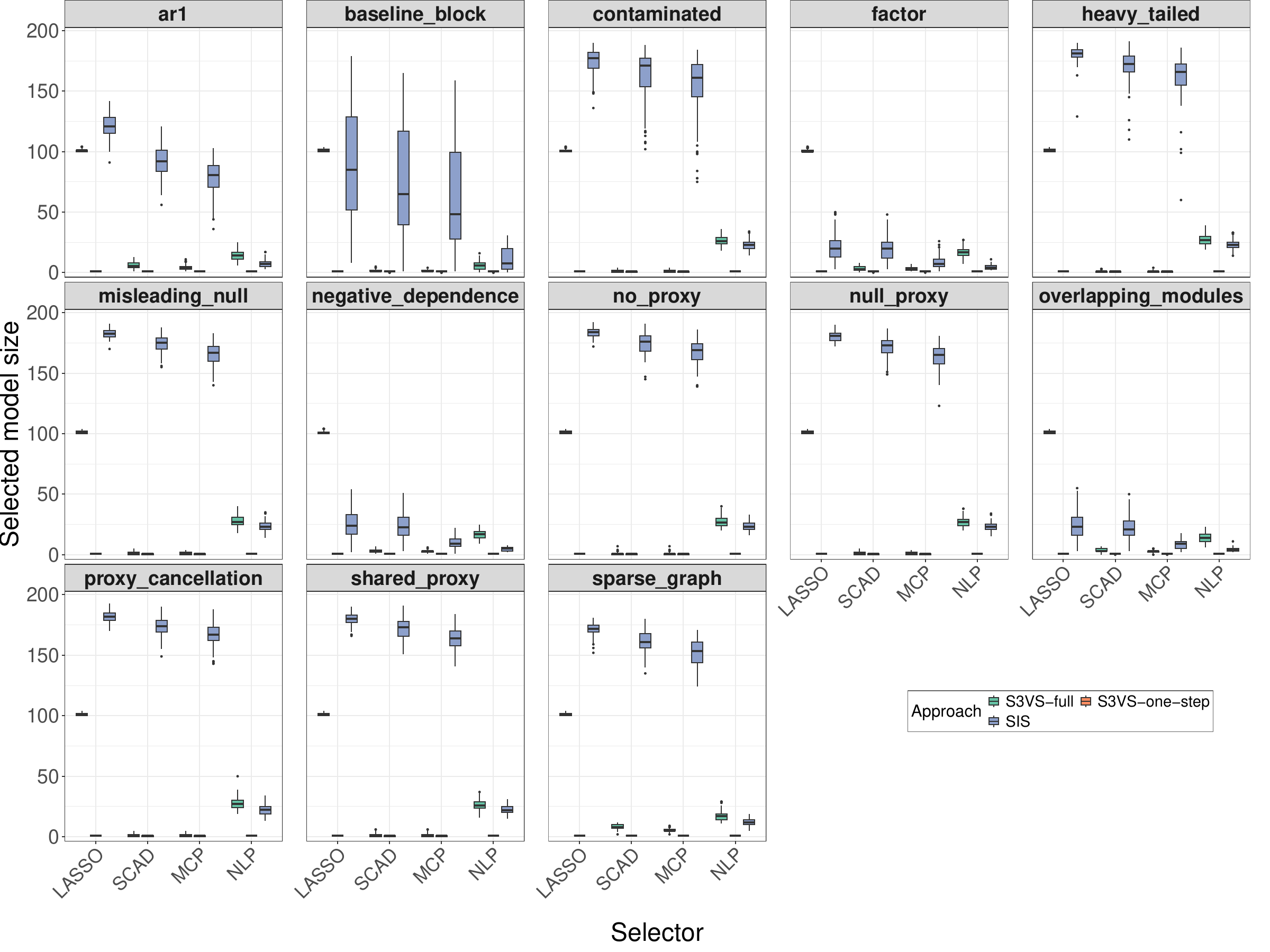}
\caption{Replicate-level selected-model sizes at $n=800$, $p=20{,}000$, and the moderate signal level across the named design scenarios.}
\label{fig:LM-selected-size}
\end{figure}

Independent prediction gives a complementary view of the selection trade-off. Figure~\ref{fig:LM-test-mse} and Table~\ref{tab:LM-primary-performance} show that full S3VS--NLP has the lowest or nearly lowest test error among the procedures in the primary comparison: its pooled MSE is $1.222$, $1.287$, and $1.266$ for weak, moderate, and strong signals, compared with $1.237$, $1.348$, and $1.768$ for SIS--NLP. The one-step procedures have test MSE close to the null-model benchmark in the moderate and strong settings because their selected sets contain little signal. Full S3VS--LASSO can recover more active variables than NLP, but its large models and high FDP do not yield a corresponding prediction advantage.

\begin{figure}[htbp]
\centering
\includegraphics[width=\textwidth]{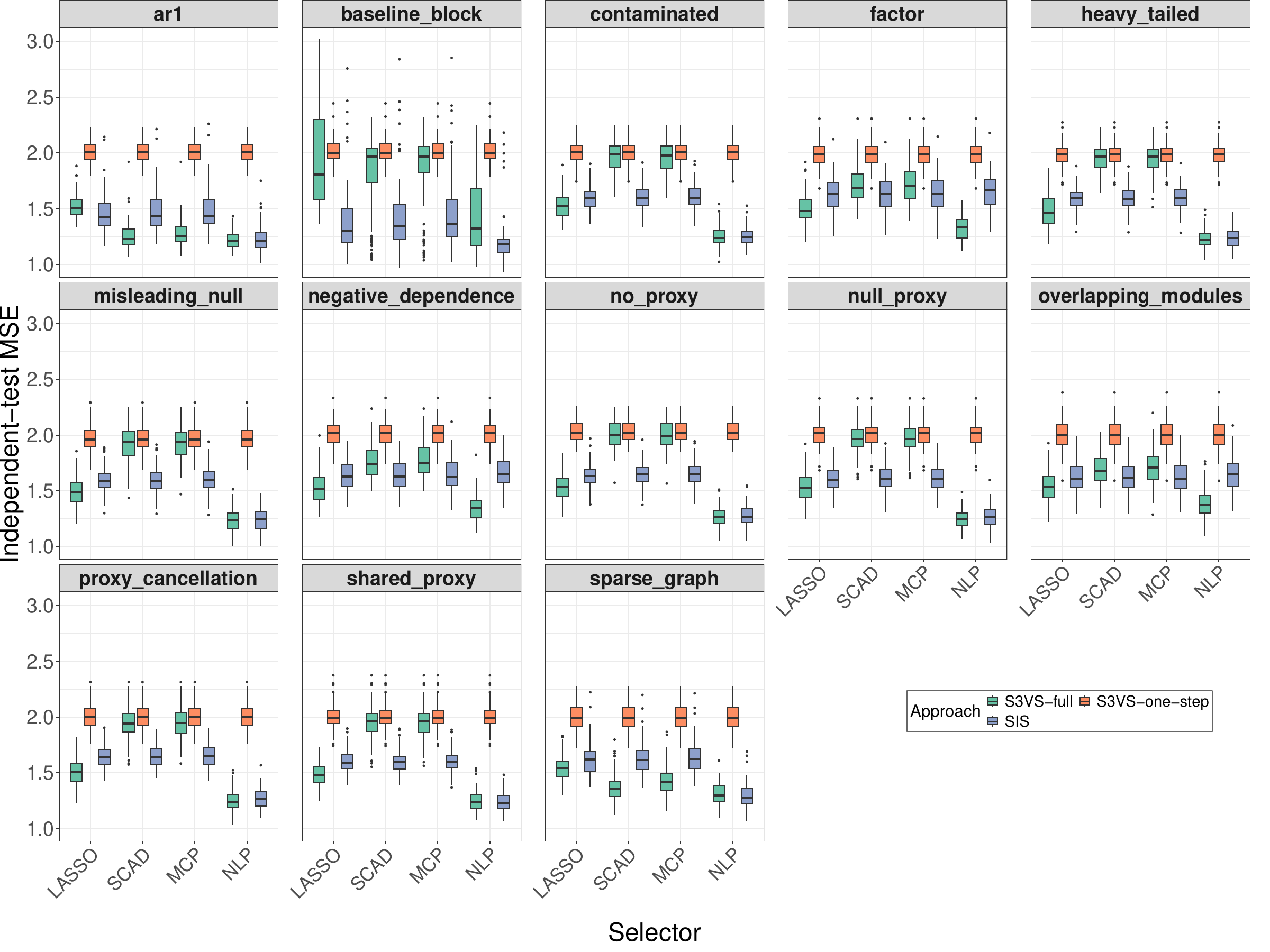}
\caption{Independent-test MSE distributions at $n=800$, $p=20{,}000$, and the moderate signal level across the named design scenarios. Each test set is independently generated from the same replicate-specific coefficient vector as its training set.}
\label{fig:LM-test-mse}
\end{figure}

Runtime generally increases with $p$ at fixed $n$, while its dependence on $n$ is not monotone because the cross-validation and optimization workloads vary across replicates. These runtime patterns are shown in Figure~\ref{fig:LM-runtime}. Full S3VS--SCAD and full S3VS--MCP are the least expensive S3VS variants, whereas full S3VS--NLP and SIS--NLP are more costly in several settings. The first-iteration S3VS results are not assigned a separate runtime because they are extracted from the corresponding full run. The supplementary tables and figures give the proxy/no-proxy results, covariance-structure results, diagnostics, coefficient-error summaries, and the complete sample-size and dimension comparisons.

\begin{figure}[htbp]
\centering
\includegraphics[width=\textwidth]{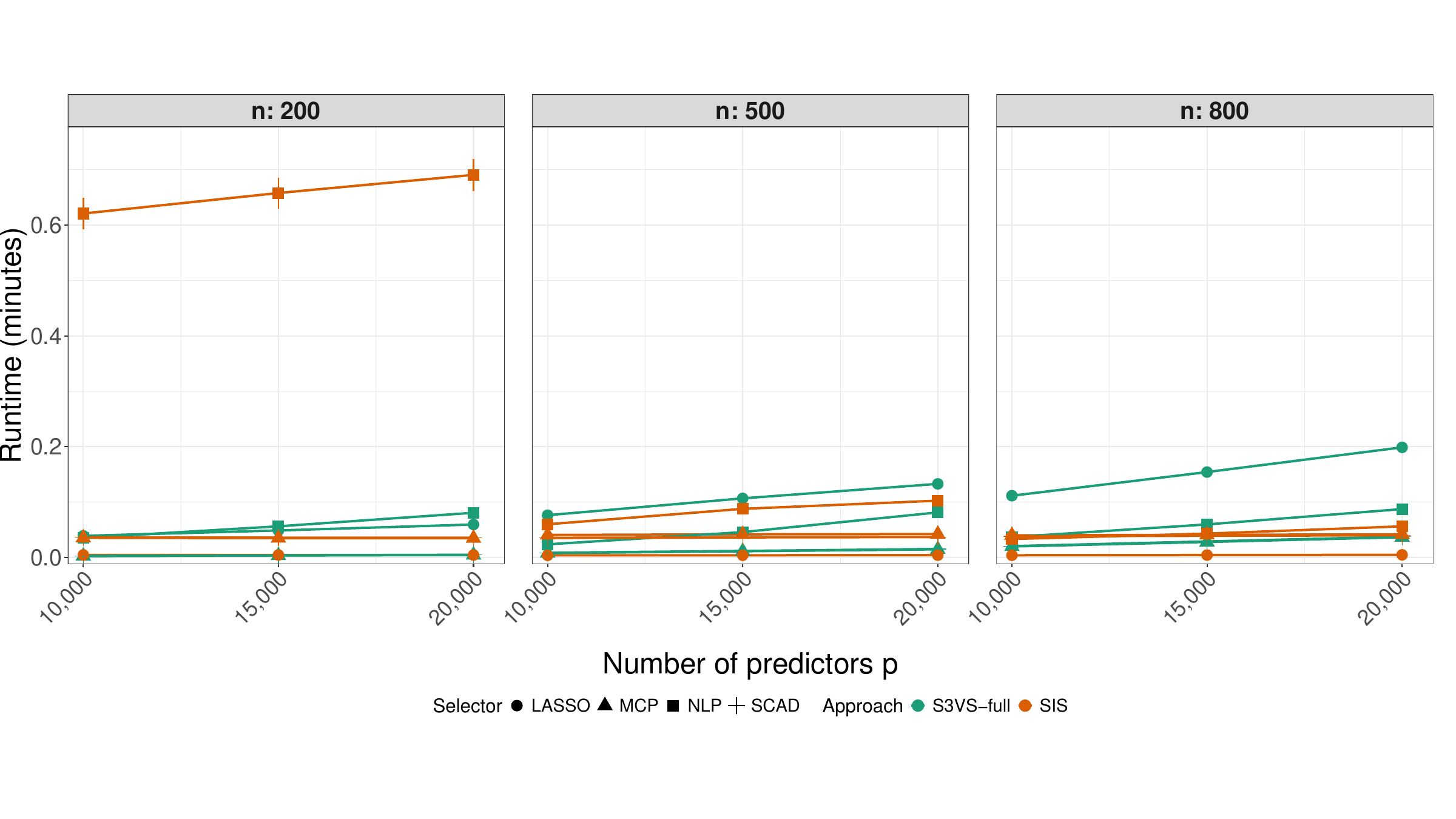}
\caption{Runtime scalability for separately timed full S3VS and one-pass SIS procedures across $n$ and $p$. First-iteration S3VS selections are derived from the shared full run and are not separately timed.}
\label{fig:LM-runtime}
\end{figure}

\subsection{Generalized Linear Models} \label{sec:simulation-glm}

\subsubsection{Simulation Design}
We evaluated the generalized-linear-model implementation using binary logistic regression. For each observation, the data-generating model was

\[   
Y_i \mid \mathbf{X}_i \sim \operatorname{Bernoulli}(\pi_i),   \qquad   \operatorname{logit}(\pi_i)   = \alpha + \mathbf{X}_i^{\mathsf T}\boldsymbol{\beta}. 
\] 

We considered sample sizes $n\in\{200,500,800\}$ and dimensions $p\in\{10{,}000,15{,}000,20{,}000\}$. Each $(n,p)$--scenario combination was replicated 100 times. A new coefficient vector, training design, independent test design, and pair of binary response vectors were generated for every replicate. The first 25 predictors were active, with independently sampled signs. The coefficient vector was scaled so that the linear predictor had standard deviation one, and the intercept was calibrated to a target marginal prevalence of 0.50.

We used two predictor structures.  In the no-proxy scenario, all predictors were independent.  In the correlated-proxy scenario, the first 100 predictors formed a block with pairwise correlation 0.60: the first 25 were active and the remaining 75 were null proxies for the active variables.  The other predictors were independent of this block.  Predictors were standardized within each generated data set.  The independent test sample was generated from the same coefficient vector and predictor structure as the training sample, so that prediction performance did not reuse the training responses.

\subsubsection{Methods and Performance Measures}
We compared S3VS with SIS followed by logistic LASSO.  The S3VS configuration used a single leading variable for the response--variable screen, a five-variable leading set for the predictor--predictor screen, liberal aggregation, conservative-from-the-beginning removal, and the LASSO selector with the one-standard-error tuning rule.  Each S3VS run used the pre-iteration selected-model-size limit $m_{\max}=100$ and the cumulative unsuccessful-iteration limit $n_{\mathrm{skip}}=3$. The comparator retained at most $\min(200,n-1)$ variables by marginal screening and then fit logistic LASSO using deviance cross-validation and the one-standard-error rule.  Both procedures were applied to the same training data within a replicate.

For each selected model, we recorded the probability of a nonempty selection, true-positive rate (TPR), false-discovery proportion (FDP), and selected-model size.  We also refit the logistic model on the selected variables and evaluated independent-test log-loss.  FDP was defined as zero when the selected set was empty. The probability of a nonempty selection was reported separately. All entries below are replicate means with Monte Carlo standard errors (MCSE).

\subsubsection{Results}
The primary comparison uses the largest sample-size and dimension setting, $n=800$ and $p=20{,}000$, and is reported in Table~\ref{tab:GLM-primary-performance}.  The corresponding comparison of mean TPR and mean unconditional FDP is shown in Figure~\ref{fig:GLM-primary-TPR-FDP}.

\begin{table}[htbp]
\centering
\small
\caption{Binary-logistic simulation performance at $n=800$ and $p=20{,}000$. Entries are means with Monte Carlo standard errors in parentheses.}
\label{tab:GLM-primary-performance}
{\fontsize{9}{10}\selectfont
\setlength{\tabcolsep}{1pt}
\begin{tabular}{llrrrrrrr}
\toprule
Scenario & Method & n & p & \shortstack[c]{P(nonempty),\\MCSE} & \shortstack[c]{TPR,\\MCSE} & \shortstack[c]{FDP,\\MCSE} & \shortstack[c]{Mean size,\\MCSE} & \shortstack[c]{Test log-loss,\\MCSE} \\
\midrule
Correlated proxy & S3VS & 800 & 20,000 & 1.000 (0.000) & 0.800 (0.029) & 0.764 (0.007) & 84.5 (2.8) & 0.664 (0.004) \\
Correlated proxy & SIS--logistic--LASSO & 800 & 20,000 & 1.000 (0.000) & 0.316 (0.014) & 0.933 (0.003) & 125.5 (2.3) & 0.862 (0.006) \\
No proxy & S3VS & 800 & 20,000 & 1.000 (0.000) & 0.155 (0.008) & 0.858 (0.007) & 28.5 (1.2) & 0.747 (0.003) \\
No proxy & SIS--logistic--LASSO & 800 & 20,000 & 1.000 (0.000) & 0.377 (0.008) & 0.949 (0.001) & 186.0 (0.3) & 0.937 (0.006) \\
\bottomrule
\end{tabular}
}
\end{table}

\begin{figure}[htbp]   
\centering   
\includegraphics[width=\textwidth]{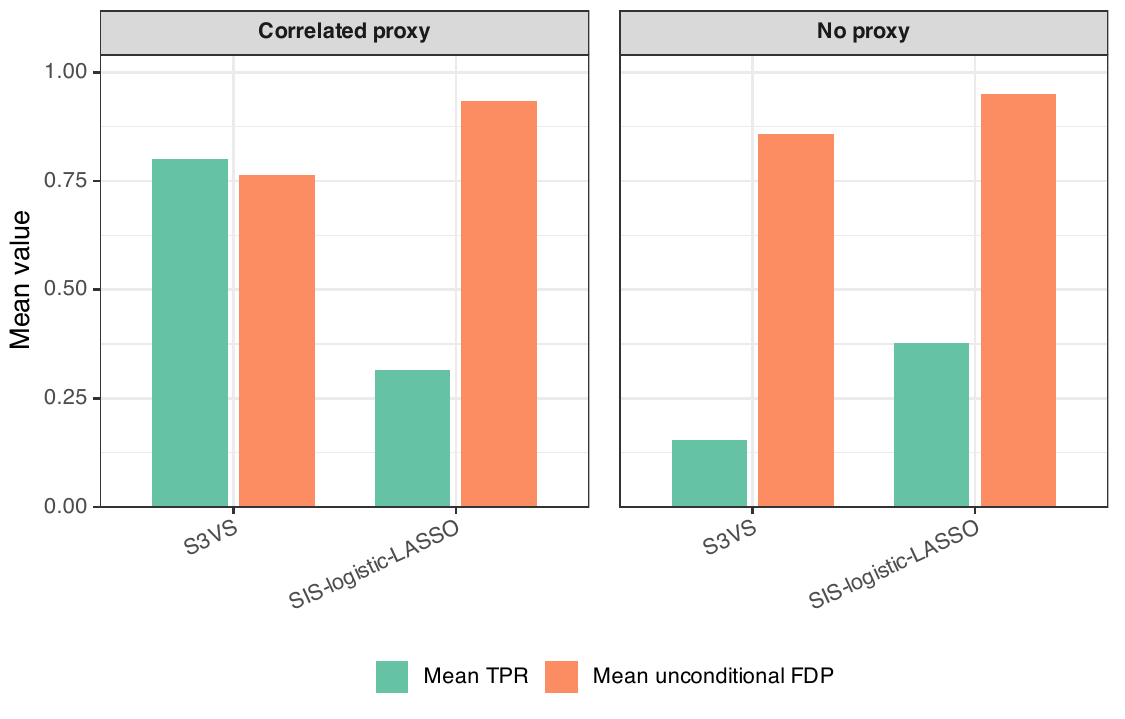}   
\caption{Binary-logistic simulation performance at $n=800$ and   $p=20{,}000$.  Bars show replicate means of TPR and unconditional FDP for   S3VS and SIS--logistic--LASSO in the correlated-proxy and   no-proxy scenarios.}   
\label{fig:GLM-primary-TPR-FDP} 
\end{figure}

In the correlated-proxy scenario, S3VS had a mean TPR of 0.800 (MCSE 0.029), compared with 0.316 (0.014) for SIS--logistic--LASSO.  It also selected a smaller model on average, 84.5 (2.8) versus 125.5 (2.3) variables, and had lower unconditional FDP, 0.764 (0.007) versus 0.933 (0.003).  These differences were accompanied by lower independent-test log-loss for S3VS, 0.664 (0.004) versus 0.862 (0.006).

The no-proxy scenario gives a different result.  SIS--logistic--LASSO had a higher TPR, 0.377 (0.008) versus 0.155 (0.008), whereas S3VS selected substantially fewer variables, 28.5 (1.2) versus 186.0 (0.3), and had lower unconditional FDP, 0.858 (0.007) versus 0.949 (0.001). Its independent-test log-loss was also lower, 0.747 (0.003) versus 0.937 (0.006). The pattern therefore depends on predictor structure. S3VS performed especially well when correlated null proxies were available. In the no-proxy setting, SIS--logistic--LASSO had higher TPR, while S3VS selected a much smaller model and had lower test log-loss. The full grid of results is provided in Supplementary Table~S9
 and Supplementary Figure~S8.

\subsection{Cox Proportional Hazards Models}
\label{sec:simulation-cox}

We also evaluated S3VS for censored survival outcomes using Cox proportional hazards models. This simulation is a computational assessment rather than a theoretical validation of the survival implementation. Because S3VS is iterative and SIS--Cox--LASSO is single-pass, the primary comparison limits S3VS to one iteration. Full-iteration S3VS is reported as a secondary analysis in Supplementary Table~S10
 and Supplementary Figure~S9. 
 Complete results over the $(n,p)$ grid are given in Supplementary Table~S11
  and Supplementary Figure~S10.

\subsubsection{Simulation design and implementation}
The simulation considered sample sizes $n\in\{200,500,800\}$, predictor dimensions $p\in\{10{,}000,15{,}000,20{,}000\}$, and two predictor scenarios.  In the no-proxy scenario, the predictors were independent standard Gaussian variables. In the correlated-proxy scenario, the first 25 predictors were active, and the next 75 were null proxies correlated with the active variables through a common Gaussian factor. The remaining predictors were independent Gaussian variables.

For every replicate and every $(n,p)$--scenario combination, we generated a new training design, an independent test design, and a new coefficient vector.  The first 25 coefficients were assigned independent random signs and were rescaled so that the standard deviation of the training linear predictor was 0.60.  Thus, the simulation evaluates performance across replicates rather than repeatedly generating outcomes from one fixed design and one fixed coefficient vector.  If $\boldsymbol{x}_i$ denotes a row of the design matrix, the event time was generated from
\[
 T_i\mid\boldsymbol{x}_i\sim
 \operatorname{Exponential}\!\left\{0.10\exp(\boldsymbol{x}_i^\mathsf{T}
 \boldsymbol{\beta})\right\}.
\]
An independent exponential censoring time $C_i$ was generated, and the observed outcome was
\[
 Y_i=\min(T_i,C_i),\qquad
 \Delta_i=I(T_i\leq C_i).
\]
The censoring rate was calibrated separately for each generated training design to target a censoring proportion of 0.30 and was then used for the corresponding test sample.  Training event and censoring proportions were recorded for every replicate.  We used 100 independent replicates for each cell of the simulation grid.

S3VS used top-$k$ outcome screening with $k=1$, the relative-to-maximum leading-set rule with $\pi_B=95$ (equivalently, the fixed threshold $\tau_B=0.95$ because the predictor--leader score is absolute Pearson correlation and the leader has score one), liberal aggregation of selected variables, and conservative-from-the-beginning removal of variables. Within each leading set, a Cox LASSO model was fitted using the one-standard-error value from ten-fold cross-validation. The full iterative implementation used the pre-iteration selected-model-size limit $m_{\max}=100$ and the cumulative unsuccessful-iteration limit $n_{\mathrm{skip}}=3$. For the primary comparison, the same S3VS settings were used with $m_{\max}=1$ and $n_{\mathrm{skip}}=1$, thereby restricting S3VS to one iteration.

The comparator ranked predictors by the absolute marginal Cox Wald statistic from separate univariate Cox proportional-hazards fits using the observed survival times and event indicators. Specifically, for candidate predictor $j$, the screening statistic was $|z_j|$, where $z_j=\widehat{\beta}_{j,\mathrm{marg}}/\widehat{\operatorname{se}}(\widehat{\beta}_{j,\mathrm{marg}})$ is the Wald statistic from the univariate Cox fit. It retained up to $\min\{200,n_{\mathrm{train}}-1,p\}$ highest-ranked predictors and fitted a Cox LASSO model to this screened set using ten-fold cross-validation and the one-standard-error tuning value. This comparator-specific Wald screening rule is distinct from the marginal-deviance score used by S3VS in \eqref{eq:cox-marginal-score}. We refer to this procedure as SIS--Cox--LASSO.

For both procedures, the selected variables were evaluated against the known active set. We recorded the probability of a nonempty selected model, the true positive rate (TPR), the mean unconditional false discovery proportion (FDP), the selected-model size, and the independent-test concordance index. For a nonempty selected model, FDP was the number of selected null variables divided by the total number selected. An empty model contributed zero to the unconditional FDP average. The test concordance index, which measures independent-test risk-ranking ability, was computed after a post-selection Cox refit on the training observations, using Cox LASSO when at least two variables were selected and a univariate Cox refit when exactly one variable was selected, and evaluating the resulting risk scores on the independent test sample. All reported uncertainties are Monte Carlo standard errors across the 100 replicates.

\subsubsection{Primary one-iteration comparison}
The primary comparison is summarized at the largest setting, $(n,p)=(800,20{,}000)$, in Table~\ref{tab:surv-one-iteration-primary-performance} and Figure~\ref{fig:surv-one-iteration-tpr-fdp}. The results show a recovery--purity trade-off. In the correlated-proxy setting, both procedures selected a nonempty model in all replicates. S3VS had TPR $0.034$ and unconditional FDP $0.150$, whereas SIS--Cox--LASSO had higher TPR ($0.242$) but much higher unconditional FDP ($0.932$) and a much larger mean selected model ($98.0$ versus $1.0$ variables). The test C-index was slightly higher for S3VS ($0.585$ versus $0.572$). In the no-proxy setting, both procedures again selected nonempty models in all replicates. SIS--Cox--LASSO had higher TPR ($0.396$ versus $0.018$), but also much higher unconditional FDP ($0.937$ versus $0.560$) and a much larger mean selected model ($157.6$ versus $1.0$); its test C-index ($0.531$) was slightly higher than that of S3VS ($0.512$). Thus, one-iteration S3VS was substantially more conservative in model size and false discoveries, whereas SIS--Cox--LASSO recovered more active variables through much larger, less-pure models. The predictive difference was small and favored different procedures across the two structures.

The predictive comparison is therefore structure-dependent (Supplementary Table~S12
 and Supplementary Figure~S11. 
 Across the full grid, S3VS had higher test concordance than SIS--Cox--LASSO in every correlated-proxy configuration, with C-indices ranging from $0.558$ to $0.585$ for S3VS and from $0.523$ to $0.572$ for SIS--Cox--LASSO. In the no-proxy configurations, S3VS ranged from $0.502$ to $0.522$, whereas SIS--Cox--LASSO ranged from $0.497$ to $0.539$; S3VS had the higher C-index at $n=200$, while SIS--Cox--LASSO had the higher C-index at $n=500$ and $n=800$. The one-iteration results therefore show a modest predictive advantage for S3VS when correlated proxies are present, but no uniform advantage without them. The full iterative S3VS analysis is treated as secondary because it uses a larger iteration budget than the single-pass comparator.

\begin{table}[htbp]
\centering
\caption{One-iteration Cox-model simulation performance at $n = 800$ and $p = 20{,}000$. Entries are means with Monte Carlo standard errors in parentheses; FDP is the unconditional FDP with empty selections counted as zero.}
\label{tab:surv-one-iteration-primary-performance}
{\fontsize{9}{10}\selectfont
\setlength{\tabcolsep}{1pt}
\begin{tabular}{llrrrrrrr}
\toprule
Scenario & Method & n & p & \shortstack[c]{P(nonempty),\\MCSE} & \shortstack[c]{TPR,\\MCSE} & \shortstack[c]{FDP,\\MCSE} & \shortstack[c]{Mean size,\\MCSE} & \shortstack[c]{Test C-index,\\MCSE} \\
\midrule
Correlated proxy & S3VS (one iteration) & 800 & 20,000 & 1.000 (0.000) & 0.034 (0.001) & 0.150 (0.036) & 1.0 (0.0) & 0.585 (0.003) \\
Correlated proxy & SIS--Cox--LASSO (one iteration) & 800 & 20,000 & 1.000 (0.000) & 0.242 (0.008) & 0.932 (0.003) & 98.0 (2.5) & 0.572 (0.004) \\
No proxy & S3VS (one iteration) & 800 & 20,000 & 1.000 (0.000) & 0.018 (0.002) & 0.560 (0.050) & 1.0 (0.0) & 0.512 (0.002) \\
No proxy & SIS--Cox--LASSO (one iteration) & 800 & 20,000 & 1.000 (0.000) & 0.396 (0.009) & 0.937 (0.001) & 157.6 (0.8) & 0.531 (0.002) \\
\bottomrule
\end{tabular}
}
\end{table}

\begin{figure}[htbp]
\centering
\includegraphics[width=\textwidth]{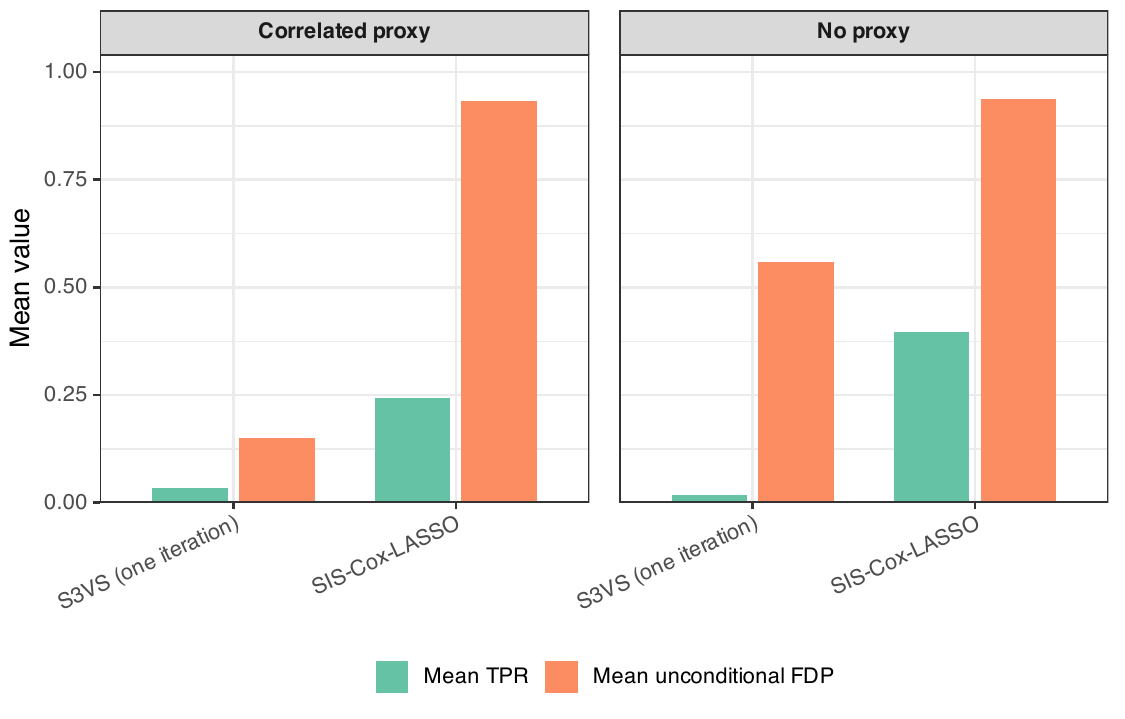}
\caption{Primary one-iteration comparison of S3VS and SIS--Cox--LASSO at $n = 800$ and $p = 20{,}000$. Bars show mean TPR and mean unconditional FDP in the correlated-proxy and no-proxy scenarios. Empty selections contribute zero to the unconditional FDP.}
\label{fig:surv-one-iteration-tpr-fdp}
\end{figure}

\section{Ovarian Cancer Data Analysis}
\label{sec:real-data-analysis}

\subsection{Data Sources and Preprocessing}
We evaluated S3VS using clinically annotated ovarian-cancer gene-expression data from the \texttt{curatedOvarianData} Bioconductor package \citep{ganzfried_et_al_2013}. The discovery analysis used \texttt{TCGA\_eset}, containing Affymetrix HT Human Genome U133A expression measurements and harmonized clinical annotations from The Cancer Genome Atlas ovarian-cancer study \citep{tcga_2011}. Overall survival was defined from \texttt{days\_to\_death} and \texttt{vital\_status}, with deceased patients coded as events. Patients with missing survival information, the documented expression outliers \texttt{TCGA.24.1927} and \texttt{TCGA.31.1955}, and probes mapped to multiple genes were excluded.

The clinical predictors were age at initial pathologic diagnosis, tumor stage, tumor grade, and debulking status. Within each outer training fold, gene-expression variables were standardized using the training-fold means and standard deviations, and variables with nonfinite or zero standard deviation were removed. The same transformation was applied to the corresponding held-out fold. Categorical variables were indicator coded using the training-fold levels, and the competing predictor sets were evaluated on their common complete cases.

External validation used \texttt{GSE9891\_eset}, an Affymetrix Human Genome U133 Plus 2.0 study of 285 serous and endometrioid ovarian tumors \citep{tothill_et_al_2008}. Patients with missing survival time or event status were excluded, and all genes selected from TCGA were available in GSE9891. For external validation, expression variables in the full TCGA cohort were standardized using the TCGA means and standard deviations, and the same TCGA-derived centering and scaling constants were then applied to the corresponding GSE9891 variables. GSE9891 outcomes were used only to define the evaluable validation set and to compute performance measures. They were not used for variable screening, tuning, or model fitting.

\subsection{S3VS Selection and Prediction Framework}
Using these preprocessed data, patients were assigned to five event-stratified outer folds. Within each outer training fold, we compared a one-iteration S3VS specification, a full-iteration S3VS specification, and a SIS--Cox--LASSO comparator. S3VS used top-$k$ leading-variable screening with $k=1$, top-$q$ leading sets with $q=5$, Cox LASSO within leading sets at $\lambda_{\min}$, conservative aggregation of selected variables, and conservative-from-the-beginning aggregation of nonselected variables. The one-iteration and full-iteration specifications used $(m_{\text{max}},\texttt{nskip})=(1,1)$ and $(100,3)$, respectively, whereas the comparator screened the 200 genes having the largest absolute marginal Cox Wald statistics.

After completing the five-fold TCGA analysis, each screening procedure was applied once to the full TCGA cohort to obtain the predictor set used for external validation. Using these full-TCGA selections, we fitted the post-selection Cox models in TCGA, with the penalty parameter for the penalized models chosen by five-fold cross-validation within TCGA. The resulting predictor sets, fitted coefficients, and tuning choices were then fixed before evaluation in GSE9891.

For each selected predictor set, we fitted an ordinary Cox model and Cox LASSO models at $\lambda_{\min}$ and $\lambda_{1\mathrm{se}}$, with clinical variables left unpenalized and gene coefficients penalized. The primary results in Table~\ref{tab:ovarian-internal-primary}, Table~\ref{tab:ovarian-external-primary}, and Figure~\ref{fig:ovarian-primary-performance} use the more conservative $\lambda_{1\mathrm{se}}$ fit. Complete internal results for all three fitting specifications are given in Table~\ref{tab:ovarian-internal-all}. Performance was assessed using Harrell's C-index, time-dependent AUC at 1, 3, and 5 years, and the integrated Brier score (IBS). Higher C-index and AUC values indicate better discrimination, whereas lower IBS indicates lower overall prediction error.

\begin{table}[htbp]
\centering
\small
\caption{Five-fold outer validation in TCGA for the primary penalized Cox specification. Entries are means with fold standard deviations in parentheses.}
\label{tab:ovarian-internal-primary}
{\fontsize{8.4}{9.4}\selectfont
\setlength{\tabcolsep}{1pt}
\begin{tabular}{lrrrrrrrr}
\toprule
Method & Predictors & \shortstack{Screened\\genes} & \shortstack{Final\\genes} & C-index & AUC 1 year & AUC 3 years & AUC 5 years & IBS
\\
\midrule
Clinical baseline & Clinical only & 0.0 & -- & 0.627 (0.026) & 0.697 (0.122) & 0.631 (0.057) & 0.590 (0.078) & 0.172 (0.008)
\\
S3VS one iteration & Clinical + S3VS one iteration genes & 3.2 & 0.0 & 0.624 (0.027) & 0.698 (0.126) & 0.624 (0.061) & 0.595 (0.085) & 0.174 (0.009)
\\
S3VS one iteration & S3VS one iteration genes only & 3.2 & 0.8 & 0.516 (0.070) & 0.514 (0.195) & 0.525 (0.082) & 0.561 (0.075) & 0.180 (0.009)
\\
S3VS full iteration & Clinical + S3VS full iteration genes & 28.6 & 17.6 & 0.640 (0.052) & 0.754 (0.098) & 0.630 (0.054) & 0.601 (0.119) & 0.182 (0.018)
\\
S3VS full iteration & S3VS full iteration genes only & 28.6 & 18.6 & 0.572 (0.067) & 0.672 (0.132) & 0.552 (0.105) & 0.524 (0.119) & 0.198 (0.021)
\\
SIS--Cox--LASSO & Clinical + SIS--Cox--LASSO genes & 200.0 & 9.4 & 0.617 (0.037) & 0.732 (0.095) & 0.612 (0.063) & 0.561 (0.043) & 0.179 (0.010)
\\
SIS--Cox--LASSO & SIS--Cox--LASSO genes only & 200.0 & 25.8 & 0.518 (0.017) & 0.569 (0.107) & 0.510 (0.033) & 0.464 (0.069) & 0.192 (0.011)
\\
\bottomrule
\end{tabular}
}
\end{table}

\begin{table}[htbp]
\centering
\small
\caption{Locked external validation in GSE9891 using models trained on the full TCGA cohort. The TCGA-selected predictor sets and fitted models were fixed before external evaluation, and the TCGA-derived centering and scaling constants were applied to the corresponding GSE9891 expression variables.}
\label{tab:ovarian-external-primary}
{\fontsize{8.4}{9.4}\selectfont
\setlength{\tabcolsep}{1pt}
\begin{tabular}{lrrrrrrrr}
\toprule
Method & Predictors & \shortstack{Screened\\genes} & \shortstack{Final\\genes} & C-index & AUC 1 year & AUC 3 years & AUC 5 years & IBS
\\
\midrule
Clinical baseline & Clinical only & 0 & -- & 0.668 & 0.671 & 0.713 & 0.659 & 0.165
\\
S3VS one iteration & S3VS one iteration genes only & 5 & 0 & 0.500 & 0.500 & 0.500 & 0.500 & 0.180
\\
S3VS one iteration & Clinical + S3VS one iteration genes & 5 & 0 & 0.666 & 0.697 & 0.705 & 0.641 & 0.167
\\
S3VS full iteration & S3VS full iteration genes only & 48 & 28 & 0.563 & 0.637 & 0.521 & 0.644 & 0.298
\\
S3VS full iteration & Clinical + S3VS full iteration genes & 48 & 26 & 0.645 & 0.730 & 0.631 & 0.644 & 0.267
\\
SIS--Cox--LASSO & SIS--Cox--LASSO genes only & 200 & 32 & 0.518 & 0.578 & 0.501 & 0.522 & 0.256
\\
SIS--Cox--LASSO & Clinical + SIS--Cox--LASSO genes & 200 & 14 & 0.672 & 0.736 & 0.706 & 0.679 & 0.176
\\
\bottomrule
\end{tabular}
}
\end{table}

\begin{figure}[t]
\centering
\includegraphics[width=0.96\textwidth]{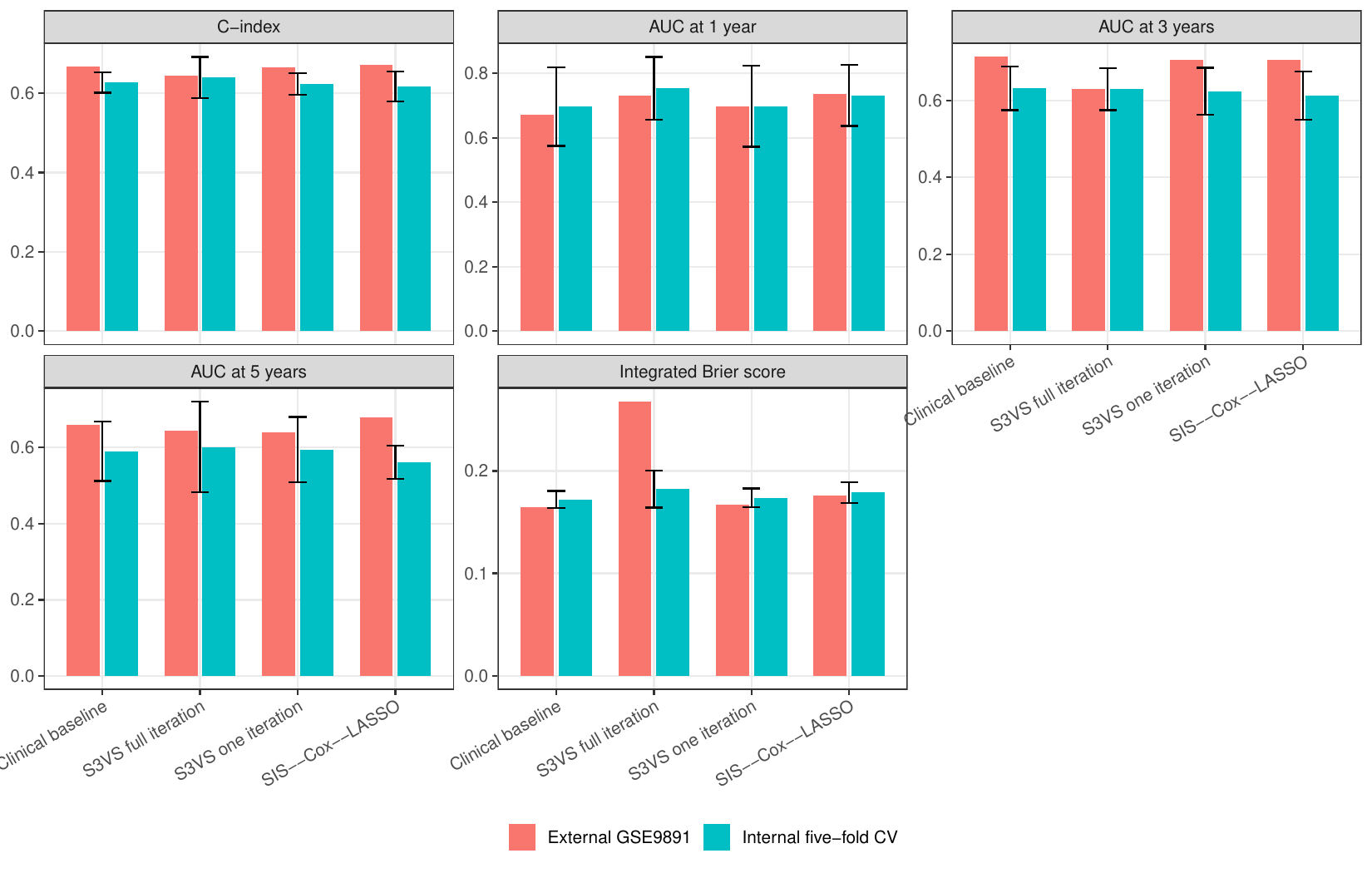}
\caption{C-index comparison for the primary $\lambda_{1\mathrm{se}}$ Cox specification. Error bars represent outer-fold standard deviations for internal validation. The external-validation cohort contributes a single estimate for each method.}
\label{fig:ovarian-primary-performance}
\end{figure}

\begin{table}
\centering
\small
\caption{Complete five-fold outer-validation results in TCGA. Entries are means with fold standard deviations in parentheses.}
\label{tab:ovarian-internal-all}
{\fontsize{7}{8}\selectfont
\setlength{\tabcolsep}{1pt}
\begin{tabular}{lrrrrrrrrr}
\toprule
Method & Predictors & Fitting & \shortstack{Screened\\genes} & \shortstack{Final\\genes} & C-index & AUC 1 year & AUC 3 years & AUC 5 years & IBS
\\
\midrule
Clinical baseline & Clinical only & Ordinary Cox & 0.0 & 0.0 & 0.627 (0.026) & 0.697 (0.122) & 0.631 (0.057) & 0.590 (0.078) & 0.172 (0.008)
\\
Clinical baseline & Clinical only & Penalized Cox: lambda.min & 0.0 & -- & 0.627 (0.026) & 0.697 (0.122) & 0.631 (0.057) & 0.590 (0.078) & 0.172 (0.008)
\\
Clinical baseline & Clinical only & Penalized Cox: lambda.1se & 0.0 & -- & 0.627 (0.026) & 0.697 (0.122) & 0.631 (0.057) & 0.590 (0.078) & 0.172 (0.008)
\\
S3VS one iteration & Clinical + S3VS one iteration genes & Ordinary Cox & 3.2 & 3.2 & 0.624 (0.031) & 0.700 (0.131) & 0.619 (0.037) & 0.625 (0.046) & 0.174 (0.008)
\\
S3VS one iteration & Clinical + S3VS one iteration genes & Penalized Cox: lambda.min & 3.2 & 3.2 & 0.630 (0.021) & 0.713 (0.112) & 0.620 (0.044) & 0.638 (0.044) & 0.173 (0.008)
\\
S3VS one iteration & Clinical + S3VS one iteration genes & Penalized Cox: lambda.1se & 3.2 & 0.0 & 0.624 (0.027) & 0.698 (0.126) & 0.624 (0.061) & 0.595 (0.085) & 0.174 (0.009)
\\
S3VS one iteration & S3VS one iteration genes only & Ordinary Cox & 3.2 & 3.2 & 0.540 (0.057) & 0.573 (0.189) & 0.515 (0.086) & 0.566 (0.115) & 0.185 (0.012)
\\
S3VS one iteration & S3VS one iteration genes only & Penalized Cox: lambda.min & 3.2 & 3.2 & 0.536 (0.060) & 0.565 (0.189) & 0.514 (0.086) & 0.559 (0.114) & 0.185 (0.012)
\\
S3VS one iteration & S3VS one iteration genes only & Penalized Cox: lambda.1se & 3.2 & 0.8 & 0.516 (0.070) & 0.514 (0.195) & 0.525 (0.082) & 0.561 (0.075) & 0.180 (0.009)
\\
S3VS full iteration & Clinical + S3VS full iteration genes & Ordinary Cox & 28.6 & 28.6 & 0.604 (0.072) & 0.703 (0.159) & 0.592 (0.086) & 0.540 (0.152) & 0.214 (0.034)
\\
S3VS full iteration & Clinical + S3VS full iteration genes & Penalized Cox: lambda.min & 28.6 & 23.8 & 0.615 (0.067) & 0.729 (0.134) & 0.598 (0.082) & 0.567 (0.119) & 0.200 (0.025)
\\
S3VS full iteration & Clinical + S3VS full iteration genes & Penalized Cox: lambda.1se & 28.6 & 17.6 & 0.640 (0.052) & 0.754 (0.098) & 0.630 (0.054) & 0.601 (0.119) & 0.182 (0.018)
\\
S3VS full iteration & S3VS full iteration genes only & Ordinary Cox & 28.6 & 28.6 & 0.571 (0.069) & 0.654 (0.150) & 0.554 (0.092) & 0.508 (0.155) & 0.227 (0.036)
\\
S3VS full iteration & S3VS full iteration genes only & Penalized Cox: lambda.min & 28.6 & 23.6 & 0.573 (0.069) & 0.656 (0.140) & 0.559 (0.097) & 0.515 (0.151) & 0.216 (0.031)
\\
S3VS full iteration & S3VS full iteration genes only & Penalized Cox: lambda.1se & 28.6 & 18.6 & 0.572 (0.067) & 0.672 (0.132) & 0.552 (0.105) & 0.524 (0.119) & 0.198 (0.021)
\\
SIS--Cox--LASSO & Clinical + SIS--Cox--LASSO genes & Ordinary Cox & 200.0 & 200.0 & 0.521 (0.063) & 0.526 (0.138) & 0.542 (0.102) & 0.515 (0.078) & 0.336 (0.031)
\\
SIS--Cox--LASSO & Clinical + SIS--Cox--LASSO genes & Penalized Cox: lambda.min & 200.0 & 48.8 & 0.612 (0.031) & 0.677 (0.104) & 0.628 (0.053) & 0.529 (0.109) & 0.187 (0.011)
\\
SIS--Cox--LASSO & Clinical + SIS--Cox--LASSO genes & Penalized Cox: lambda.1se & 200.0 & 9.4 & 0.617 (0.037) & 0.732 (0.095) & 0.612 (0.063) & 0.561 (0.043) & 0.179 (0.010)
\\
SIS--Cox--LASSO & SIS--Cox--LASSO genes only & Ordinary Cox & 200.0 & 200.0 & 0.494 (0.074) & 0.490 (0.168) & 0.505 (0.106) & 0.475 (0.106) & 0.349 (0.044)
\\
SIS--Cox--LASSO & SIS--Cox--LASSO genes only & Penalized Cox: lambda.min & 200.0 & 55.0 & 0.556 (0.050) & 0.579 (0.125) & 0.573 (0.105) & 0.486 (0.124) & 0.196 (0.017)
\\
SIS--Cox--LASSO & SIS--Cox--LASSO genes only & Penalized Cox: lambda.1se & 200.0 & 25.8 & 0.518 (0.017) & 0.569 (0.107) & 0.510 (0.033) & 0.464 (0.069) & 0.192 (0.011)
\\
\bottomrule
\end{tabular}
}
\end{table}

\subsection{Selection Stability}
Selection was not fully stable across the five outer training folds. Under full-iteration S3VS, \textit{SLAMF7} and \textit{PLA2G2D} were selected in four folds, \textit{PTPRCAP} and \textit{CCR2} were selected in three folds, and \textit{ARHGEF26}, \textit{CHGB}, \textit{DAP}, \textit{HOXB1}, \textit{LAX1}, \textit{MYH7B}, \textit{PYY}, \textit{SEC14L5}, and \textit{SRY} were selected in two folds. Under the one-iteration specification, \textit{PLA2G2D}, \textit{PTPRCAP}, and \textit{SLAMF7} were selected in three folds and \textit{CCR2} was selected in two folds. These results are summarized in Table~\ref{tab:ovarian-selection-stability}, and the full-iteration recurrence pattern is displayed in Figure~\ref{fig:ovarian-selection-stability}. Because no gene was selected in all five folds, the recurrent genes should be regarded as exploratory candidates rather than a definitive prognostic signature.

\begin{table}[htbp]
\centering
\small
\caption{Genes selected in at least two of the five outer TCGA training folds. The table describes selection stability, not a definitive biological signature.}
\label{tab:ovarian-selection-stability}
\begin{tabular}{llr}
\toprule
Method & Gene & Outer folds selected
\\
\midrule
S3VS full iteration & PLA2G2D & 4
\\
S3VS full iteration & SLAMF7 & 4
\\
S3VS full iteration & CCR2 & 3
\\
S3VS full iteration & PTPRCAP & 3
\\
S3VS full iteration & ARHGEF26 & 2
\\
S3VS full iteration & CHGB & 2
\\
S3VS full iteration & DAP & 2
\\
S3VS full iteration & HOXB1 & 2
\\
S3VS full iteration & LAX1 & 2
\\
S3VS full iteration & MYH7B & 2
\\
S3VS full iteration & PYY & 2
\\
S3VS full iteration & SEC14L5 & 2
\\
S3VS full iteration & SRY & 2
\\
S3VS one iteration & PLA2G2D & 3
\\
S3VS one iteration & PTPRCAP & 3
\\
S3VS one iteration & SLAMF7 & 3
\\
S3VS one iteration & CCR2 & 2
\\
\bottomrule
\end{tabular}
\end{table}

\begin{figure}[t]
\centering
\includegraphics[width=0.82\textwidth]{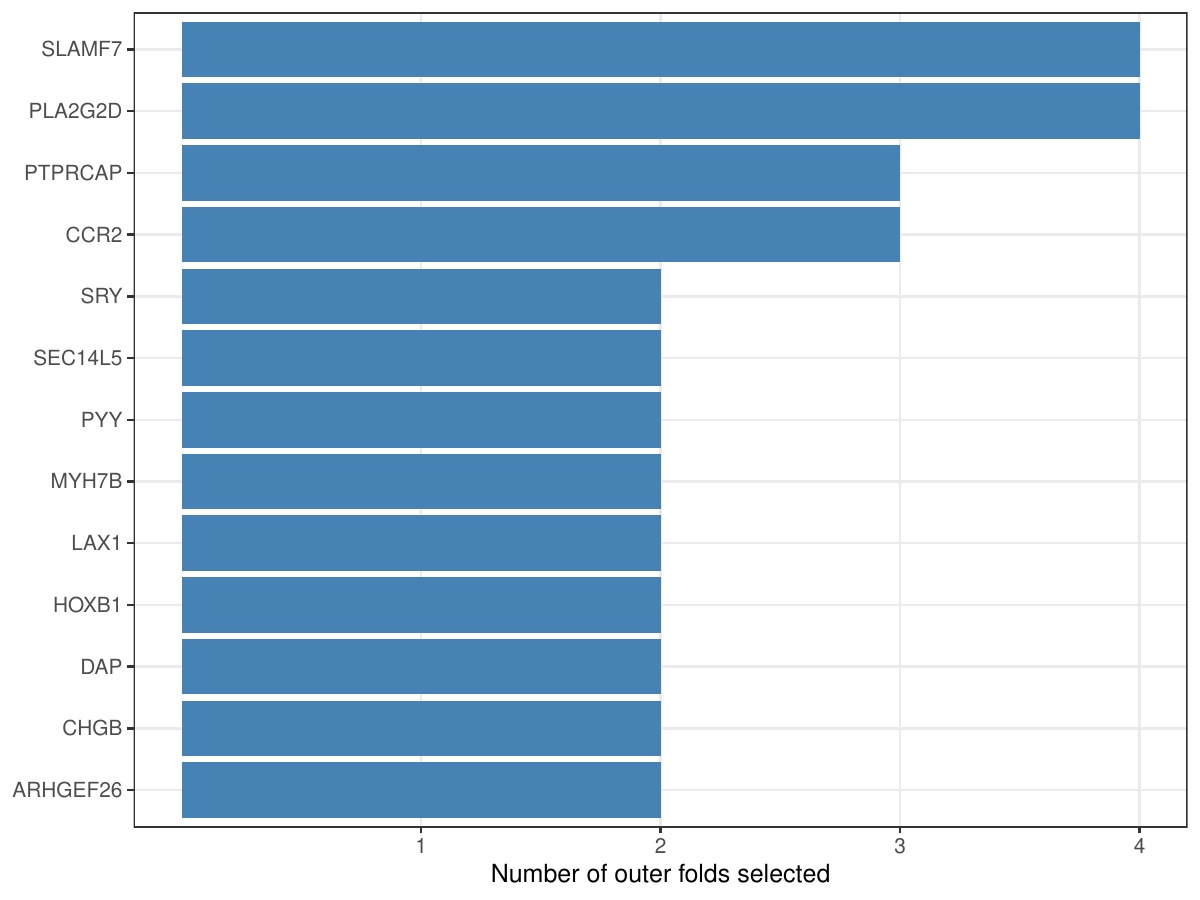}
\caption{Selection stability for genes selected by full-iteration S3VS in at least two of the five outer TCGA training folds.}
\label{fig:ovarian-selection-stability}
\end{figure}

\subsection{Internal Validation}
The primary five-fold outer-validation results are shown in Table~\ref{tab:ovarian-internal-primary}. The clinical baseline achieved a mean C-index of 0.627 and an IBS of 0.172. Full-iteration S3VS combined with clinical variables gave the highest internal C-index (0.640) and 1-year AUC (0.754), but its IBS was higher (0.182). The one-iteration combined model and the SIS--Cox--LASSO combined model did not improve C-index over the clinical baseline. Gene-only models were weaker. The full 1-, 3-, and 5-year AUC results are reported in Table~\ref{tab:ovarian-internal-primary}.

Figure~\ref{fig:ovarian-primary-performance} shows the same pattern. Relative to the clinical baseline, full-iteration S3VS plus clinical variables increased the mean C-index by 0.013 and the 1-year AUC by 0.057, while the 3-year AUC was nearly unchanged and the IBS increased from 0.172 to 0.182. Table~\ref{tab:ovarian-internal-all} also shows the effect of post-selection shrinkage. For the full-iteration clinical-plus-gene model, the mean C-index increased from 0.604 for ordinary Cox to 0.615 at $\lambda_{\min}$ and 0.640 at $\lambda_{1\mathrm{se}}$, while IBS decreased from 0.214 to 0.200 and 0.182. For SIS--Cox--LASSO plus clinical variables, the corresponding C-indices were 0.521, 0.612, and 0.617, and the IBS values were 0.336, 0.187, and 0.179. These results support post-selection shrinkage but do not show a consistent advantage of the expression variables over the clinical baseline.

\subsection{External Validation in GSE9891}
The external-validation results are reported in Table~\ref{tab:ovarian-external-primary}. The clinical baseline achieved a C-index of 0.668 and an IBS of 0.165. SIS--Cox--LASSO combined with clinical variables had the highest external C-index (0.672), but its IBS was higher (0.176), and its 3-year AUC was slightly lower than that of the clinical baseline. Full-iteration S3VS combined with clinical variables achieved a C-index of 0.645 and an IBS of 0.267. The one-iteration combined model achieved a C-index of 0.666 and an IBS of 0.167. Gene-only models were weaker, and the one-iteration gene-only model was at chance-level discrimination. Full AUC values are reported in Table~\ref{tab:ovarian-external-primary}.

Figure~\ref{fig:ovarian-primary-performance} shows that the internal ranking did not carry over fully to the external cohort. Full-iteration S3VS plus clinical variables had the highest internal C-index, whereas SIS--Cox--LASSO plus clinical variables had the highest external C-index by a small margin. The external SIS--Cox--LASSO model improved AUC at 1 and 5 years relative to the clinical baseline, but its 3-year AUC was slightly lower, and its IBS was higher. Full-iteration S3VS plus clinical variables also had a higher 1-year AUC than the clinical baseline but a substantially higher IBS. Thus, gains in discrimination at particular prediction times did not consistently translate into lower overall prediction error or better performance across cohorts.

\subsection{Discussion}
The ovarian-cancer analysis highlights two main points. First, clinical adjustment and post-selection shrinkage were important for prediction. Second, an internal gain in discrimination did not necessarily lead to lower overall prediction error or better external performance. The selected expression variables also varied across the outer folds. The recurrent genes in Table~\ref{tab:ovarian-selection-stability} and Figure~\ref{fig:ovarian-selection-stability} should therefore be viewed as exploratory candidates for further study rather than as a validated gene panel.

Several limitations should be kept in mind. The discovery cohort is small relative to the number and correlation of the expression variables, only five outer folds were used, and the external cohort contains tumor subtypes that do not exactly match those in TCGA. External validation was also limited to one cohort measured on a different microarray platform. The analysis should therefore be viewed as an illustration of the S3VS selection and prediction procedure rather than as evidence for a clinically ready ovarian-cancer risk model.

\section{Conclusion}
\label{sec:conclusion}

This paper develops S3VS as a general variable-selection framework for settings in which $p$ greatly exceeds $n$. Unlike purely marginal screening, S3VS combines two complementary sources of information. Predictor--outcome association identifies leading variables, while predictor--predictor association is used to form smaller local sets in which a model-specific selector is applied. Iteration then allows additional candidate variables to be considered after earlier selections and exclusions. This common architecture supports implementations for linear, generalized linear, and survival models, including accelerated failure time and Cox proportional hazards models, while allowing the screening scores, selection procedures, and outcome updates to be tailored to each model.

This framework also places the present work in relation to earlier structured screen-and-select methodology. The GWASinlps formulation of \citet{sanyal_et_al_2019} introduced the leading-variable and leading-set perspective for incorporating predictor dependence into high-dimensional screening. S3VS develops that idea into a unified iterative methodology with a broader design space for leading-variable and leading-set construction, aggregation, and within-set selection. The resulting methodology combines this general algorithmic framework with a focused asymptotic analysis, extensive finite-sample evaluation, and an openly available software implementation.

Within this broader framework, the theoretical analysis focuses on one specified one-step linear configuration using top-$d_L$ marginal-score leading variables, top-$m_n$ absolute-correlation leading sets, a within-set selector satisfying a conditional setwise sure-screening property, and active-preserving aggregation. This configuration isolates the role of proxy structure. Under the stated conditions, an active predictor with a weak marginal signal can be recovered through a more strongly associated proxy when the active predictor lies in the proxy's leading set and the within-set selector retains it. When the proxy signal is stronger than the active predictor's own marginal signal, the derived failure-probability bound can decrease faster than the corresponding SIS bound, and the post-screening candidate pool can be smaller. The independent-predictor counterexample shows that, without useful proxy structure, the one-step S3VS screen can reduce to SIS and has no strict rate advantage. The theory therefore identifies both the mechanism through which structured screening can help and the conditions under which no strict advantage should be expected. Because the analysis concerns this specified one-step linear configuration, analogous guarantees for later iterations, data-dependent updating, generalized linear models, and survival models remain outside its scope.

The simulations show the same dependence on structure and selector choice. In the linear study, later S3VS iterations substantially increased recovery relative to the first iteration, but the resulting trade-off differed across selectors. Full S3VS--LASSO recovered many active variables but selected large models with high FDP. MCP and SCAD produced much smaller models with low FDP but also low TPR. Full S3VS--NLP provided an intermediate selection trade-off and had the lowest or nearly lowest independent-test MSE in the primary comparison. In the logistic study, S3VS had higher TPR, smaller selected models, lower FDP, and lower test log-loss than SIS--logistic--LASSO in the correlated-proxy setting. Without useful proxies, SIS--logistic--LASSO had higher TPR, while S3VS remained much sparser and had lower test log-loss. These results show that the benefit of the structured search depends on the predictor structure rather than being uniform across settings. The Cox simulation was interpreted more cautiously. To make the primary comparison fair, S3VS was restricted to one iteration so that its iteration budget matched the single-pass SIS--Cox--LASSO procedure. Across both predictor scenarios, one-iteration S3VS was more conservative, selecting substantially smaller models with lower unconditional FDP but also lower TPR. Its predictive performance was modestly better in the correlated-proxy setting but did not show an advantage in the no-proxy setting. The full-iteration Cox results are therefore best viewed as a secondary analysis rather than as a matched method-to-method comparison.

The ovarian-cancer analysis also argues for cautious interpretation. Full-iteration S3VS combined with clinical variables had the highest internal C-index and 1-year AUC, but it did not improve the integrated Brier score over the clinical baseline. In external validation, SIS--Cox--LASSO combined with clinical variables had the highest C-index by a small margin, while the clinical baseline had the lowest IBS. Gene-only models were weaker, and no gene was selected in all five outer folds. The selected expression variables may therefore improve discrimination at some prediction times without producing a stable or consistently better prognostic model across cohorts.

Several limitations should be noted. The theory applies only to a one-step linear S3VS operator under explicit proxy, correlation, selector, and aggregation conditions. Cross-validated LASSO does not by itself guarantee the assumed conditional sure-screening property. S3VS also does not provide formal false-discovery-rate control, so FDP is used only as an empirical simulation measure. The simulation studies are finite, and the primary Cox analysis deliberately uses one S3VS iteration to match a single-pass comparator. The ovarian discovery cohort is small relative to the number of expression variables, only five outer folds were used, and external validation was limited to one cohort measured on a different microarray platform with a somewhat different subtype composition. These limitations should be kept in mind when interpreting the numerical differences among methods.

Future work could study theory for the full iterative procedure, data-driven choices of screening thresholds, faster construction of leading sets for extremely large $p$, and grouped or pathway-based leading sets when such information is available. Post-selection inference and uncertainty also remain important. In genomic applications, repeated external validation, training-based cross-platform harmonization, calibration assessment, and more formal stability analysis would provide stronger evidence about performance across cohorts.

Overall, S3VS should not be viewed as a replacement for marginal screening or penalized regression in every setting. It provides an additional way to use local predictor dependence before model-specific selection. Its main advantage is expected when a predictor with a clear marginal signal is correlated with an active predictor that is harder to detect directly. The one-step theory formalizes this setting, and the numerical studies show both the possible gains and the accompanying recovery--FDP trade-offs.

\begin{appendices}

\section{Technical Lemmas and Proofs}
\label{app:asymptotic-proofs}

The marginal-screening statement in Assumption~\ref{ass:marginal} is deliberately used as an explicit premise. The results below therefore verify only the correlation-based coverage, local-retention, aggregation, and comparison steps. They do not reinterpret a first-iteration marginal-screening premise as a guarantee for later data-dependent iterations.

\subsection{Auxiliary Lemmas}

\subsubsection{Uniform sample-correlation control}

\begin{lemma}[Uniform correlation concentration]
\label{lem:corr}
Under the iid-row model in \eqref{eq:model} and Assumption~\ref{ass:subg}, there exist constants $C,c,t_0>0$, depending only on the uniform sub-Gaussian bound, with $t_0<1$, such that, for all sufficiently large $n$ and $0<t<t_0$, 

\begin{equation} \mathbb{P}\left\{\max_{1\le j,k\le p}|\widehat\rho_{jk}-\rho_{jk}|>t\right\}\leq Cp^2\exp(-cnt^2). \label{eq:corr-conc} 
\end{equation} 

Consequently, Assumption~\ref{ass:gap} implies 

\begin{equation} \mathbb{P}\left\{\max_{1\le j,k\le p}|\widehat\rho_{jk}-\rho_{jk}|>a_n\right\}\longrightarrow0. \label{eq:corr-uniform-convergence} 
\end{equation}
\end{lemma}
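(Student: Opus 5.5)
Because each predictor has unit population variance and the sample correlation is invariant to location and scale, I will reduce the claim to entrywise concentration of sample second moments and then pass to correlations through a deterministic perturbation bound. Write $\widehat\sigma_{jk}=n^{-1}\sum_{l}(x_{lj}-\bar x_j)(x_{lk}-\bar x_k)$ and $\sigma_{jk}=\operatorname{Cov}(X_j,X_k)=\rho_{jk}$, so that $\sigma_{jj}=1$. Then
\[
\widehat\rho_{jk}=\frac{\widehat\sigma_{jk}}{\sqrt{\widehat\sigma_{jj}\widehat\sigma_{kk}}}.
\]
The first step is a single-pair bound: there are constants $C_0,c_0$, depending only on $K$, with
\[
\mathbb{P}\{|\widehat\sigma_{jk}-\sigma_{jk}|>u\}\le C_0\exp(-c_0nu^2)\qquad\text{for } 0<u\le1.
\]
To prove it, note that $X_jX_k$ is sub-exponential with $\|X_jX_k\|_{\psi_1}\le\|X_j\|_{\psi_2}\|X_k\|_{\psi_2}\le K^2$. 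Bernstein's inequality for i.i.d.\ centered sub-exponential variables then gives the bound for $n^{-1}\sum_l x_{lj}x_{lk}-\sigma_{jk}$ in the small-deviation regime $u\le1$, where the quadratic branch $nu^2/K^4$ dominates. The centering correction $\bar x_j\bar x_k$ is handled with Hoeffding-type sub-Gaussian bounds: $\mathbb{P}\{|\bar x_j|>\sqrt{u}\}\le2\exp(-cnu/K^2)$. This only yields rate $nu$, which is at least $nu^2$ when $u\le1$, so it is absorbed into the same form.

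The second step is the deterministic passage from covariances to correlations, and it is where I expect the only real care to be needed. Let $\Delta=\max_{j,k}|\widehat\sigma_{jk}-\sigma_{jk}|$ and assume $\Delta\le1/2$. Then $\widehat\sigma_{jj}\in[1/2,3/2]$, and
\[
|\widehat\rho_{jk}-\rho_{jk}|\le\frac{|\widehat\sigma_{jk}-\sigma_{jk}|}{\sqrt{\widehat\sigma_{jj}\widehat\sigma_{kk}}}+|\sigma_{jk}|\left|\frac{1}{\sqrt{\widehat\sigma_{jj}\widehat\sigma_{kk}}}-1\right|\le L\Delta
\]
for an absolute constant $L$. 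This uses $|\sigma_{jk}|\le1$ and the Lipschitz property of $(a,b)\mapsto(ab)^{-1/2}$ on $[1/2,3/2]^2$. Hence, with $t_0=\min\{1/2,L/2\}<1$ and $0<t<t_0$,
\[
\{\max|\widehat\rho-\rho|>t\}\subseteq\{\Delta>t/L\}.
\]
Because $t/L\le1/2$, the event $\{\Delta>t/L\}$ already implies the case assumption $\Delta\le1/2$ would be violated only when we are inside the bad event anyway, so no separate case is needed.

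The third step is a union bound over the at most $p^2$ ordered pairs, diagonal entries included, applied with $u=t/L$. This gives
\[
\mathbb{P}\{\max|\widehat\rho_{jk}-\rho_{jk}|>t\}\le C_0p^2\exp(-c_0nt^2/L^2),
\]
which is \eqref{eq:corr-conc} with $C=C_0$ and $c=c_0/L^2$. The restriction to sufficiently large $n$ enters only through absorbing the constant factors from the mean-correction terms.

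For the consequence, put $t=a_n$. This is legitimate for large $n$ because $a_n\downarrow0$ places $a_n$ in $(0,t_0)$. The bound becomes
\[
C\exp\{2\log p-cna_n^2\}.
\]
Since $a_n\sqrt{n/\log p}\to\infty$, we have $na_n^2/\log p\to\infty$, so the exponent equals $\log p\,(2-c\,na_n^2/\log p)\to-\infty$, using $\log p>\log n\to\infty$. The probability therefore tends to zero, which is \eqref{eq:corr-uniform-convergence}.

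The main obstacle is to keep the exponent quadratic in $t$ uniformly. That requires confining attention to $t<t_0$, where Bernstein's inequality is in its sub-Gaussian regime, and checking that the mean-centering terms do not degrade the rate.
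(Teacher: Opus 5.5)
Your proposal is correct and follows essentially the same route as the paper. Both arguments use Bernstein's inequality for the products $X_jX_k$, bound the sample means at level $\sqrt{u}$ so that the $nu$ rate absorbs into $nu^2$, take a union bound over the $p^2$ pairs, pass from covariances to correlations through a Lipschitz bound once the sample variances lie in $[1/2,3/2]$, and then substitute $t=a_n$. The only difference is cosmetic: you fold the variance-control event into $\{\Delta\le t/L\}$ instead of bounding it separately by $Cp\exp(-cn)$, which is slightly cleaner.
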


\begin{proof}
Write $X_{ij}$ for the $j$th coordinate in row $i$ of $\bfX$, and define $\overline X_j=n^{-1}\sum_{i=1}^nX_{ij}$ and $\widehat\sigma_{jk}=n^{-1}\sum_{i=1}^n(X_{ij}-\overline X_j)(X_{ik}-\overline X_k)$. Since the coordinates are uniformly sub-Gaussian, $X_jX_k-\mathbb{E}(X_jX_k)$ is uniformly sub-exponential. Bernstein's inequality therefore gives, for $0<u<1$, 

\begin{equation} \mathbb{P}\left\{\left|n^{-1}\sum_{i=1}^nX_{ij}X_{ik}-\mathbb{E}(X_jX_k)\right|>u\right\}\leq2\exp(-cnu^2). \label{eq:product-bernstein} 
\end{equation} 

Sub-Gaussian concentration of the sample means gives $\mathbb{P}\{|\overline X_j|>u\}\leq2\exp(-cnu^2)$. Taking $u=\sqrt{t/2}$ and using $0<t<1$, a union bound gives 

\begin{equation} \mathbb{P}\left\{\max_{j,k}|\overline X_j\overline X_k|>t/2\right\}\leq2p\exp(-cnt)\leq2p^2\exp(-cnt^2). \label{eq:mean-product-bound} 
\end{equation} 

Because $\widehat\sigma_{jk}=n^{-1}\sum_iX_{ij}X_{ik}-\overline X_j\overline X_k$ and the population means are zero, another union bound yields, after adjusting constants, 

\begin{equation} \mathbb{P}\left\{\max_{j,k}|\widehat\sigma_{jk}-\sigma_{jk}|>t\right\}\leq Cp^2\exp(-cnt^2),\qquad \sigma_{jk}=\operatorname{Cov}(X_j,X_k). \label{eq:covariance-concentration} 
\end{equation} 

In particular, since $\sigma_{jj}=1$, with probability at least $1-Cp\exp(-cn)$ all sample variances lie in $[1/2,3/2]$. On this event, the map $(u,v,w)\mapsto u(vw)^{-1/2}$ is Lipschitz on the relevant compact domain, and therefore $\max_{j,k}|\widehat\rho_{jk}-\rho_{jk}|\leq C\max_{j,k}|\widehat\sigma_{jk}-\sigma_{jk}|$. This proves \eqref{eq:corr-conc}. Since $a_n\downarrow0$ and $na_n^2/\log p\longrightarrow\infty$ by Assumption~\ref{ass:gap}, $p^2\exp(-cna_n^2)=\exp\{2\log p-cna_n^2\}\longrightarrow0$, which proves \eqref{eq:corr-uniform-convergence}.
\end{proof}

\subsubsection{Leading-set inclusion}

\begin{lemma}[A proxy leading set contains its associated active predictor]
\label{lem:set-inclusion}
Suppose Assumption~\ref{ass:gap} holds and define 

\begin{equation} \mathcal{E}_\rho(a_n)=\left\{\max_{1\le j,k\le p}|\widehat\rho_{jk}-\rho_{jk}|\leq a_n\right\}. \label{eq:correlation-event} 
\end{equation} 

Fix $j\in\mathcal{M}_*$. If $q(j)\in\mathcal{L}^{(1)}$ and $\mathcal{E}_\rho(a_n)$ occurs, then the leading set constructed around $q(j)$ in \eqref{eq:one-step-leading-set} contains $j$.
\end{lemma}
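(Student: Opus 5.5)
We separate two cases. If $q(j)=j$, then $j$ is itself the leader $\ell_r$ for some $r$, and by the definition in \eqref{eq:one-step-leading-set} every leading set contains its own leader, so $j\in\mathcal{S}_r^{(1)}$ and nothing more is needed. The substantive case is $q(j)\neq j$. Write $\ell=q(j)=\ell_r$ for the index $r$ with $\ell_r=q(j)$, which exists because $q(j)\in\mathcal{L}^{(1)}$. The leading set is $\{\ell\}\cup\operatorname{Top}_{m_n-1}\{|\widehat\rho_{\ell k}|:k\neq\ell\}$. Since $j\neq\ell$, it suffices to show that fewer than $m_n-1$ indices $k\notin\{j,\ell\}$ can outrank $j$ in the sample ordering, together with a check of the tie-breaking.

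The key step is a deterministic sandwich on $\mathcal{E}_\rho(a_n)$. Suppose some $k\notin\{j,\ell\}$ satisfies $|\widehat\rho_{\ell k}|\geq|\widehat\rho_{\ell j}|$, which is what ranking ahead of or tying with $j$ requires. The triangle inequality on $\mathcal{E}_\rho(a_n)$ gives
\[
|\rho_{k\ell}|\geq|\widehat\rho_{\ell k}|-a_n\geq|\widehat\rho_{\ell j}|-a_n\geq|\rho_{j\ell}|-2a_n .
\]
Hence every such $k$ lies in the set $\{k\neq j:|\rho_{k,q(j)}|\geq|\rho_{j,q(j)}|-2a_n\}$, which has at most $m_n-1$ elements by Assumption~\ref{ass:gap}.

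The delicate point is the counting. The index $\ell$ itself belongs to the set in \eqref{eq:gap}, because $k=\ell\neq j$ and $|\rho_{\ell\ell}|=1\geq|\rho_{j\ell}|-2a_n$. So $\ell$ uses one of the at most $m_n-1$ slots. This leaves at most $m_n-2$ indices $k\notin\{j,\ell\}$ that can tie or beat $j$ among the candidates $k\neq\ell$ in the $\operatorname{Top}_{m_n-1}$ ranking. Then $j$ has rank at most $m_n-1$ in that ranking, regardless of how ties are broken, since the bound counts all weak competitors. Therefore $j\in\operatorname{Top}_{m_n-1}\{|\widehat\rho_{\ell k}|:k\neq\ell\}\subseteq\mathcal{S}_r^{(1)}$.

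The main obstacle is exactly this off-by-one bookkeeping, namely whether the proxy itself is counted in the separation condition. If one instead reads \eqref{eq:gap} as excluding $k=\ell$, the bound gives at most $m_n-1$ weak competitors. That is not quite enough when ties break against $j$. I would then note that the argument still goes through when the competitor set is nonempty only through strict inequalities, or simply absorb the degenerate case $m_n=1$, where $j$ cannot be included unless $q(j)=j$. I expect the intended reading to be the one above, in which $\ell$ counts, so the proof closes directly.
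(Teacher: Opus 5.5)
Your proof is correct and follows the paper's argument. The paper uses the same sandwich $|\widehat\rho_{k\ell}|\le|\rho_{k\ell}|+a_n<|\rho_{j\ell}|-a_n\le|\widehat\rho_{j\ell}|$ (you run it in contrapositive form), and it likewise counts $\ell$ among the at most $m_n-1$ indices of Assumption~\ref{ass:gap} when it concludes that $j$ has rank at most $m_n$ in the ranking around $\ell$. Your final hedging paragraph is unnecessary, because the assumption ranges over all $k\neq j$ and so includes $k=\ell$; note also that under the alternative reading the count would fail even with strict inequalities, not only through adverse tie-breaking.
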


\begin{proof}
If $q(j)=j$, the conclusion follows from the inclusion of every leader in its own leading set. Suppose $q(j)\neq j$ and write $\ell=q(j)$. By Assumption~\ref{ass:gap}, at most $m_n-1$ indices $k\neq j$ satisfy $|\rho_{k\ell}|\geq|\rho_{j\ell}|-2a_n$. For every other $k$, on $\mathcal{E}_\rho(a_n)$, 

\begin{equation} |\widehat\rho_{k\ell}|\leq|\rho_{k\ell}|+a_n<|\rho_{j\ell}|-a_n\leq|\widehat\rho_{j\ell}|. \label{eq:sample-gap-comparison} 
\end{equation} 

Thus at most $m_n-1$ indices other than $j$ have sample absolute correlation with $X_\ell$ at least as large as that of $X_j$. Hence $j$ has rank at most $m_n$ in the absolute-correlation ranking around $\ell$ and belongs to $\mathcal{S}_r^{(1)}$ for the unique leader position $r$ with $\ell_r=\ell$.
\end{proof}

\subsection{Proof of Theorem~\ref{thm:s3vs}}
\label{app:proof-thm-s3vs}

\begin{proof}
Let $\mathcal{E}_L=\{\mathcal{P}_*\subseteq\mathcal{L}^{(1)}\}$. Define $\pi(j)$ to be the unique position for which $\ell_{\pi(j)}=q(j)$ when $q(j)\in\mathcal{L}^{(1)}$, and set $\pi(j)=1$ otherwise. Define 

\begin{equation} \mathcal{E}_G=\{j\in\mathcal{S}_{\pi(j)}^{(1)}\text{ for every }j\in\mathcal{M}_*\},\qquad \mathcal{E}_V=\{j\in\mathcal{S}_{\pi(j),\mathrm{sel}}^{(1)}\text{ for every }j\in\mathcal{M}_*\}. \label{eq:designated-events} 
\end{equation} 

On $\mathcal{E}_L\cap\mathcal{E}_G\cap\mathcal{E}_V$, every active predictor is selected in at least one leading set, so the active-preserving aggregation property in \eqref{eq:active-preserving} implies 

\begin{equation} \mathcal{E}_L\cap\mathcal{E}_G\cap\mathcal{E}_V\subseteq\{\mathcal{M}_*\subseteq\mathcal{S}_{\mathrm{sel}}^{(1)}\}. \label{eq:event-inclusion} 
\end{equation} 

Therefore, 

\begin{equation} \mathbb{P}\{\mathcal{M}_*\not\subseteq\mathcal{S}_{\mathrm{sel}}^{(1)}\}\leq\mathbb{P}(\mathcal{E}_L^c)+\mathbb{P}(\mathcal{E}_G^c\cap\mathcal{E}_L)+\mathbb{P}(\mathcal{E}_V^c\cap\mathcal{E}_L\cap\mathcal{E}_G). \label{eq:union-main} 
\end{equation}

\textbf{Step 1: proxy leading-variable coverage.} By Assumption~\ref{ass:proxy}, $\mathcal{P}_*$ is a deterministic target set with $|\mathcal{P}_*|\leq d_L$ and $\min_{\ell\in\mathcal{P}_*}|\operatorname{Cov}(X_\ell,Y)|\geq c_Ln^{-\kappa_L}$. Applying Assumption~\ref{ass:marginal} with $(\mathcal{T}_n,d_n,\kappa_A,\theta_A)=(\mathcal{P}_*,d_L,\kappa_L,\theta_L)$ gives, after renaming constants, 

\begin{equation} \mathbb{P}(\mathcal{E}_L^c)\leq C_1\exp\left\{-C_2\frac{n^{1-2\kappa_L}}{\log n}\right\}. \label{eq:EL-bound} 
\end{equation}

\textbf{Step 2: leading-set coverage.} On $\mathcal{E}_L$, every proxy $q(j)$ is a selected leader. Lemma~\ref{lem:set-inclusion} gives $\mathcal{E}_L\cap\mathcal{E}_\rho(a_n)\subseteq\mathcal{E}_G$, and Lemma~\ref{lem:corr} gives, for all sufficiently large $n$, 

\begin{equation} \mathbb{P}(\mathcal{E}_G^c\cap\mathcal{E}_L)\leq\mathbb{P}\{\mathcal{E}_\rho(a_n)^c\}\leq Cp^2\exp(-cna_n^2). \label{eq:EG-bound} 
\end{equation}

\textbf{Step 3: within-set retention.} On $\mathcal{E}_L\cap\mathcal{E}_G$, each active predictor belongs to its designated proxy leading set. If $\mathcal{E}_V$ fails, at least one generated leading set fails to retain an active member. Conditional on $\mathfrak{S}^{(1)}$, Assumption~\ref{ass:selector} and a union bound over at most $d_L$ generated sets give 

\begin{equation} \mathbb{P}(\mathcal{E}_V^c\cap\mathcal{E}_L\cap\mathcal{E}_G\mid\mathfrak{S}^{(1)})\leq C_Vd_L\exp(-c_Vn^\zeta). \label{eq:EV-conditional-bound} 
\end{equation} Taking expectations yields 

\begin{equation} \mathbb{P}(\mathcal{E}_V^c\cap\mathcal{E}_L\cap\mathcal{E}_G)\leq C_Vd_L\exp(-c_Vn^\zeta). \label{eq:EV-bound} 
\end{equation}

\textbf{Step 4: combination.} Substituting \eqref{eq:EL-bound}, \eqref{eq:EG-bound}, and \eqref{eq:EV-bound} into \eqref{eq:union-main} proves \eqref{eq:s3vs-bound} after renaming the correlation constants. The first exponential term tends to zero because $\xi<1-2\kappa_L$, and the two additional conditions in the theorem give \eqref{eq:s3vs-consistency}. The final assertion is immediate from \eqref{eq:s3vs-bound} when both remainder terms are of smaller order than its first term.
\end{proof}

\subsection{Proof of Corollary~\ref{cor:prob}}
\label{app:proof-cor-prob}

\begin{proof}
Set $a_L=1-2\kappa_L$ and $a_M=1-2\kappa_M$. Since $\kappa_M>\kappa_L$, $a_L>a_M$, and the ratio of the two terms in \eqref{eq:probability-bound-improvement} is 

\begin{equation} \frac{\exp\{-C_Ln^{a_L}/\log n\}}{\exp\{-C_Mn^{a_M}/\log n\}}=\exp\left\{-\frac{C_Ln^{a_L}-C_Mn^{a_M}}{\log n}\right\}\longrightarrow0. \label{eq:upper-bound-ratio} 
\end{equation} 

This proves the displayed comparison. The qualification about the complete S3VS bound follows from the assumed negligibility of its correlation and selector remainders. The argument compares only the stated upper-bound expressions.
\end{proof}

\subsection{Proof of Corollary~\ref{cor:dim}}
\label{app:proof-cor-dim}

\begin{proof}
By the union bound for finite sets and the definition in \eqref{eq:s3vs-candidate-union}, 

\begin{equation} D_{\mathrm{S3VS}}=\left|\bigcup_{j=1}^{d_L}\mathcal{S}_j^{(1)}\right|\leq\sum_{j=1}^{d_L}|\mathcal{S}_j^{(1)}|\leq d_Lm_n=O(n^{\theta_L+\alpha}), \label{eq:dim-union-bound} 
\end{equation} 

which proves \eqref{eq:s3vs-candidate-bound}. Therefore $D_{\mathrm{S3VS}}/d_{\mathrm{SIS}}=O(n^{\theta_L+\alpha-\theta_M})\longrightarrow0$ whenever $\theta_L+\alpha<\theta_M$. A value of $\theta_L$ satisfying both $2\kappa_L+\tau<\theta_L<1$ and $\theta_L+\alpha<\theta_M$ exists whenever $2\kappa_L+\tau+\alpha<\theta_M$, proving \eqref{eq:dim-improve-existence}. With common positive slack, $\theta_L=2\kappa_L+\tau+\epsilon$ and $\theta_M=2\kappa_M+\tau+\epsilon$, so $\theta_L+\alpha<\theta_M$ is equivalent to $\alpha<2(\kappa_M-\kappa_L)$, proving \eqref{eq:alpha-cond}. Finally, $m_n/d_{\mathrm{SIS}}=O(n^{\alpha-\theta_M})\longrightarrow0$ whenever $\alpha<\theta_M$.
\end{proof}

\subsection{Proof of Corollary~\ref{cor:est}}
\label{app:proof-cor-est}

\begin{proof}
The bounds in \eqref{eq:conditional-estimation-rates} are assumptions for the same post-screening estimator, so their candidate-size dependence is represented by the logarithmic factors. If $q_n=n^{a+o(1)}$ and $d_{\mathrm{SIS}}=n^{b+o(1)}$, then $\log q_n=(a+o(1))\log n$ and $\log d_{\mathrm{SIS}}=(b+o(1))\log n$, which proves \eqref{eq:log-constant-improvement}. Taking square roots also gives the constant factor $\sqrt{a/b}$ stated in the corollary. If $\log q_n=o(\log d_{\mathrm{SIS}})$, then the S3VS logarithmic factor is of smaller order as stated. No estimator-specific rate is derived from Theorem~\ref{thm:s3vs}.
\end{proof}

\subsection{Proof of Proposition~\ref{prop:no-free}}
\label{app:proof-prop-no-free}

\begin{proof}
Let $\bfx\sim N(\bfzero,\bfI_p)$ and $Y=\bfx^T\bfbeta+\varepsilon$, where $\varepsilon\sim N(0,\sigma^2)$ is independent of $\bfx$. Let $s=|\mathcal{M}_*|$ be fixed, set $\beta_j=c_jn^{-\kappa}$ for $j\in\mathcal{M}_*$ with $0<c_-\leq|c_j|\leq c_+<\infty$, set $\beta_k=0$ for $k\notin\mathcal{M}_*$, and assume $0\leq\kappa<1/2$. Then $\bfSigma=\bfI_p$, $\tau=0$, and 

\begin{equation} \operatorname{Var}(Y)=\sigma^2+\|\bfbeta\|_2^2=\sigma^2+n^{-2\kappa}\sum_{j\in\mathcal{M}_*}c_j^2=O(1). \label{eq:no-proxy-variance} 
\end{equation} 

For $j\in\mathcal{M}_*$ and $k\notin\mathcal{M}_*$, respectively, 

\begin{equation} \operatorname{Cov}(X_j,Y)=\beta_j=c_jn^{-\kappa},\qquad \operatorname{Cov}(X_k,Y)=0. \label{eq:no-proxy-scores} 
\end{equation} 

Thus the active marginal exponent is $\kappa_M=\kappa$, while any proxy satisfying the positive signal condition \eqref{eq:proxy-strength} must be active and has exponent at least $\kappa$. Indeed, if an active proxy had $\kappa_L<\kappa$, then $|c_j|n^{-(\kappa-\kappa_L)}\geq c_L$ would have to hold for all large $n$, which is impossible. Hence 

\begin{equation} \kappa_L\geq\kappa_M. \label{eq:no-proxy-exponent-improvement} 
\end{equation} 

Moreover, $\operatorname{Corr}(X_j,X_k)=0$ for $j\neq k$. With the valid choice $m_n=1$, each leading set is exactly its leader, so the leading-variable screening stage has the same candidate set as SIS and cannot recover an active predictor omitted by that stage through a proxy leading set. This design therefore supplies no structural proxy advantage, and a uniform strict improvement theorem requires additional assumptions such as Assumptions~\ref{ass:proxy} and \ref{ass:gap}.
\end{proof}

\end{appendices}

\bibliographystyle{jasa}
\bibliography{mybib}  

\clearpage

\begin{center}
{\bfseries \huge \noindent Supplementary Materials\\[10pt] \par 
\Large ``Structured Screen-and-Select for Ultra-High-Dimensional Variable Selection'' by Nilotpal Sanyal and Padmore N. Prempeh}
\end{center}

\begingroup
\setlength{\parskip}{0pt}
\addtocontents{toc}{\protect\setcounter{tocdepth}{2}}
\setcounter{section}{0}
\setcounter{subsection}{0}
\tableofcontents
\endgroup

\section{Additional Linear-Model Simulation Results}
\label{app:lm-supplement}

This section provides the design keys, calibration and diagnostic summaries, runtime results, and scenario-specific performance results for the linear-model simulation. The main text reports the pooled primary comparison and summarizes the runtime results. The tables and figures below provide the full supporting results, including the prespecified scenario-specific summaries.

\subsection{Methods, scenarios, and calibration}
Table~\ref{tab:LM-method-key} records the three analysis approaches for each selector: full S3VS, the first S3VS iteration extracted from the same full run, and one-pass SIS. Table~\ref{tab:LM-scenario-key} records the 13 design scenarios, including the proxy constructions and the covariance and non-Gaussian structures. The realized training SNRs in Table~\ref{tab:LM-snr-calibration} agree with the target values $0.25$, $1$, and $4$ for the weak, moderate, and strong regimes, respectively. Figure~\ref{fig:LM-supp-snr} shows the same agreement.

\begin{table}[htbp]
\centering
\small
\caption{Named methods used in the linear-model simulation.}
\label{tab:LM-method-key}
\begin{adjustbox}{max width=\textwidth,keepaspectratio}
\begin{tabular}{lrrr}
\toprule
Method & Approach & Selector & Description \\
\midrule
S3VS-full-LASSO & S3VS-full & LASSO & Full S3VS with LASSO using one leading variable, leading sets of size five, liberal aggregation, conservative-from-the-beginning removal, $m_{\max}=100$, and $n_{\mathrm{skip}}=3$. \\
S3VS-full-SCAD & S3VS-full & SCAD & Full S3VS with SCAD using one leading variable, leading sets of size five, liberal aggregation, conservative-from-the-beginning removal, $m_{\max}=100$, and $n_{\mathrm{skip}}=3$. \\
S3VS-full-MCP & S3VS-full & MCP & Full S3VS with MCP using one leading variable, leading sets of size five, liberal aggregation, conservative-from-the-beginning removal, $m_{\max}=100$, and $n_{\mathrm{skip}}=3$. \\
S3VS-full-NLP & S3VS-full & NLP & Full S3VS with NLP using one leading variable, leading sets of size five, liberal aggregation, conservative-from-the-beginning removal, $m_{\max}=100$, and $n_{\mathrm{skip}}=3$. \\
S3VS-one-step-LASSO & S3VS-one-step & LASSO & First S3VS iteration with LASSO derived from the full run \\
S3VS-one-step-SCAD & S3VS-one-step & SCAD & First S3VS iteration with SCAD derived from the full run \\
S3VS-one-step-MCP & S3VS-one-step & MCP & First S3VS iteration with MCP derived from the full run \\
S3VS-one-step-NLP & S3VS-one-step & NLP & First S3VS iteration with NLP derived from the full run \\
SIS-LASSO & SIS & LASSO & SIS to d=min(200,n-1), followed by LASSO \\
SIS-SCAD & SIS & SCAD & One-pass SIS retaining $d=\min(200,n-1)$ predictors, followed by SCAD \\
SIS-MCP & SIS & MCP & One-pass SIS retaining $d=\min(200,n-1)$ predictors, followed by MCP \\
SIS-NLP & SIS & NLP & One-pass SIS retaining $d=\min(200,n-1)$ predictors, followed by NLP \\
\bottomrule
\end{tabular}
\end{adjustbox}
\end{table}

\begin{table}[htbp]
\centering
\small
\caption{Named proxy, no-proxy, covariance, and non-Gaussian design scenarios.}
\label{tab:LM-scenario-key}
\begin{tabular}{lrr}
\toprule
Scenario & Design structure & Description \\
\midrule
baseline-block & unequal-block & Unequal positive-correlation blocks \\
proxy-cancellation & proxy & Active cancellation with a null proxy \\
shared-proxy & proxy & Several active variables sharing one proxy \\
null-proxy & proxy & One active variable with several null proxies \\
no-proxy & independent & Independent predictors: no useful proxy \\
misleading-null & proxy & Correlated null leader and null neighborhood \\
ar1 & ar1 & AR(1) dependence \\
factor & factor & Latent-factor dependence \\
sparse-graph & sparse-graph & Sparse pairwise graph \\
overlapping-modules & overlap & Overlapping factor modules \\
negative-dependence & negative & Signed factor loadings and negative correlations \\
heavy-tailed & heavy-tailed & Heavy-tailed non-Gaussian predictors \\
contaminated & contaminated & Gaussian predictors with row/feature contamination \\
\bottomrule
\end{tabular}
\end{table}

\begin{table}[htbp]
\centering
\small
\caption{Calibration of the weak, moderate, and strong linear-model signal regimes. SNR is the variance of the training linear predictor divided by the noise variance.}
\label{tab:LM-snr-calibration}
\begin{tabular}{lrrrr}
\toprule
Signal & Replicates & Target SNR & Empirical training SNR & Response variance \\
\midrule
moderate & 11700 & 1.000 (0.000) & 1.000 (0.000) & 1.999 (0.001) \\
strong & 11700 & 4.000 (0.000) & 4.000 (0.000) & 4.999 (0.002) \\
weak & 11700 & 0.250 (0.000) & 0.250 (0.000) & 1.253 (0.001) \\
\bottomrule
\end{tabular}
\end{table}

\begin{figure}[htbp]
\centering
\includegraphics[width=0.75\textwidth]{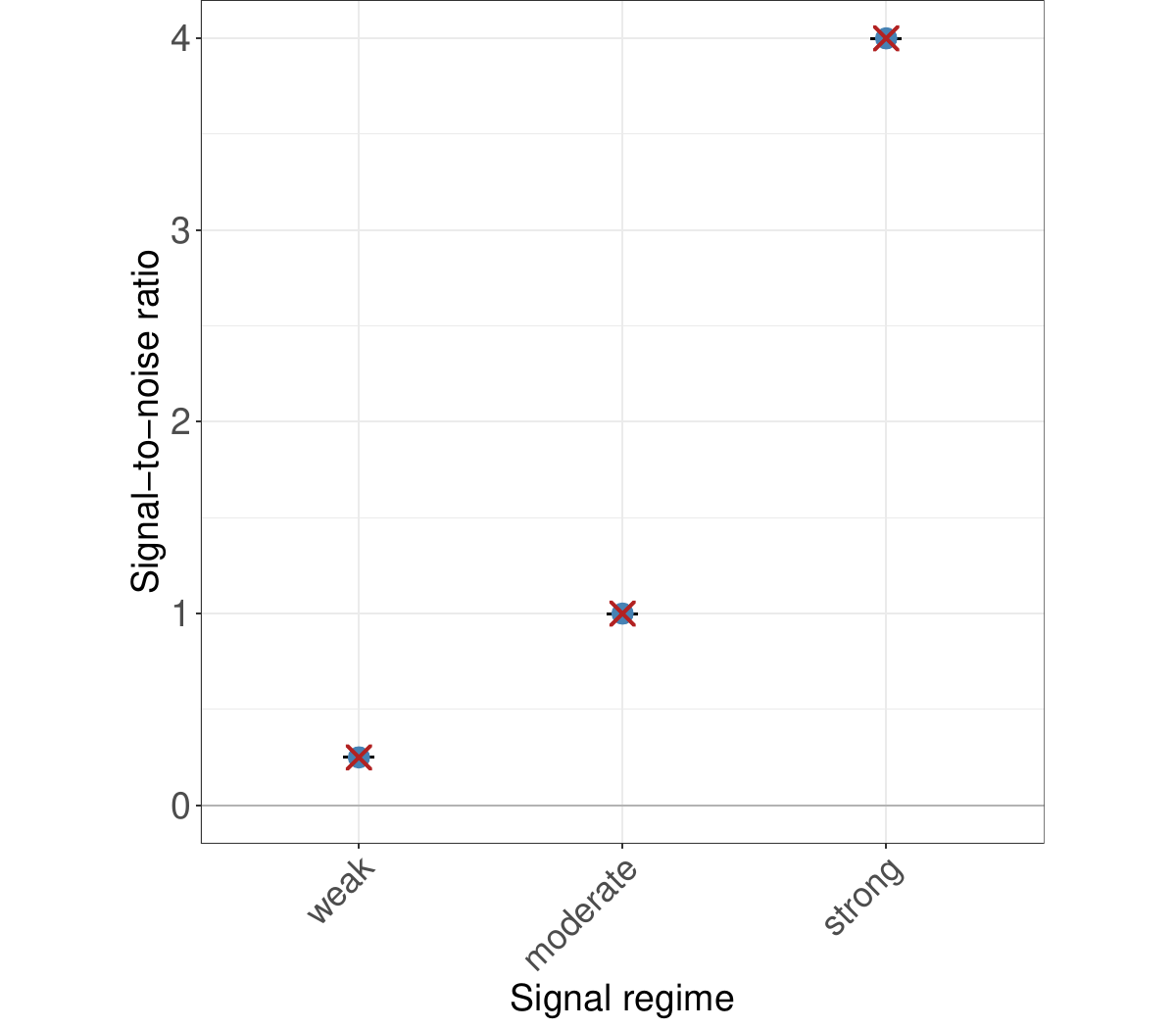}
\caption{Calibration of the weak, moderate, and strong linear-model signal regimes. Blue points are empirical training SNRs and red crosses are target SNRs.}
\label{fig:LM-supp-snr}
\end{figure}

\subsection{Runtime}
The runtime results for the separately timed full S3VS and one-pass SIS analyses are reported in Supplementary Table~\ref{tab:LM-runtime}. Runtime generally increases with $p$ at fixed $n$, whereas its dependence on $n$ is not monotone because the cross-validation and optimization workloads vary across replicates. Full S3VS--SCAD and full S3VS--MCP are the least expensive S3VS variants, whereas full S3VS--NLP and SIS--NLP are more costly in several settings. First-iteration S3VS results are not assigned separate runtimes because they are extracted from the corresponding full S3VS runs.

{
\small
\setlength{\LTcapwidth}{\textwidth}
\begin{longtable}{llrrrr}
\caption{Runtime for separately timed full S3VS and one-pass SIS analyses. First-iteration S3VS rows are derived from the full run and are not assigned a separate runtime.}
\label{tab:LM-runtime}\\
\toprule
Approach & Selector & $n$ & $p$ & Replicates & Runtime (minutes), MCSE \\
\midrule
\endfirsthead
\multicolumn{6}{c}%
{{\tablename\ \thetable{} -- continued from previous page}}\\
\toprule
Approach & Selector & $n$ & $p$ & Replicates & Runtime (minutes), MCSE \\
\midrule
\endhead
\midrule
\multicolumn{6}{r}{{Continued on next page}}\\
\endfoot
\bottomrule
\endlastfoot
S3VS-full & LASSO & 200 & 10000 & 3900 & 0.039 (0.000) \\
S3VS-full & LASSO & 200 & 15000 & 3900 & 0.049 (0.000) \\
S3VS-full & LASSO & 200 & 20000 & 3900 & 0.060 (0.000) \\
S3VS-full & LASSO & 500 & 10000 & 3900 & 0.076 (0.000) \\
S3VS-full & LASSO & 500 & 15000 & 3900 & 0.107 (0.000) \\
S3VS-full & LASSO & 500 & 20000 & 3900 & 0.133 (0.000) \\
S3VS-full & LASSO & 800 & 10000 & 3900 & 0.112 (0.000) \\
S3VS-full & LASSO & 800 & 15000 & 3900 & 0.154 (0.000) \\
S3VS-full & LASSO & 800 & 20000 & 3900 & 0.199 (0.000) \\

S3VS-full & MCP & 200 & 10000 & 3900 & 0.003 (0.000) \\
S3VS-full & MCP & 200 & 15000 & 3900 & 0.004 (0.000) \\
S3VS-full & MCP & 200 & 20000 & 3900 & 0.005 (0.000) \\
S3VS-full & MCP & 500 & 10000 & 3900 & 0.008 (0.000) \\
S3VS-full & MCP & 500 & 15000 & 3900 & 0.011 (0.000) \\
S3VS-full & MCP & 500 & 20000 & 3900 & 0.015 (0.000) \\
S3VS-full & MCP & 800 & 10000 & 3900 & 0.020 (0.000) \\
S3VS-full & MCP & 800 & 15000 & 3900 & 0.028 (0.000) \\
S3VS-full & MCP & 800 & 20000 & 3900 & 0.037 (0.001) \\

S3VS-full & NLP & 200 & 10000 & 3900 & 0.035 (0.000) \\
S3VS-full & NLP & 200 & 15000 & 3900 & 0.056 (0.000) \\
S3VS-full & NLP & 200 & 20000 & 3900 & 0.081 (0.000) \\
S3VS-full & NLP & 500 & 10000 & 3900 & 0.024 (0.000) \\
S3VS-full & NLP & 500 & 15000 & 3900 & 0.046 (0.000) \\
S3VS-full & NLP & 500 & 20000 & 3900 & 0.081 (0.001) \\
S3VS-full & NLP & 800 & 10000 & 3900 & 0.037 (0.000) \\
S3VS-full & NLP & 800 & 15000 & 3900 & 0.060 (0.000) \\
S3VS-full & NLP & 800 & 20000 & 3900 & 0.087 (0.001) \\

S3VS-full & SCAD & 200 & 10000 & 3900 & 0.003 (0.000) \\
S3VS-full & SCAD & 200 & 15000 & 3900 & 0.004 (0.000) \\
S3VS-full & SCAD & 200 & 20000 & 3900 & 0.005 (0.000) \\
S3VS-full & SCAD & 500 & 10000 & 3900 & 0.009 (0.000) \\
S3VS-full & SCAD & 500 & 15000 & 3900 & 0.012 (0.000) \\
S3VS-full & SCAD & 500 & 20000 & 3900 & 0.015 (0.000) \\
S3VS-full & SCAD & 800 & 10000 & 3900 & 0.021 (0.000) \\
S3VS-full & SCAD & 800 & 15000 & 3900 & 0.029 (0.000) \\
S3VS-full & SCAD & 800 & 20000 & 3900 & 0.038 (0.001) \\

SIS & LASSO & 200 & 10000 & 3900 & 0.005 (0.000) \\
SIS & LASSO & 200 & 15000 & 3900 & 0.005 (0.000) \\
SIS & LASSO & 200 & 20000 & 3900 & 0.005 (0.000) \\
SIS & LASSO & 500 & 10000 & 3900 & 0.004 (0.000) \\
SIS & LASSO & 500 & 15000 & 3900 & 0.004 (0.000) \\
SIS & LASSO & 500 & 20000 & 3900 & 0.004 (0.000) \\
SIS & LASSO & 800 & 10000 & 3900 & 0.004 (0.000) \\
SIS & LASSO & 800 & 15000 & 3900 & 0.004 (0.000) \\
SIS & LASSO & 800 & 20000 & 3900 & 0.005 (0.000) \\

SIS & MCP & 200 & 10000 & 3900 & 0.036 (0.000) \\
SIS & MCP & 200 & 15000 & 3900 & 0.035 (0.000) \\
SIS & MCP & 200 & 20000 & 3900 & 0.035 (0.000) \\
SIS & MCP & 500 & 10000 & 3900 & 0.041 (0.000) \\
SIS & MCP & 500 & 15000 & 3900 & 0.042 (0.000) \\
SIS & MCP & 500 & 20000 & 3900 & 0.042 (0.000) \\
SIS & MCP & 800 & 10000 & 3900 & 0.040 (0.000) \\
SIS & MCP & 800 & 15000 & 3900 & 0.041 (0.000) \\
SIS & MCP & 800 & 20000 & 3900 & 0.042 (0.000) \\

SIS & NLP & 200 & 10000 & 3900 & 0.621 (0.014) \\
SIS & NLP & 200 & 15000 & 3900 & 0.658 (0.014) \\
SIS & NLP & 200 & 20000 & 3900 & 0.691 (0.015) \\
SIS & NLP & 500 & 10000 & 3900 & 0.060 (0.001) \\
SIS & NLP & 500 & 15000 & 3900 & 0.088 (0.001) \\
SIS & NLP & 500 & 20000 & 3900 & 0.103 (0.001) \\
SIS & NLP & 800 & 10000 & 3900 & 0.033 (0.001) \\
SIS & NLP & 800 & 15000 & 3900 & 0.043 (0.001) \\
SIS & NLP & 800 & 20000 & 3900 & 0.056 (0.001) \\

SIS & SCAD & 200 & 10000 & 3900 & 0.037 (0.000) \\
SIS & SCAD & 200 & 15000 & 3900 & 0.036 (0.000) \\
SIS & SCAD & 200 & 20000 & 3900 & 0.036 (0.000) \\
SIS & SCAD & 500 & 10000 & 3900 & 0.035 (0.000) \\
SIS & SCAD & 500 & 15000 & 3900 & 0.036 (0.000) \\
SIS & SCAD & 500 & 20000 & 3900 & 0.037 (0.000) \\
SIS & SCAD & 800 & 10000 & 3900 & 0.038 (0.000) \\
SIS & SCAD & 800 & 15000 & 3900 & 0.039 (0.000) \\
SIS & SCAD & 800 & 20000 & 3900 & 0.040 (0.000) \\

\end{longtable}
\normalsize
}

\subsection{Proxy and covariance-structure results}
The prespecified proxy and no-proxy results are given in Table~\ref{tab:LM-proxy-scenarios}. In these settings, full S3VS--LASSO attains TPRs around $0.76$ but has mean FDP around $0.81$ and selects about 101 variables. Full S3VS--NLP retains similar TPR, roughly $0.74$--$0.77$, with mean FDP around $0.28$ and mean model sizes of about 26--28. SIS--NLP is somewhat sparser and has lower mean FDP, about $0.20$--$0.22$, but its TPR is lower, about $0.69$--$0.73$. SIS--LASSO, SIS--SCAD, and SIS--MCP recover many active variables only by selecting very large models, with mean FDPs close to $0.88$--$0.90$. The one-step S3VS procedures remain near-zero-recovery baselines in these scenarios, which reinforces the role of later iterations.

\begin{table}[htbp]
\centering
\small
\caption{Performance in the prespecified proxy and no-proxy scenarios at $n$ = 800, $p$ = 20,000, and the moderate signal regime.}
\label{tab:LM-proxy-scenarios}
\begin{adjustbox}{max width=\textwidth,max height=0.82\textheight,keepaspectratio}
\begin{tabular}{lrrrrrrrrr}
\toprule
Scenario & Method & \shortstack[c]{Independent\\datasets} & \shortstack[c]{P(nonempty),\\MCSE} & \shortstack[c]{TPR,\\MCSE} & \shortstack[c]{Sure screening,\\MCSE} & \shortstack[c]{FDP | nonempty,\\MCSE} & \shortstack[c]{Mean FDP,\\MCSE} & \shortstack[c]{Mean size,\\MCSE} & \shortstack[c]{Independent-test\\MSE, MCSE} \\
\midrule
misleading-null & S3VS-full-LASSO & 100 & 1.000 (0.000) & 0.762 (0.008) & 0.000 (0.000) & 0.811 (0.002) & 0.811 (0.002) & 100.940 (0.108) & 1.495 (0.013) \\
misleading-null & S3VS-full-MCP & 100 & 0.620 (0.049) & 0.044 (0.004) & 0.000 (0.000) & 0.000 (0.000) & 0.000 (0.000) & 1.100 (0.111) & 1.932 (0.016) \\
misleading-null & S3VS-full-NLP & 100 & 1.000 (0.000) & 0.772 (0.008) & 0.000 (0.000) & 0.286 (0.011) & 0.286 (0.011) & 27.670 (0.503) & 1.236 (0.010) \\
misleading-null & S3VS-full-SCAD & 100 & 0.660 (0.048) & 0.047 (0.005) & 0.000 (0.000) & 0.000 (0.000) & 0.000 (0.000) & 1.170 (0.115) & 1.930 (0.016) \\
misleading-null & S3VS-one-step-LASSO & 100 & 1.000 (0.000) & 0.040 (0.000) & 0.000 (0.000) & 0.000 (0.000) & 0.000 (0.000) & 1.000 (0.000) & 1.979 (0.013) \\
misleading-null & S3VS-one-step-MCP & 100 & 0.600 (0.049) & 0.024 (0.002) & 0.000 (0.000) & 0.000 (0.000) & 0.000 (0.000) & 0.600 (0.049) & 1.979 (0.013) \\
misleading-null & S3VS-one-step-NLP & 100 & 1.000 (0.000) & 0.040 (0.000) & 0.000 (0.000) & 0.000 (0.000) & 0.000 (0.000) & 1.000 (0.000) & 1.979 (0.013) \\
misleading-null & S3VS-one-step-SCAD & 100 & 0.640 (0.048) & 0.026 (0.002) & 0.000 (0.000) & 0.000 (0.000) & 0.000 (0.000) & 0.640 (0.048) & 1.979 (0.013) \\
misleading-null & SIS-LASSO & 100 & 1.000 (0.000) & 0.788 (0.007) & 0.000 (0.000) & 0.892 (0.001) & 0.892 (0.001) & 182.450 (0.378) & 1.590 (0.011) \\
misleading-null & SIS-MCP & 100 & 1.000 (0.000) & 0.784 (0.007) & 0.000 (0.000) & 0.881 (0.001) & 0.881 (0.001) & 166.050 (0.909) & 1.602 (0.012) \\
misleading-null & SIS-NLP & 100 & 1.000 (0.000) & 0.728 (0.008) & 0.000 (0.000) & 0.223 (0.010) & 0.223 (0.010) & 23.880 (0.416) & 1.244 (0.010) \\
misleading-null & SIS-SCAD & 100 & 1.000 (0.000) & 0.786 (0.007) & 0.000 (0.000) & 0.887 (0.001) & 0.887 (0.001) & 174.000 (0.731) & 1.595 (0.012) \\
no-proxy & S3VS-full-LASSO & 100 & 1.000 (0.000) & 0.763 (0.008) & 0.000 (0.000) & 0.811 (0.002) & 0.811 (0.002) & 101.020 (0.112) & 1.540 (0.013) \\
no-proxy & S3VS-full-MCP & 100 & 0.620 (0.049) & 0.038 (0.004) & 0.000 (0.000) & 0.000 (0.000) & 0.000 (0.000) & 0.940 (0.107) & 2.000 (0.012) \\
no-proxy & S3VS-full-NLP & 100 & 1.000 (0.000) & 0.764 (0.008) & 0.010 (0.010) & 0.286 (0.011) & 0.286 (0.011) & 27.210 (0.427) & 1.269 (0.010) \\
no-proxy & S3VS-full-SCAD & 100 & 0.630 (0.049) & 0.038 (0.004) & 0.000 (0.000) & 0.000 (0.000) & 0.000 (0.000) & 0.940 (0.105) & 2.001 (0.012) \\
no-proxy & S3VS-one-step-LASSO & 100 & 1.000 (0.000) & 0.040 (0.000) & 0.000 (0.000) & 0.000 (0.000) & 0.000 (0.000) & 1.000 (0.000) & 2.028 (0.010) \\
no-proxy & S3VS-one-step-MCP & 100 & 0.590 (0.049) & 0.024 (0.002) & 0.000 (0.000) & 0.000 (0.000) & 0.000 (0.000) & 0.590 (0.049) & 2.028 (0.010) \\
no-proxy & S3VS-one-step-NLP & 100 & 1.000 (0.000) & 0.040 (0.000) & 0.000 (0.000) & 0.000 (0.000) & 0.000 (0.000) & 1.000 (0.000) & 2.028 (0.010) \\
no-proxy & S3VS-one-step-SCAD & 100 & 0.590 (0.049) & 0.024 (0.002) & 0.000 (0.000) & 0.000 (0.000) & 0.000 (0.000) & 0.590 (0.049) & 2.028 (0.010) \\
no-proxy & SIS-LASSO & 100 & 1.000 (0.000) & 0.790 (0.007) & 0.000 (0.000) & 0.892 (0.001) & 0.892 (0.001) & 183.730 (0.395) & 1.637 (0.011) \\
no-proxy & SIS-MCP & 100 & 1.000 (0.000) & 0.787 (0.007) & 0.000 (0.000) & 0.882 (0.001) & 0.882 (0.001) & 167.490 (1.054) & 1.648 (0.011) \\
no-proxy & SIS-NLP & 100 & 1.000 (0.000) & 0.718 (0.008) & 0.000 (0.000) & 0.222 (0.012) & 0.222 (0.012) & 23.530 (0.401) & 1.281 (0.010) \\
no-proxy & SIS-SCAD & 100 & 1.000 (0.000) & 0.788 (0.007) & 0.000 (0.000) & 0.887 (0.001) & 0.887 (0.001) & 174.880 (0.870) & 1.642 (0.011) \\
null-proxy & S3VS-full-LASSO & 100 & 1.000 (0.000) & 0.764 (0.007) & 0.000 (0.000) & 0.811 (0.002) & 0.811 (0.002) & 101.170 (0.111) & 1.526 (0.013) \\
null-proxy & S3VS-full-MCP & 100 & 0.640 (0.048) & 0.045 (0.004) & 0.000 (0.000) & 0.008 (0.008) & 0.005 (0.005) & 1.130 (0.111) & 1.962 (0.013) \\
null-proxy & S3VS-full-NLP & 100 & 1.000 (0.000) & 0.767 (0.008) & 0.000 (0.000) & 0.278 (0.010) & 0.278 (0.010) & 26.960 (0.409) & 1.245 (0.009) \\
null-proxy & S3VS-full-SCAD & 100 & 0.690 (0.046) & 0.048 (0.004) & 0.000 (0.000) & 0.007 (0.007) & 0.005 (0.005) & 1.220 (0.112) & 1.959 (0.013) \\
null-proxy & S3VS-one-step-LASSO & 100 & 1.000 (0.000) & 0.040 (0.000) & 0.000 (0.000) & 0.000 (0.000) & 0.000 (0.000) & 1.000 (0.000) & 2.007 (0.011) \\
null-proxy & S3VS-one-step-MCP & 100 & 0.630 (0.049) & 0.025 (0.002) & 0.000 (0.000) & 0.000 (0.000) & 0.000 (0.000) & 0.630 (0.049) & 2.007 (0.011) \\
null-proxy & S3VS-one-step-NLP & 100 & 1.000 (0.000) & 0.040 (0.000) & 0.000 (0.000) & 0.000 (0.000) & 0.000 (0.000) & 1.000 (0.000) & 2.007 (0.011) \\
null-proxy & S3VS-one-step-SCAD & 100 & 0.650 (0.048) & 0.026 (0.002) & 0.000 (0.000) & 0.000 (0.000) & 0.000 (0.000) & 0.650 (0.048) & 2.007 (0.011) \\
null-proxy & SIS-LASSO & 100 & 1.000 (0.000) & 0.795 (0.007) & 0.000 (0.000) & 0.890 (0.001) & 0.890 (0.001) & 180.270 (0.427) & 1.608 (0.012) \\
null-proxy & SIS-MCP & 100 & 1.000 (0.000) & 0.792 (0.007) & 0.000 (0.000) & 0.878 (0.001) & 0.878 (0.001) & 163.310 (1.012) & 1.618 (0.012) \\
null-proxy & SIS-NLP & 100 & 1.000 (0.000) & 0.718 (0.008) & 0.000 (0.000) & 0.221 (0.009) & 0.221 (0.009) & 23.330 (0.377) & 1.262 (0.010) \\
null-proxy & SIS-SCAD & 100 & 1.000 (0.000) & 0.793 (0.007) & 0.000 (0.000) & 0.884 (0.001) & 0.884 (0.001) & 171.610 (0.794) & 1.613 (0.012) \\
proxy-cancellation & S3VS-full-LASSO & 100 & 1.000 (0.000) & 0.759 (0.007) & 0.000 (0.000) & 0.812 (0.002) & 0.812 (0.002) & 100.960 (0.113) & 1.508 (0.012) \\
proxy-cancellation & S3VS-full-MCP & 100 & 0.710 (0.046) & 0.054 (0.005) & 0.000 (0.000) & 0.000 (0.000) & 0.000 (0.000) & 1.350 (0.122) & 1.943 (0.014) \\
proxy-cancellation & S3VS-full-NLP & 100 & 1.000 (0.000) & 0.758 (0.008) & 0.000 (0.000) & 0.299 (0.011) & 0.299 (0.011) & 27.610 (0.488) & 1.253 (0.009) \\
proxy-cancellation & S3VS-full-SCAD & 100 & 0.690 (0.046) & 0.055 (0.005) & 0.000 (0.000) & 0.000 (0.000) & 0.000 (0.000) & 1.370 (0.131) & 1.940 (0.014) \\
proxy-cancellation & S3VS-one-step-LASSO & 100 & 1.000 (0.000) & 0.040 (0.000) & 0.000 (0.000) & 0.000 (0.000) & 0.000 (0.000) & 1.000 (0.000) & 2.007 (0.010) \\
proxy-cancellation & S3VS-one-step-MCP & 100 & 0.680 (0.047) & 0.027 (0.002) & 0.000 (0.000) & 0.000 (0.000) & 0.000 (0.000) & 0.680 (0.047) & 2.007 (0.010) \\
proxy-cancellation & S3VS-one-step-NLP & 100 & 1.000 (0.000) & 0.040 (0.000) & 0.000 (0.000) & 0.000 (0.000) & 0.000 (0.000) & 1.000 (0.000) & 2.007 (0.010) \\
proxy-cancellation & S3VS-one-step-SCAD & 100 & 0.660 (0.048) & 0.026 (0.002) & 0.000 (0.000) & 0.000 (0.000) & 0.000 (0.000) & 0.660 (0.048) & 2.007 (0.010) \\
proxy-cancellation & SIS-LASSO & 100 & 1.000 (0.000) & 0.760 (0.006) & 0.000 (0.000) & 0.895 (0.001) & 0.895 (0.001) & 181.810 (0.432) & 1.641 (0.010) \\
proxy-cancellation & SIS-MCP & 100 & 1.000 (0.000) & 0.758 (0.006) & 0.000 (0.000) & 0.886 (0.001) & 0.886 (0.001) & 166.310 (0.966) & 1.655 (0.010) \\
proxy-cancellation & SIS-NLP & 100 & 1.000 (0.000) & 0.689 (0.007) & 0.000 (0.000) & 0.219 (0.013) & 0.219 (0.013) & 22.630 (0.414) & 1.271 (0.009) \\
proxy-cancellation & SIS-SCAD & 100 & 1.000 (0.000) & 0.759 (0.006) & 0.000 (0.000) & 0.890 (0.001) & 0.890 (0.001) & 173.270 (0.835) & 1.646 (0.010) \\
shared-proxy & S3VS-full-LASSO & 100 & 1.000 (0.000) & 0.759 (0.008) & 0.000 (0.000) & 0.812 (0.002) & 0.812 (0.002) & 101.090 (0.109) & 1.488 (0.011) \\
shared-proxy & S3VS-full-MCP & 100 & 0.650 (0.048) & 0.053 (0.006) & 0.000 (0.000) & 0.000 (0.000) & 0.000 (0.000) & 1.320 (0.141) & 1.944 (0.015) \\
shared-proxy & S3VS-full-NLP & 100 & 1.000 (0.000) & 0.741 (0.008) & 0.000 (0.000) & 0.285 (0.010) & 0.285 (0.010) & 26.330 (0.408) & 1.247 (0.009) \\
shared-proxy & S3VS-full-SCAD & 100 & 0.640 (0.048) & 0.052 (0.006) & 0.000 (0.000) & 0.003 (0.003) & 0.002 (0.002) & 1.320 (0.142) & 1.941 (0.016) \\
shared-proxy & S3VS-one-step-LASSO & 100 & 1.000 (0.000) & 0.040 (0.000) & 0.000 (0.000) & 0.000 (0.000) & 0.000 (0.000) & 1.000 (0.000) & 2.004 (0.011) \\
shared-proxy & S3VS-one-step-MCP & 100 & 0.640 (0.048) & 0.026 (0.002) & 0.000 (0.000) & 0.000 (0.000) & 0.000 (0.000) & 0.640 (0.048) & 2.004 (0.011) \\
shared-proxy & S3VS-one-step-NLP & 100 & 1.000 (0.000) & 0.040 (0.000) & 0.000 (0.000) & 0.000 (0.000) & 0.000 (0.000) & 1.000 (0.000) & 2.004 (0.011) \\
shared-proxy & S3VS-one-step-SCAD & 100 & 0.610 (0.049) & 0.024 (0.002) & 0.000 (0.000) & 0.000 (0.000) & 0.000 (0.000) & 0.610 (0.049) & 2.004 (0.011) \\
shared-proxy & SIS-LASSO & 100 & 1.000 (0.000) & 0.802 (0.006) & 0.000 (0.000) & 0.888 (0.001) & 0.888 (0.001) & 179.610 (0.470) & 1.596 (0.010) \\
shared-proxy & SIS-MCP & 100 & 1.000 (0.000) & 0.792 (0.006) & 0.000 (0.000) & 0.879 (0.001) & 0.879 (0.001) & 164.050 (0.934) & 1.607 (0.010) \\
shared-proxy & SIS-NLP & 100 & 1.000 (0.000) & 0.706 (0.007) & 0.000 (0.000) & 0.201 (0.010) & 0.201 (0.010) & 22.420 (0.337) & 1.244 (0.009) \\
shared-proxy & SIS-SCAD & 100 & 1.000 (0.000) & 0.796 (0.006) & 0.000 (0.000) & 0.884 (0.001) & 0.884 (0.001) & 171.570 (0.833) & 1.598 (0.010) \\
\bottomrule
\end{tabular}
\end{adjustbox}
\end{table}

Table~\ref{tab:LM-covariance-performance} and Figure~\ref{fig:LM-supp-covariance} show that recovery varies substantially across dependence and tail structures. Full S3VS--NLP has relatively low TPR in the unequal-block and factor-related settings and much higher TPR in the contaminated and heavy-tailed settings. The relative performance of NLP, LASSO, MCP, and SCAD therefore depends on the design structure. Full S3VS--LASSO again selects about 100 variables and has high FDP, whereas full S3VS--MCP and full S3VS--SCAD keep FDP low at the cost of lower TPR. Table~\ref{tab:LM-method-effects} and Figure~\ref{fig:LM-supp-method-effects} give the corresponding pooled approach-by-selector summaries.

{
\setlength{\LTcapwidth}{\textwidth}
\setlength{\tabcolsep}{3.5pt}
\footnotesize
\begin{longtable}{llrrrrr}
\caption{Performance across named covariance and non-Gaussian design structures at $n = 800$, $p = 20{,}000$, and the moderate signal regime.}
\label{tab:LM-covariance-performance}\\

\toprule
Scenario & Method &
\shortstack[c]{Replicates} &
\shortstack[c]{TPR,\\MCSE} &
\shortstack[c]{Mean FDP,\\MCSE} &
\shortstack[c]{Mean size,\\MCSE} &
\shortstack[c]{Independent-test\\MSE, MCSE} \\
\midrule
\endfirsthead

\multicolumn{7}{c}%
{{\tablename\ \thetable{} -- continued from previous page}}\\
\toprule
Scenario & Method &
\shortstack[c]{Independent\\datasets} &
\shortstack[c]{TPR,\\MCSE} &
\shortstack[c]{Mean FDP,\\MCSE} &
\shortstack[c]{Mean size,\\MCSE} &
\shortstack[c]{Independent-test\\MSE, MCSE} \\
\midrule
\endhead

\midrule
\multicolumn{7}{r}{{Continued on next page}}\\
\endfoot

\bottomrule
\endlastfoot

ar1 & S3VS-full-LASSO & 100 & 0.671 (0.017) & 0.834 (0.004) & 100.940 (0.117) & 1.524 (0.011) \\
ar1 & S3VS-full-MCP & 100 & 0.175 (0.008) & 0.000 (0.000) & 4.370 (0.204) & 1.278 (0.012) \\
ar1 & S3VS-full-NLP & 100 & 0.357 (0.014) & 0.359 (0.015) & 14.000 (0.433) & 1.218 (0.008) \\
ar1 & S3VS-full-SCAD & 100 & 0.235 (0.010) & 0.000 (0.000) & 5.880 (0.242) & 1.252 (0.012) \\
ar1 & S3VS-one-step-LASSO & 100 & 0.040 (0.000) & 0.000 (0.000) & 1.000 (0.000) & 2.011 (0.009) \\
ar1 & S3VS-one-step-MCP & 100 & 0.040 (0.000) & 0.000 (0.000) & 1.000 (0.000) & 2.011 (0.009) \\
ar1 & S3VS-one-step-NLP & 100 & 0.040 (0.000) & 0.000 (0.000) & 1.000 (0.000) & 2.011 (0.009) \\
ar1 & S3VS-one-step-SCAD & 100 & 0.040 (0.000) & 0.000 (0.000) & 1.000 (0.000) & 2.011 (0.009) \\
ar1 & SIS-LASSO & 100 & 0.502 (0.009) & 0.896 (0.002) & 121.210 (0.930) & 1.464 (0.017) \\
ar1 & SIS-MCP & 100 & 0.342 (0.010) & 0.890 (0.003) & 78.330 (1.430) & 1.480 (0.018) \\
ar1 & SIS-NLP & 100 & 0.246 (0.009) & 0.140 (0.015) & 7.360 (0.289) & 1.233 (0.012) \\
ar1 & SIS-SCAD & 100 & 0.356 (0.012) & 0.904 (0.003) & 91.670 (1.203) & 1.474 (0.018) \\

baseline-block & S3VS-full-LASSO & 100 & 0.129 (0.006) & 0.968 (0.001) & 101.050 (0.118) & 1.944 (0.043) \\
baseline-block & S3VS-full-MCP & 100 & 0.051 (0.002) & 0.025 (0.013) & 1.320 (0.062) & 1.827 (0.035) \\
baseline-block & S3VS-full-NLP & 100 & 0.091 (0.005) & 0.515 (0.028) & 6.070 (0.330) & 1.426 (0.032) \\
baseline-block & S3VS-full-SCAD & 100 & 0.056 (0.003) & 0.035 (0.014) & 1.490 (0.085) & 1.806 (0.036) \\
baseline-block & S3VS-one-step-LASSO & 100 & 0.038 (0.001) & 0.040 (0.020) & 1.000 (0.000) & 2.019 (0.011) \\
baseline-block & S3VS-one-step-MCP & 100 & 0.038 (0.001) & 0.010 (0.010) & 0.970 (0.017) & 2.019 (0.011) \\
baseline-block & S3VS-one-step-NLP & 100 & 0.038 (0.001) & 0.020 (0.014) & 0.980 (0.014) & 2.019 (0.011) \\
baseline-block & S3VS-one-step-SCAD & 100 & 0.038 (0.001) & 0.010 (0.010) & 0.970 (0.017) & 2.019 (0.011) \\
baseline-block & SIS-LASSO & 100 & 0.588 (0.027) & 0.728 (0.025) & 88.920 (5.040) & 1.382 (0.031) \\
baseline-block & SIS-MCP & 100 & 0.400 (0.031) & 0.724 (0.027) & 61.970 (4.651) & 1.458 (0.032) \\
baseline-block & SIS-NLP & 100 & 0.387 (0.033) & 0.153 (0.017) & 11.320 (0.924) & 1.213 (0.021) \\
baseline-block & SIS-SCAD & 100 & 0.473 (0.030) & 0.755 (0.023) & 74.750 (4.693) & 1.423 (0.032) \\

contaminated & S3VS-full-LASSO & 100 & 0.747 (0.008) & 0.815 (0.002) & 100.840 (0.106) & 1.530 (0.011) \\
contaminated & S3VS-full-MCP & 100 & 0.039 (0.004) & 0.000 (0.000) & 0.970 (0.111) & 1.965 (0.014) \\
contaminated & S3VS-full-NLP & 100 & 0.755 (0.009) & 0.277 (0.010) & 26.440 (0.385) & 1.248 (0.009) \\
contaminated & S3VS-full-SCAD & 100 & 0.036 (0.004) & 0.000 (0.000) & 0.890 (0.105) & 1.968 (0.014) \\
contaminated & S3VS-one-step-LASSO & 100 & 0.040 (0.000) & 0.000 (0.000) & 1.000 (0.000) & 2.008 (0.011) \\
contaminated & S3VS-one-step-MCP & 100 & 0.021 (0.002) & 0.000 (0.000) & 0.520 (0.050) & 2.008 (0.011) \\
contaminated & S3VS-one-step-NLP & 100 & 0.040 (0.000) & 0.000 (0.000) & 1.000 (0.000) & 2.008 (0.011) \\
contaminated & S3VS-one-step-SCAD & 100 & 0.021 (0.002) & 0.000 (0.000) & 0.520 (0.050) & 2.008 (0.011) \\
contaminated & SIS-LASSO & 100 & 0.792 (0.007) & 0.886 (0.001) & 173.880 (1.109) & 1.597 (0.011) \\
contaminated & SIS-MCP & 100 & 0.787 (0.007) & 0.867 (0.003) & 153.810 (2.525) & 1.606 (0.011) \\
contaminated & SIS-NLP & 100 & 0.711 (0.008) & 0.206 (0.011) & 22.790 (0.396) & 1.257 (0.009) \\
contaminated & SIS-SCAD & 100 & 0.790 (0.007) & 0.876 (0.002) & 162.070 (2.121) & 1.596 (0.011) \\

factor & S3VS-full-LASSO & 100 & 0.585 (0.012) & 0.855 (0.003) & 100.790 (0.108) & 1.499 (0.014) \\
factor & S3VS-full-MCP & 100 & 0.101 (0.005) & 0.127 (0.025) & 2.910 (0.127) & 1.724 (0.017) \\
factor & S3VS-full-NLP & 100 & 0.479 (0.013) & 0.282 (0.013) & 16.720 (0.407) & 1.327 (0.011) \\
factor & S3VS-full-SCAD & 100 & 0.099 (0.005) & 0.267 (0.028) & 3.630 (0.169) & 1.712 (0.016) \\
factor & S3VS-one-step-LASSO & 100 & 0.037 (0.001) & 0.070 (0.026) & 1.000 (0.000) & 1.991 (0.011) \\
factor & S3VS-one-step-MCP & 100 & 0.037 (0.001) & 0.070 (0.026) & 0.990 (0.010) & 1.991 (0.011) \\
factor & S3VS-one-step-NLP & 100 & 0.037 (0.001) & 0.070 (0.026) & 1.000 (0.000) & 1.991 (0.011) \\
factor & S3VS-one-step-SCAD & 100 & 0.037 (0.001) & 0.070 (0.026) & 0.990 (0.010) & 1.991 (0.011) \\
factor & SIS-LASSO & 100 & 0.216 (0.010) & 0.690 (0.018) & 21.100 (1.081) & 1.632 (0.015) \\
factor & SIS-MCP & 100 & 0.191 (0.009) & 0.350 (0.025) & 8.490 (0.498) & 1.637 (0.016) \\
factor & SIS-NLP & 100 & 0.154 (0.007) & 0.141 (0.020) & 4.430 (0.178) & 1.655 (0.016) \\
factor & SIS-SCAD & 100 & 0.216 (0.010) & 0.684 (0.018) & 20.290 (0.992) & 1.631 (0.015) \\

heavy-tailed & S3VS-full-LASSO & 100 & 0.755 (0.008) & 0.813 (0.002) & 101.050 (0.109) & 1.474 (0.014) \\
heavy-tailed & S3VS-full-MCP & 100 & 0.038 (0.004) & 0.000 (0.000) & 0.940 (0.093) & 1.952 (0.013) \\
heavy-tailed & S3VS-full-NLP & 100 & 0.765 (0.008) & 0.287 (0.009) & 27.210 (0.421) & 1.229 (0.009) \\
heavy-tailed & S3VS-full-SCAD & 100 & 0.035 (0.004) & 0.000 (0.000) & 0.870 (0.088) & 1.957 (0.012) \\
heavy-tailed & S3VS-one-step-LASSO & 100 & 0.040 (0.000) & 0.000 (0.000) & 1.000 (0.000) & 1.984 (0.011) \\
heavy-tailed & S3VS-one-step-MCP & 100 & 0.024 (0.002) & 0.000 (0.000) & 0.600 (0.049) & 1.984 (0.011) \\
heavy-tailed & S3VS-one-step-NLP & 100 & 0.040 (0.000) & 0.000 (0.000) & 1.000 (0.000) & 1.984 (0.011) \\
heavy-tailed & S3VS-one-step-SCAD & 100 & 0.023 (0.002) & 0.000 (0.000) & 0.580 (0.050) & 1.984 (0.011) \\
heavy-tailed & SIS-LASSO & 100 & 0.791 (0.007) & 0.890 (0.001) & 180.470 (0.709) & 1.587 (0.011) \\
heavy-tailed & SIS-MCP & 100 & 0.788 (0.007) & 0.876 (0.003) & 161.800 (1.828) & 1.597 (0.011) \\
heavy-tailed & SIS-NLP & 100 & 0.716 (0.008) & 0.215 (0.012) & 23.280 (0.434) & 1.238 (0.009) \\
heavy-tailed & SIS-SCAD & 100 & 0.790 (0.007) & 0.883 (0.002) & 170.230 (1.342) & 1.590 (0.011) \\

negative-dependence & S3VS-full-LASSO & 100 & 0.570 (0.011) & 0.859 (0.003) & 101.010 (0.119) & 1.526 (0.013) \\
negative-dependence & S3VS-full-MCP & 100 & 0.086 (0.004) & 0.137 (0.022) & 2.570 (0.101) & 1.783 (0.016) \\
negative-dependence & S3VS-full-NLP & 100 & 0.471 (0.013) & 0.284 (0.013) & 16.490 (0.365) & 1.350 (0.012) \\
negative-dependence & S3VS-full-SCAD & 100 & 0.086 (0.004) & 0.273 (0.028) & 3.300 (0.142) & 1.766 (0.016) \\
negative-dependence & S3VS-one-step-LASSO & 100 & 0.036 (0.001) & 0.100 (0.030) & 1.000 (0.000) & 2.007 (0.012) \\
negative-dependence & S3VS-one-step-MCP & 100 & 0.036 (0.001) & 0.100 (0.030) & 1.000 (0.000) & 2.007 (0.012) \\
negative-dependence & S3VS-one-step-NLP & 100 & 0.036 (0.001) & 0.100 (0.030) & 1.000 (0.000) & 2.007 (0.012) \\
negative-dependence & S3VS-one-step-SCAD & 100 & 0.036 (0.001) & 0.100 (0.030) & 1.000 (0.000) & 2.007 (0.012) \\
negative-dependence & SIS-LASSO & 100 & 0.226 (0.009) & 0.721 (0.017) & 24.880 (1.211) & 1.643 (0.014) \\
negative-dependence & SIS-MCP & 100 & 0.198 (0.009) & 0.396 (0.027) & 9.560 (0.481) & 1.647 (0.015) \\
negative-dependence & SIS-NLP & 100 & 0.154 (0.007) & 0.177 (0.020) & 4.650 (0.166) & 1.671 (0.014) \\
negative-dependence & SIS-SCAD & 100 & 0.226 (0.009) & 0.718 (0.016) & 23.530 (1.091) & 1.643 (0.013) \\

overlapping-modules & S3VS-full-LASSO & 100 & 0.543 (0.013) & 0.866 (0.003) & 101.190 (0.122) & 1.527 (0.014) \\
overlapping-modules & S3VS-full-MCP & 100 & 0.088 (0.004) & 0.180 (0.028) & 2.730 (0.101) & 1.717 (0.016) \\
overlapping-modules & S3VS-full-NLP & 100 & 0.404 (0.014) & 0.291 (0.014) & 14.170 (0.406) & 1.374 (0.013) \\
overlapping-modules & S3VS-full-SCAD & 100 & 0.089 (0.005) & 0.341 (0.031) & 3.720 (0.150) & 1.699 (0.016) \\
overlapping-modules & S3VS-one-step-LASSO & 100 & 0.035 (0.001) & 0.130 (0.034) & 1.000 (0.000) & 1.999 (0.014) \\
overlapping-modules & S3VS-one-step-MCP & 100 & 0.034 (0.001) & 0.130 (0.034) & 0.990 (0.010) & 1.999 (0.014) \\
overlapping-modules & S3VS-one-step-NLP & 100 & 0.035 (0.001) & 0.130 (0.034) & 1.000 (0.000) & 1.999 (0.014) \\
overlapping-modules & S3VS-one-step-SCAD & 100 & 0.034 (0.001) & 0.130 (0.034) & 0.990 (0.010) & 1.999 (0.014) \\
overlapping-modules & SIS-LASSO & 100 & 0.204 (0.009) & 0.735 (0.017) & 23.770 (1.084) & 1.622 (0.014) \\
overlapping-modules & SIS-MCP & 100 & 0.175 (0.008) & 0.415 (0.027) & 8.800 (0.426) & 1.620 (0.015) \\
overlapping-modules & SIS-NLP & 100 & 0.134 (0.007) & 0.213 (0.023) & 4.220 (0.143) & 1.653 (0.016) \\
overlapping-modules & SIS-SCAD & 100 & 0.204 (0.009) & 0.720 (0.018) & 22.400 (0.965) & 1.620 (0.014) \\

sparse-graph & S3VS-full-LASSO & 100 & 0.668 (0.011) & 0.835 (0.003) & 100.910 (0.108) & 1.547 (0.011) \\
sparse-graph & S3VS-full-MCP & 100 & 0.214 (0.005) & 0.000 (0.000) & 5.350 (0.131) & 1.436 (0.013) \\
sparse-graph & S3VS-full-NLP & 100 & 0.414 (0.009) & 0.390 (0.013) & 17.390 (0.422) & 1.313 (0.010) \\
sparse-graph & S3VS-full-SCAD & 100 & 0.326 (0.008) & 0.002 (0.001) & 8.170 (0.199) & 1.366 (0.013) \\
sparse-graph & S3VS-one-step-LASSO & 100 & 0.040 (0.000) & 0.000 (0.000) & 1.000 (0.000) & 2.001 (0.013) \\
sparse-graph & S3VS-one-step-MCP & 100 & 0.040 (0.000) & 0.000 (0.000) & 1.000 (0.000) & 2.001 (0.013) \\
sparse-graph & S3VS-one-step-NLP & 100 & 0.040 (0.000) & 0.000 (0.000) & 1.000 (0.000) & 2.001 (0.013) \\
sparse-graph & S3VS-one-step-SCAD & 100 & 0.040 (0.000) & 0.000 (0.000) & 1.000 (0.000) & 2.001 (0.013) \\
sparse-graph & SIS-LASSO & 100 & 0.605 (0.009) & 0.912 (0.001) & 171.640 (0.545) & 1.618 (0.014) \\
sparse-graph & SIS-MCP & 100 & 0.564 (0.009) & 0.907 (0.002) & 152.260 (1.087) & 1.638 (0.015) \\
sparse-graph & SIS-NLP & 100 & 0.361 (0.007) & 0.219 (0.016) & 11.950 (0.288) & 1.301 (0.011) \\
sparse-graph & SIS-SCAD & 100 & 0.579 (0.009) & 0.910 (0.001) & 161.340 (0.938) & 1.623 (0.014) \\
\end{longtable}
\normalsize
}

\begin{table}[htbp]
\centering
\small
\caption{Descriptive approach-by-selector summaries pooled over named scenarios at $n=800$, $p=20{,}000$, and the moderate signal regime.}
\label{tab:LM-method-effects}
\begin{adjustbox}{max width=\textwidth,keepaspectratio}
\begin{tabular}{lrrrrrrr}
\toprule
Approach & Selector & \shortstack[c]{Independent\\datasets} & \shortstack[c]{TPR,\\MCSE} & \shortstack[c]{Mean FDP,\\MCSE} & \shortstack[c]{P(nonempty),\\MCSE} & \shortstack[c]{Mean size,\\MCSE} & \shortstack[c]{Independent-test\\MSE, MCSE} \\
\midrule
S3VS-full & LASSO & 1300 & 0.652 (0.005) & 0.839 (0.001) & 1.000 (0.000) & 100.997 (0.031) & 1.548 (0.006) \\
S3VS-full & MCP & 1300 & 0.079 (0.002) & 0.036 (0.004) & 0.798 (0.011) & 2.077 (0.051) & 1.805 (0.008) \\
S3VS-full & NLP & 1300 & 0.580 (0.007) & 0.317 (0.004) & 0.998 (0.001) & 21.098 (0.225) & 1.287 (0.004) \\
S3VS-full & SCAD & 1300 & 0.092 (0.003) & 0.071 (0.005) & 0.798 (0.011) & 2.613 (0.072) & 1.792 (0.008) \\
S3VS-one-step & LASSO & 1300 & 0.039 (0.000) & 0.026 (0.004) & 1.000 (0.000) & 1.000 (0.000) & 2.004 (0.003) \\
S3VS-one-step & MCP & 1300 & 0.030 (0.000) & 0.024 (0.004) & 0.785 (0.011) & 0.785 (0.011) & 2.004 (0.003) \\
S3VS-one-step & NLP & 1300 & 0.039 (0.000) & 0.025 (0.004) & 0.998 (0.001) & 0.998 (0.001) & 2.004 (0.003) \\
S3VS-one-step & SCAD & 1300 & 0.030 (0.000) & 0.024 (0.004) & 0.785 (0.011) & 0.785 (0.011) & 2.004 (0.003) \\
SIS & LASSO & 1300 & 0.605 (0.007) & 0.840 (0.004) & 1.000 (0.000) & 131.826 (1.864) & 1.586 (0.005) \\
SIS & MCP & 1300 & 0.566 (0.008) & 0.756 (0.007) & 1.000 (0.000) & 112.479 (1.879) & 1.601 (0.005) \\
SIS & NLP & 1300 & 0.494 (0.008) & 0.196 (0.004) & 0.998 (0.001) & 15.830 (0.254) & 1.348 (0.006) \\
SIS & SCAD & 1300 & 0.581 (0.007) & 0.837 (0.004) & 1.000 (0.000) & 122.432 (1.810) & 1.592 (0.005) \\
\bottomrule
\end{tabular}
\end{adjustbox}
\end{table}

\begin{figure}[htbp]
\centering
\includegraphics[width=\textwidth]{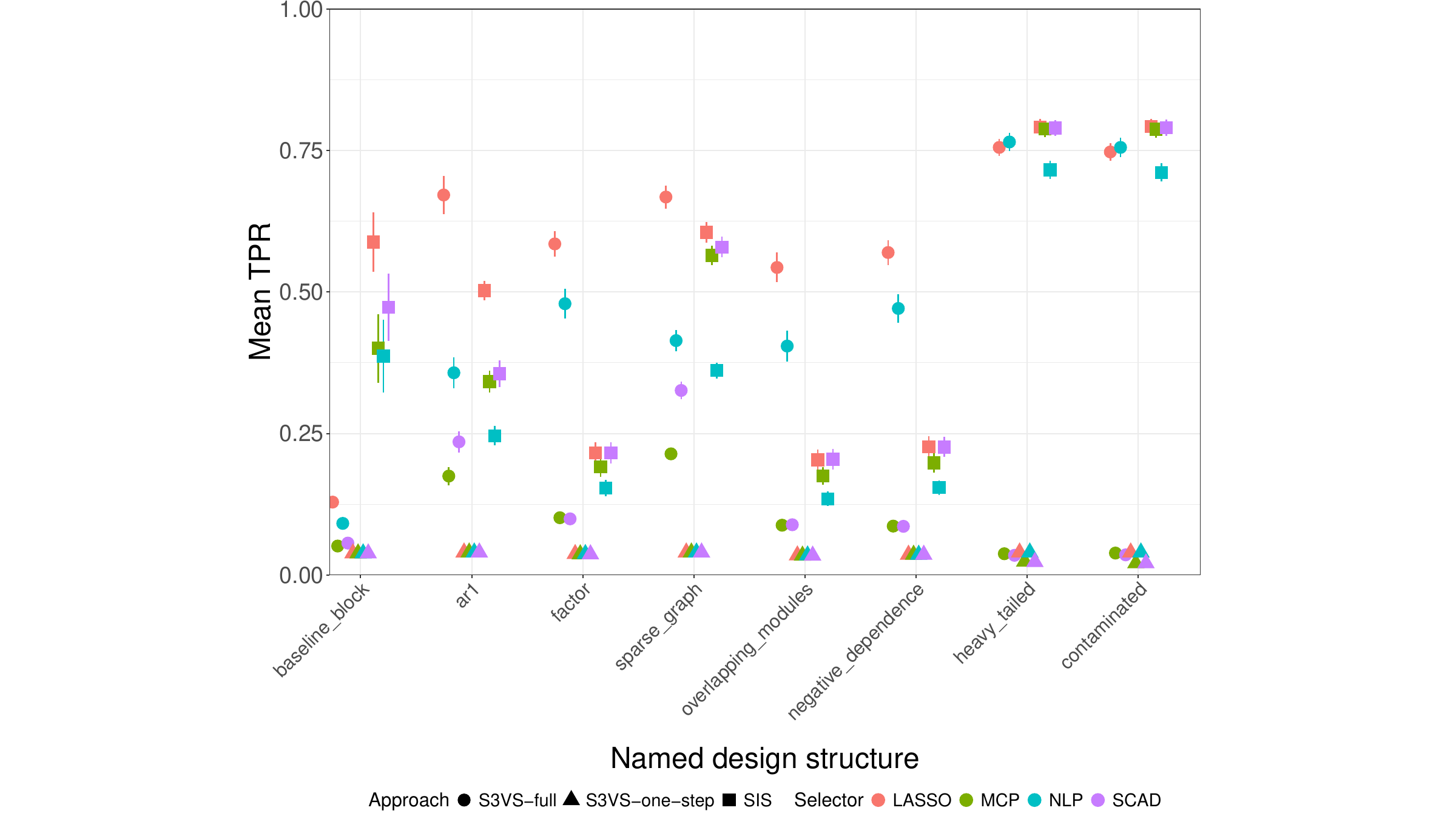}
\caption{TPR across the named covariance and non-Gaussian design structures at $n=800$, $p=20{,}000$, and the moderate signal level. Points are means and bars are 95\% Monte Carlo intervals.}
\label{fig:LM-supp-covariance}
\end{figure}

\begin{figure}[htbp]
\centering
\includegraphics[width=\textwidth]{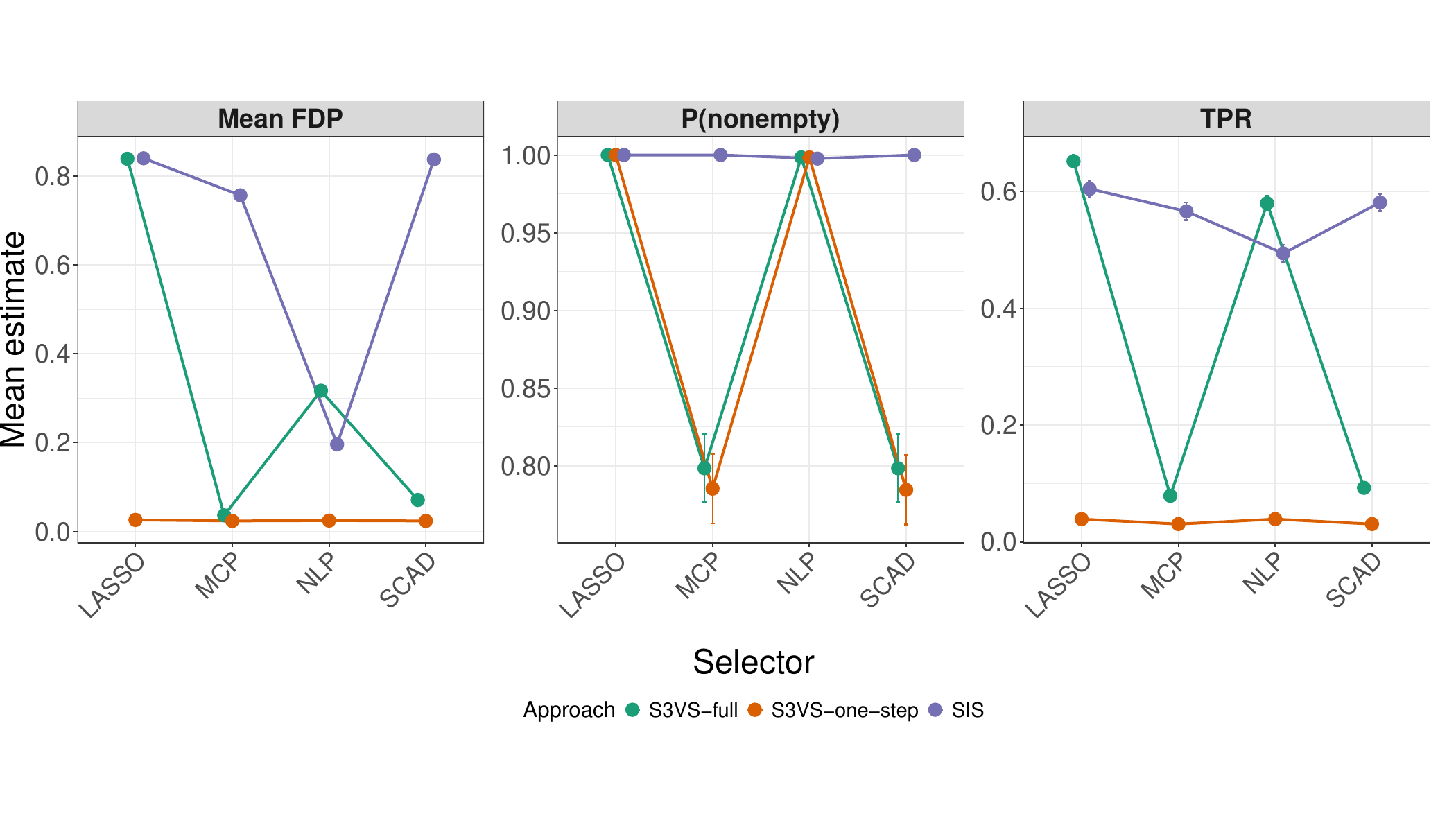}
\caption{Descriptive approach-by-selector summaries pooled over the named scenarios at the primary $n$, $p$, and moderate-signal setting.}
\label{fig:LM-supp-method-effects}
\end{figure}

\subsection{Error control, coefficient error, and diagnostics}
Figure~\ref{fig:LM-supp-fdp-heatmap} reports unconditional FDP over signal regimes and scenarios, with empty selections counted as FDP zero. Figure~\ref{fig:LM-supp-fdp-conditional} removes that structural zero by displaying FDP only among nonempty selections. The two displays should be read together: the low unconditional FDP of a method that often returns an empty or nearly empty set is not equivalent to a low FDP conditional on making a selection. In particular, the conditional distributions expose the high false-discovery burden of the large LASSO and SIS models, while the small one-step models remain low-FDP but low-TPR procedures.

\begin{figure}[htbp]
\centering
\includegraphics[width=\textwidth]{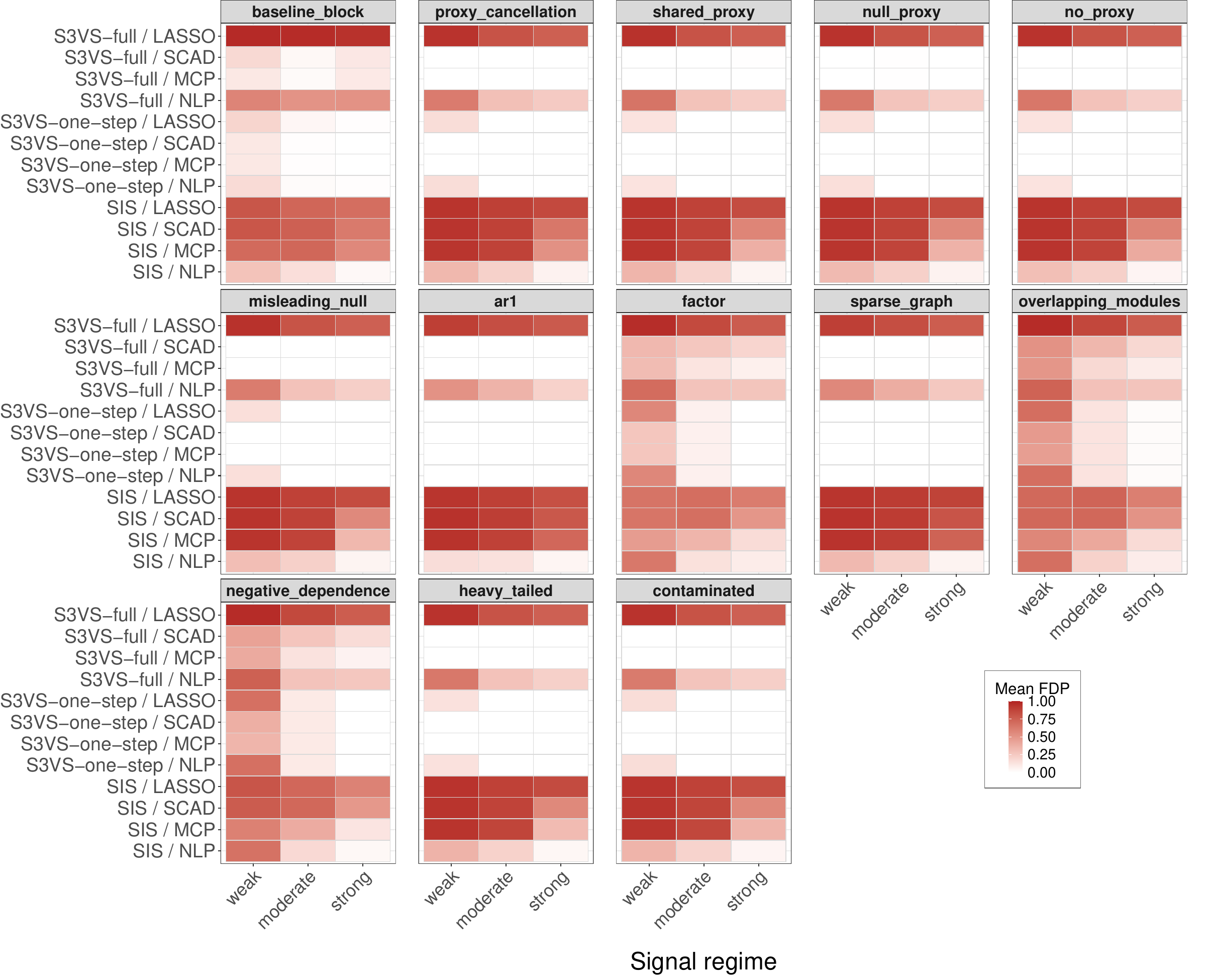}
\caption{Named-method unconditional FDP heat map at $n=800$ and $p=20{,}000$. Empty selections contribute zero to the displayed FDP.}
\label{fig:LM-supp-fdp-heatmap}
\end{figure}

\begin{figure}[htbp]
\centering
\includegraphics[width=\textwidth]{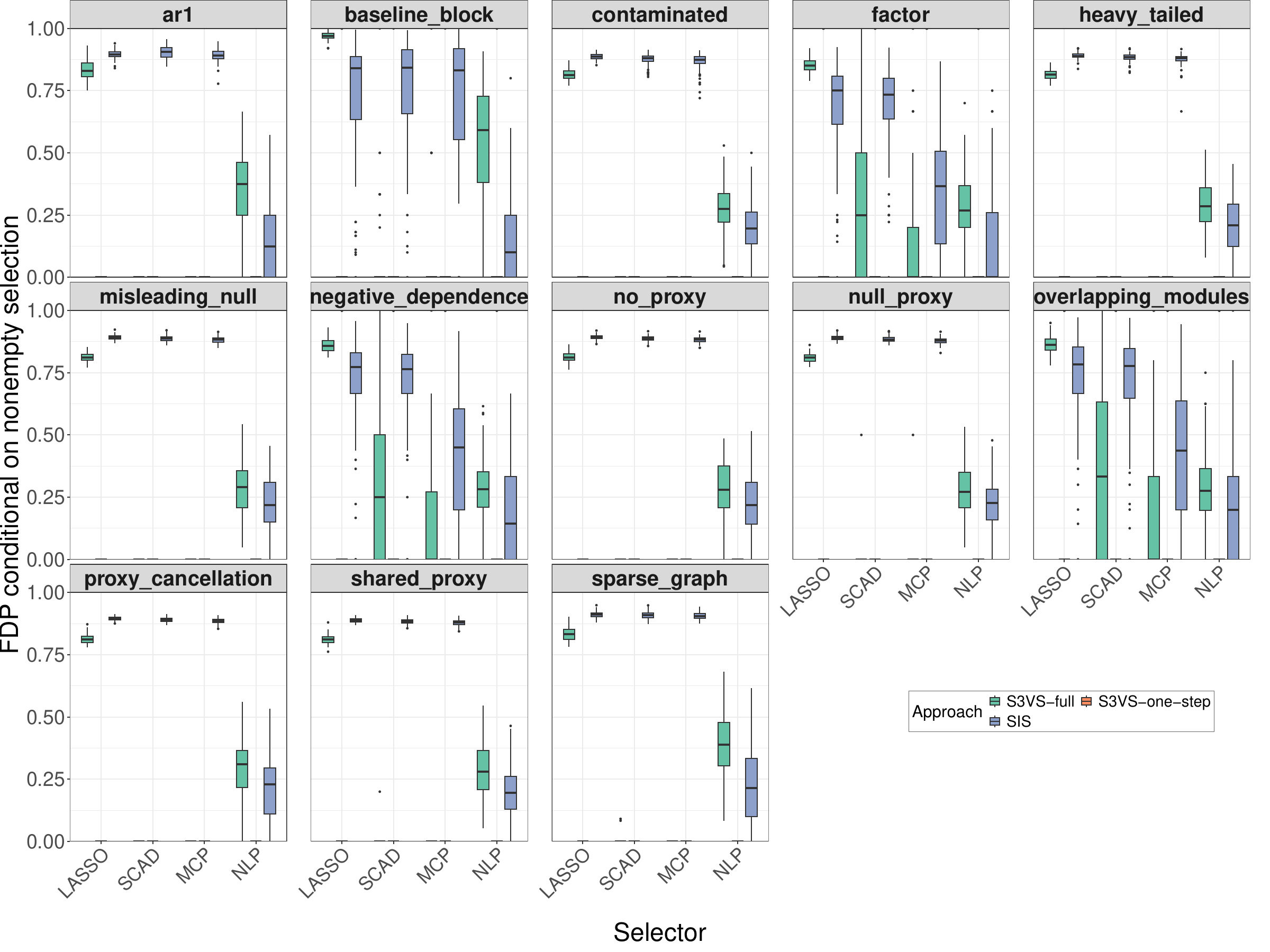}
\caption{Conditional FDP distributions among nonempty selections at $n=800$, $p=20{,}000$, and the moderate signal level.}
\label{fig:LM-supp-fdp-conditional}
\end{figure}

Figure~\ref{fig:LM-supp-coefficient-error} gives replicate-level coefficient errors after the common post-selection LASSO refit. Full S3VS often lowers coefficient error relative to its first-iteration result by adding signal variables, although the size of the improvement depends on the design structure. No selector has the smallest coefficient error in every setting. Table~\ref{tab:LM-fit-diagnostics} summarizes fitting and post-selection diagnostics. Before interpreting that table, the meanings of selection success, history success, and postfit success should be stated explicitly.

\begin{table}[htbp]
\centering
\small
\caption{Fit, history, post-selection, and iteration diagnostics across all linear-model datasets. ``Independent datasets'' denotes the number of unique simulated datasets analyzed for each method. ``Selection success, MCSE'' is the proportion of datasets for which the reported selection procedure completed successfully, with its Monte Carlo standard error (MCSE). ``History success, MCSE'' is the proportion of full iterative S3VS fits that returned a valid iteration history, with its MCSE. ``Postfit success, MCSE'' is the proportion of datasets for which the common post-selection LASSO refit and independent-test MSE calculation completed successfully, with its MCSE. ``Mean iterations, MCSE'' is the average number of iterations recorded in the full S3VS history, with its MCSE. For the one-step S3VS and one-pass SIS procedures, history-success and mean-iteration entries are not applicable and are reported as NA.}
\label{tab:LM-fit-diagnostics}
\begin{adjustbox}{max width=\textwidth,keepaspectratio}
\begin{tabular}{lrrrrr}
\toprule
Method & \shortstack[c]{Independent\\datasets} & \shortstack[c]{Selection success,\\MCSE} & \shortstack[c]{History success,\\MCSE} & \shortstack[c]{Postfit success,\\MCSE} & \shortstack[c]{Mean iterations,\\MCSE} \\
\midrule
S3VS-full-LASSO & 35100 & 1.000 (0.000) & 1.000 (0.000) & 1.000 (0.000) & 101.024 (0.006) \\
S3VS-full-MCP & 35100 & 1.000 (0.000) & 1.000 (0.000) & 0.787 (0.002) & 6.032 (0.026) \\
S3VS-full-NLP & 35100 & 1.000 (0.000) & 1.000 (0.000) & 0.996 (0.000) & 42.586 (0.183) \\
S3VS-full-SCAD & 35100 & 1.000 (0.000) & 1.000 (0.000) & 0.823 (0.002) & 6.407 (0.028) \\
S3VS-one-step-LASSO & 35100 & 1.000 (0.000) & NA & 0.000 (0.000) & NA \\
S3VS-one-step-MCP & 35100 & 1.000 (0.000) & NA & 0.374 (0.003) & NA \\
S3VS-one-step-NLP & 35100 & 1.000 (0.000) & NA & 0.000 (0.000) & NA \\
S3VS-one-step-SCAD & 35100 & 1.000 (0.000) & NA & 0.376 (0.003) & NA \\
SIS-LASSO & 35100 & 1.000 (0.000) & NA & 0.998 (0.000) & NA \\
SIS-MCP & 35100 & 1.000 (0.000) & NA & 0.982 (0.001) & NA \\
SIS-NLP & 35100 & 1.000 (0.000) & NA & 0.962 (0.001) & NA \\
SIS-SCAD & 35100 & 1.000 (0.000) & NA & 0.997 (0.000) & NA \\
\bottomrule
\end{tabular}
\end{adjustbox}
\end{table}

\begin{figure}[htbp]
\centering
\includegraphics[width=\textwidth]{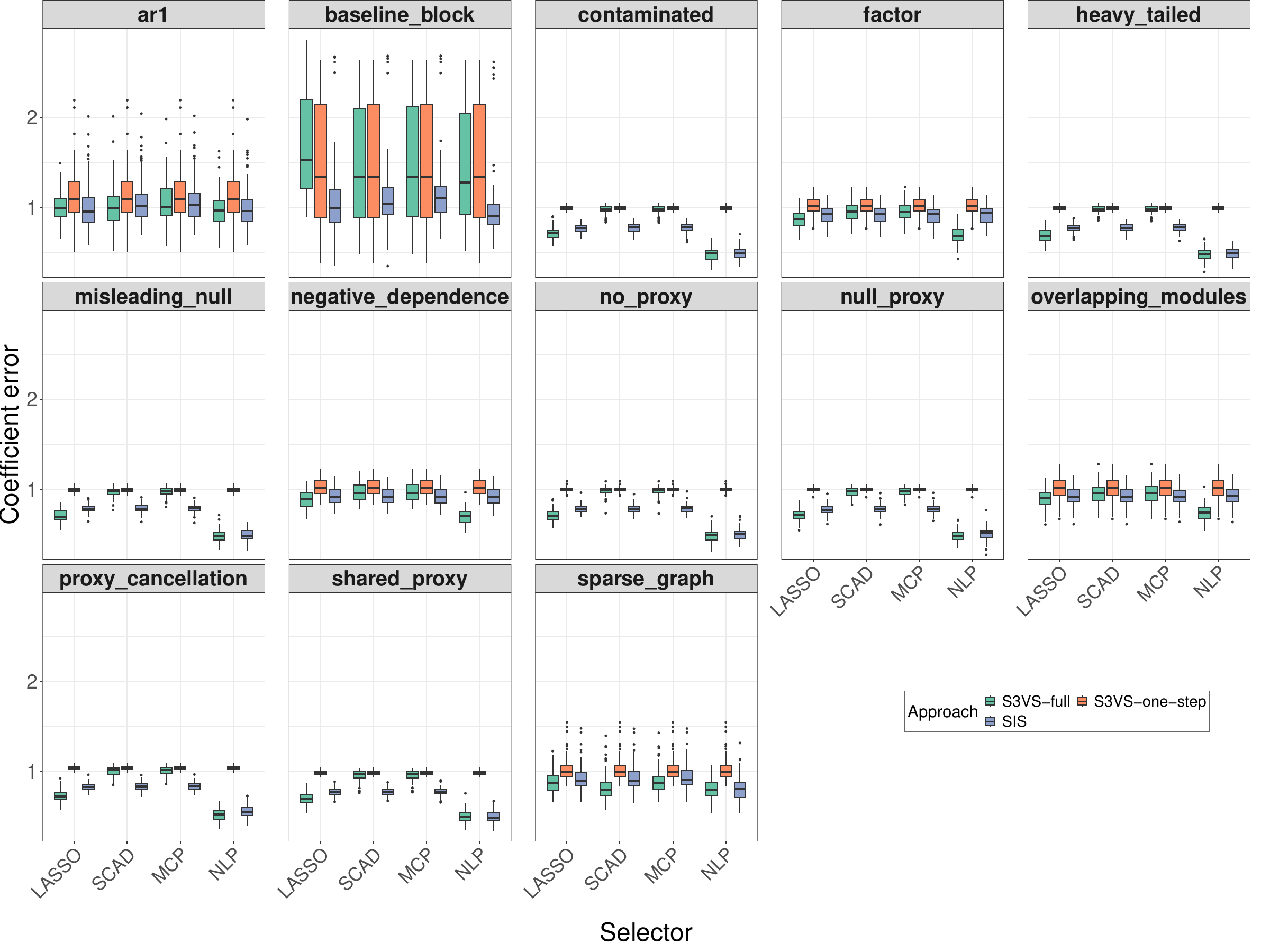}
\caption{Replicate-level coefficient error after the common post-selection LASSO refit across the named design scenarios.}
\label{fig:LM-supp-coefficient-error}
\end{figure}

\subsection{Sample-size and dimension sensitivity}
Finally, Figure~\ref{fig:LM-supp-scalability} pools over the named scenarios at the moderate signal level and displays TPR across the full $(n,p)$ grid. Recovery generally increases with $n$ for full S3VS and SIS. At a fixed $n$, increasing $p$ from 10,000 to 20,000 has a smaller and usually mildly adverse effect. The first-iteration procedures remain near the bottom of the plot across the grid. The complete condition-level summaries, including Monte Carlo standard errors, are supplied in the machine-readable result files accompanying the manuscript.

\begin{figure}[htbp]
\centering
\includegraphics[width=\textwidth]{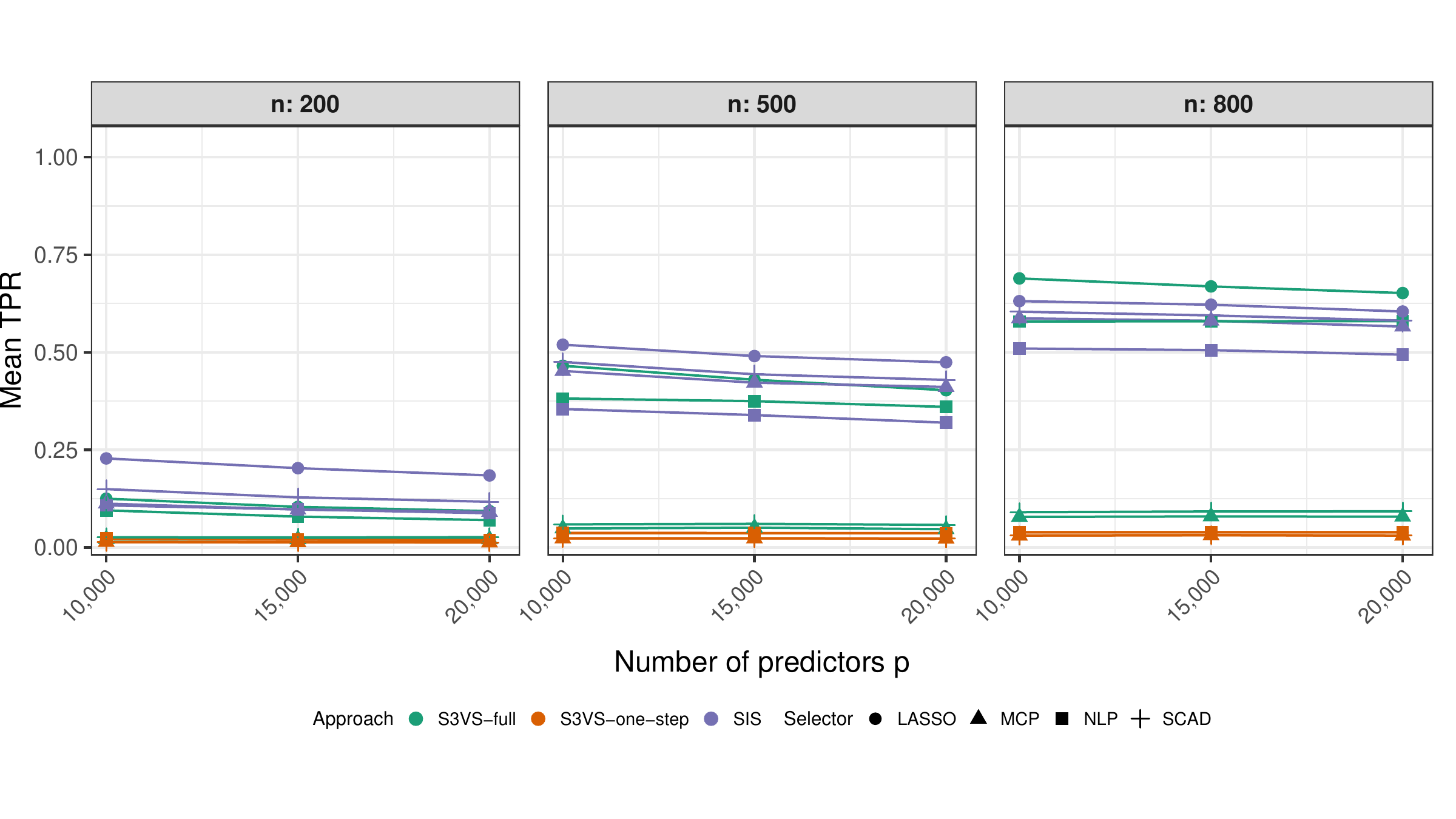}
\caption{TPR across the full sample-size and dimension grid. Means are pooled over named scenarios at the moderate signal level and bars show 95\% Monte Carlo intervals.}
\label{fig:LM-supp-scalability}
\end{figure}

\section{Additional Binary-Logistic Simulation Results} \label{app:glm-supplement}

The main text reports the largest $(n,p)$ setting to keep the primary comparison compact.  This section gives the complete results for all nine sample-size--dimension combinations.  Table~\ref{tab:GLM-full-grid} reports selection success, probability of a nonempty selection, TPR, unconditional FDP, selected-model size, and independent-test log-loss for every scenario and method.

\begin{table}[htbp]
\centering
\small
\caption{Full binary-logistic simulation results over the n--p grid. Entries are means with Monte Carlo standard errors in parentheses.}
\label{tab:GLM-full-grid}
\begin{adjustbox}{max width=\textwidth,max height=0.92\textheight,keepaspectratio}
\begin{tabular}{llrrrrrrrr}
\toprule
Scenario & Method & n & p & \shortstack[c]{Selection success,\\MCSE} & \shortstack[c]{P(nonempty),\\MCSE} & \shortstack[c]{TPR,\\MCSE} & \shortstack[c]{FDP,\\MCSE} & \shortstack[c]{Mean size,\\MCSE} & \shortstack[c]{Test log-loss,\\MCSE} \\
\midrule
Correlated proxy & S3VS & 200 & 10,000 & 1.000 (0.000) & 1.000 (0.000) & 0.363 (0.032) & 0.820 (0.012) & 44.8 (3.0) & 0.735 (0.010) \\
Correlated proxy & SIS--logistic--LASSO & 200 & 10,000 & 1.000 (0.000) & 1.000 (0.000) & 0.075 (0.006) & 0.979 (0.002) & 97.2 (1.5) & 1.217 (0.026) \\
Correlated proxy & S3VS & 200 & 15,000 & 1.000 (0.000) & 1.000 (0.000) & 0.392 (0.033) & 0.815 (0.011) & 46.7 (2.9) & 0.749 (0.011) \\
Correlated proxy & SIS--logistic--LASSO & 200 & 15,000 & 1.000 (0.000) & 1.000 (0.000) & 0.086 (0.006) & 0.976 (0.002) & 96.7 (1.6) & 1.245 (0.024) \\
Correlated proxy & S3VS & 200 & 20,000 & 1.000 (0.000) & 1.000 (0.000) & 0.383 (0.035) & 0.837 (0.011) & 48.6 (3.0) & 0.756 (0.010) \\
Correlated proxy & SIS--logistic--LASSO & 200 & 20,000 & 1.000 (0.000) & 1.000 (0.000) & 0.065 (0.005) & 0.982 (0.002) & 99.5 (1.5) & 1.248 (0.027) \\
Correlated proxy & S3VS & 500 & 10,000 & 1.000 (0.000) & 1.000 (0.000) & 0.700 (0.036) & 0.772 (0.008) & 74.6 (3.4) & 0.681 (0.005) \\
Correlated proxy & SIS--logistic--LASSO & 500 & 10,000 & 1.000 (0.000) & 1.000 (0.000) & 0.189 (0.009) & 0.957 (0.002) & 119.0 (2.3) & 0.910 (0.011) \\
Correlated proxy & S3VS & 500 & 15,000 & 1.000 (0.000) & 1.000 (0.000) & 0.676 (0.037) & 0.786 (0.009) & 74.7 (3.4) & 0.686 (0.005) \\
Correlated proxy & SIS--logistic--LASSO & 500 & 15,000 & 1.000 (0.000) & 1.000 (0.000) & 0.169 (0.009) & 0.962 (0.002) & 120.4 (2.4) & 0.936 (0.011) \\
Correlated proxy & S3VS & 500 & 20,000 & 1.000 (0.000) & 1.000 (0.000) & 0.646 (0.037) & 0.790 (0.011) & 72.7 (3.3) & 0.694 (0.006) \\
Correlated proxy & SIS--logistic--LASSO & 500 & 20,000 & 1.000 (0.000) & 1.000 (0.000) & 0.166 (0.008) & 0.963 (0.002) & 123.5 (2.6) & 0.943 (0.011) \\
Correlated proxy & S3VS & 800 & 10,000 & 1.000 (0.000) & 1.000 (0.000) & 0.787 (0.028) & 0.741 (0.007) & 79.5 (3.1) & 0.666 (0.004) \\
Correlated proxy & SIS--logistic--LASSO & 800 & 10,000 & 1.000 (0.000) & 1.000 (0.000) & 0.335 (0.016) & 0.931 (0.003) & 127.8 (2.1) & 0.844 (0.006) \\
Correlated proxy & S3VS & 800 & 15,000 & 1.000 (0.000) & 1.000 (0.000) & 0.770 (0.030) & 0.747 (0.006) & 78.8 (3.2) & 0.670 (0.005) \\
Correlated proxy & SIS--logistic--LASSO & 800 & 15,000 & 1.000 (0.000) & 1.000 (0.000) & 0.307 (0.015) & 0.936 (0.003) & 129.6 (2.4) & 0.863 (0.007) \\
Correlated proxy & S3VS & 800 & 20,000 & 1.000 (0.000) & 1.000 (0.000) & 0.800 (0.029) & 0.764 (0.007) & 84.5 (2.8) & 0.664 (0.004) \\
Correlated proxy & SIS--logistic--LASSO & 800 & 20,000 & 1.000 (0.000) & 1.000 (0.000) & 0.316 (0.014) & 0.933 (0.003) & 125.5 (2.3) & 0.862 (0.006) \\
No proxy & S3VS & 200 & 10,000 & 1.000 (0.000) & 1.000 (0.000) & 0.025 (0.003) & 0.966 (0.005) & 20.6 (0.8) & 0.846 (0.008) \\
No proxy & SIS--logistic--LASSO & 200 & 10,000 & 1.000 (0.000) & 1.000 (0.000) & 0.085 (0.005) & 0.982 (0.001) & 116.0 (0.6) & 1.485 (0.022) \\
No proxy & S3VS & 200 & 15,000 & 1.000 (0.000) & 1.000 (0.000) & 0.019 (0.003) & 0.977 (0.003) & 22.6 (1.0) & 0.847 (0.008) \\
No proxy & SIS--logistic--LASSO & 200 & 15,000 & 1.000 (0.000) & 1.000 (0.000) & 0.063 (0.003) & 0.987 (0.001) & 117.8 (0.6) & 1.434 (0.021) \\
No proxy & S3VS & 200 & 20,000 & 1.000 (0.000) & 1.000 (0.000) & 0.022 (0.003) & 0.980 (0.003) & 30.1 (1.2) & 0.889 (0.010) \\
No proxy & SIS--logistic--LASSO & 200 & 20,000 & 1.000 (0.000) & 1.000 (0.000) & 0.051 (0.004) & 0.989 (0.001) & 119.1 (0.7) & 1.458 (0.020) \\
No proxy & S3VS & 500 & 10,000 & 1.000 (0.000) & 1.000 (0.000) & 0.091 (0.006) & 0.891 (0.007) & 21.0 (0.9) & 0.754 (0.004) \\
No proxy & SIS--logistic--LASSO & 500 & 10,000 & 1.000 (0.000) & 1.000 (0.000) & 0.283 (0.006) & 0.958 (0.001) & 168.3 (0.5) & 1.003 (0.009) \\
No proxy & S3VS & 500 & 15,000 & 1.000 (0.000) & 1.000 (0.000) & 0.088 (0.006) & 0.903 (0.007) & 23.9 (1.0) & 0.768 (0.004) \\
No proxy & SIS--logistic--LASSO & 500 & 15,000 & 1.000 (0.000) & 1.000 (0.000) & 0.244 (0.009) & 0.964 (0.001) & 168.7 (0.4) & 1.043 (0.011) \\
No proxy & S3VS & 500 & 20,000 & 1.000 (0.000) & 1.000 (0.000) & 0.072 (0.005) & 0.925 (0.007) & 26.5 (1.1) & 0.785 (0.004) \\
No proxy & SIS--logistic--LASSO & 500 & 20,000 & 1.000 (0.000) & 1.000 (0.000) & 0.211 (0.008) & 0.969 (0.001) & 168.8 (0.4) & 1.060 (0.010) \\
No proxy & S3VS & 800 & 10,000 & 1.000 (0.000) & 1.000 (0.000) & 0.200 (0.008) & 0.767 (0.010) & 23.1 (0.9) & 0.718 (0.002) \\
No proxy & SIS--logistic--LASSO & 800 & 10,000 & 1.000 (0.000) & 1.000 (0.000) & 0.485 (0.008) & 0.935 (0.001) & 185.9 (0.3) & 0.889 (0.005) \\
No proxy & S3VS & 800 & 15,000 & 1.000 (0.000) & 1.000 (0.000) & 0.182 (0.007) & 0.815 (0.009) & 27.6 (1.1) & 0.736 (0.003) \\
No proxy & SIS--logistic--LASSO & 800 & 15,000 & 1.000 (0.000) & 1.000 (0.000) & 0.431 (0.009) & 0.942 (0.001) & 185.5 (0.3) & 0.916 (0.006) \\
No proxy & S3VS & 800 & 20,000 & 1.000 (0.000) & 1.000 (0.000) & 0.155 (0.008) & 0.858 (0.007) & 28.5 (1.2) & 0.747 (0.003) \\
No proxy & SIS--logistic--LASSO & 800 & 20,000 & 1.000 (0.000) & 1.000 (0.000) & 0.377 (0.008) & 0.949 (0.001) & 186.0 (0.3) & 0.937 (0.006) \\
\bottomrule
\end{tabular}
\end{adjustbox}
\end{table}

Across the correlated-proxy grid, increasing $n$ generally improved TPR for S3VS: its values ranged from approximately 0.36--0.39 at $n=200$ to 0.65--0.80 at $n=800$, whereas SIS--logistic--LASSO ranged from about 0.07--0.09 to 0.31--0.34. S3VS also selected fewer variables and had lower unconditional FDP throughout this scenario. The corresponding independent-test log-loss values were consistently lower for S3VS in the correlated-proxy setting.

In the no-proxy scenario, S3VS had lower TPR across the grid, approximately 0.02--0.20, while SIS--logistic--LASSO ranged from about 0.05--0.49. S3VS nevertheless selected much smaller models and generally had lower test log-loss. FDP remained high for both methods, so the results should be read as a comparison of the trade-off between recovery and model size rather than as evidence that either procedure provides uniformly good variable selection in the absence of useful proxy structure. Selection success and nonempty-selection probability were 1.000 in the displayed grid, so the primary comparisons were not driven by frequent algorithmic failure or empty models.

Figure~\ref{fig:GLM-full-grid} summarizes the same grid graphically. Its three rows show mean TPR, mean unconditional FDP, and independent-test log-loss, while its two columns separate the correlated-proxy and no-proxy scenarios. The figure shows the main qualification clearly. The TPR advantage of S3VS is concentrated in the correlated-proxy setting. Without proxies, its main advantages are smaller selected models and lower test log-loss.

\begin{figure}[htbp]   
\centering   
\includegraphics[width=\textwidth]{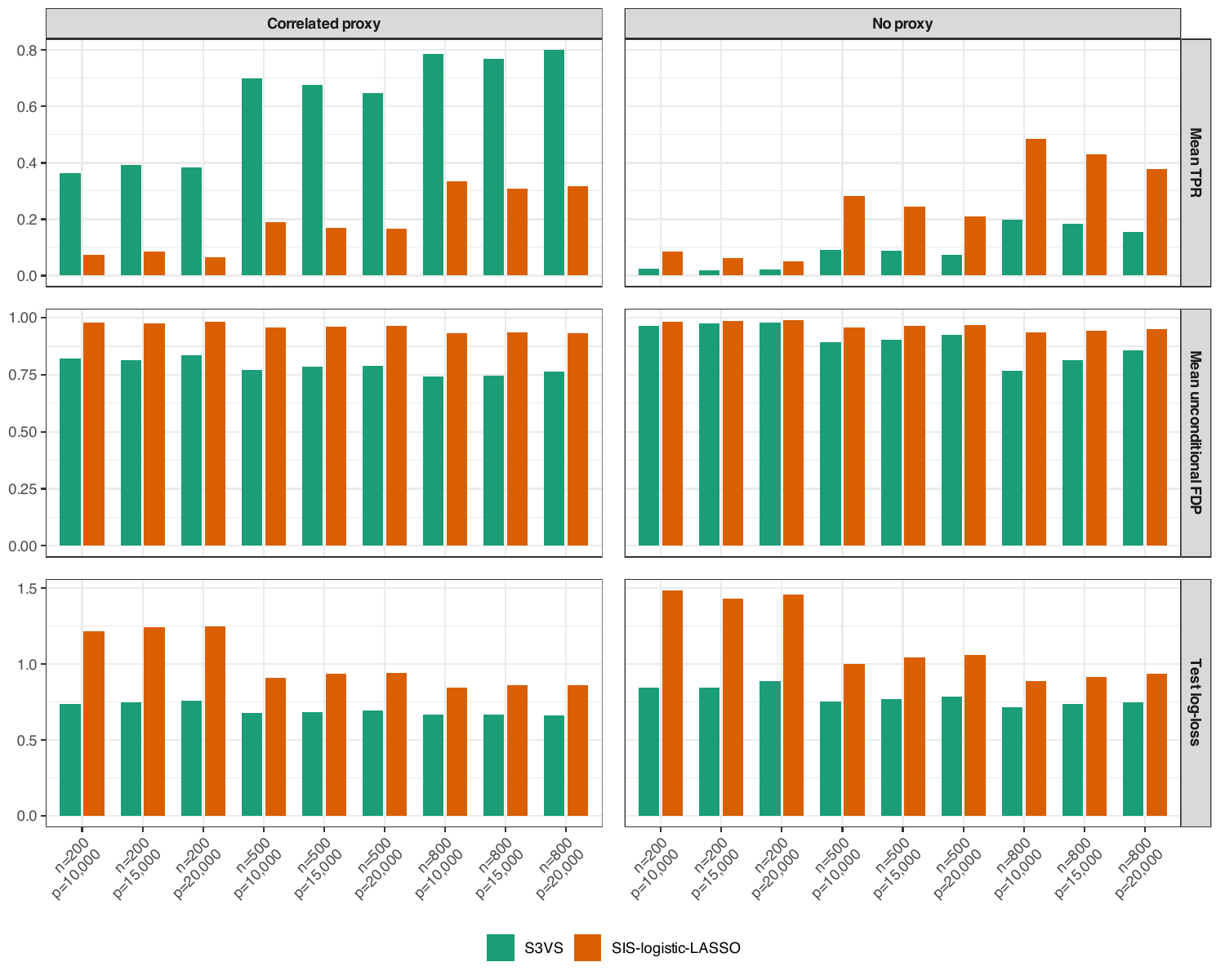}   
\caption{Full binary-logistic simulation grid. Columns correspond to the correlated-proxy and no-proxy scenarios. Rows show mean TPR, mean unconditional FDP, and independent-test log-loss over the 100 replicates for each $(n,p)$ combination.}   
\label{fig:GLM-full-grid} 
\end{figure}

\section{Additional Cox-Model Simulation Results}
\label{sec:supplement-cox}

This section provides the secondary survival-simulation results associated with Section~\ref{sec:simulation-cox}.  The full iterative S3VS analysis is reported separately because its iteration budget is not directly matched to the single-pass SIS--Cox--LASSO comparator.  The complete $(n,p)$ grids are also given for both the full iterative analysis and the primary one-iteration comparison.

For clarity, in all Cox simulations the SIS--Cox--LASSO comparator ranked candidate predictors by the absolute marginal Cox Wald statistic from separate univariate Cox proportional-hazards fits using both the observed survival times and event indicators. For candidate predictor $j$, the statistic was $|z_j|$, where $z_j=\widehat{\beta}_{j,\mathrm{marg}}/\widehat{\operatorname{se}}(\widehat{\beta}_{j,\mathrm{marg}})$. The comparator retained up to $\min\{200,n_{\mathrm{train}}-1,p\}$ highest-ranked predictors and then fitted Cox LASSO to the screened set using ten-fold cross-validation and the one-standard-error tuning value.

\subsection{Full iterative S3VS analysis}
The full S3VS analysis used the same data-generating design and marginal-Cox-Wald SIS--Cox--LASSO comparator as the primary analysis, but allowed S3VS to run with $m_{\text{max}}=100$ and \texttt{nskip=3}. The SIS--Cox--LASSO comparator remained a single-pass procedure. At $(n,p)=(800,20{,}000)$, the resulting TPR, unconditional FDP, selected-model size, and test concordance summaries are reported in Supplementary Table~\ref{tab:surv-primary-performance}; the corresponding TPR and FDP plot is shown in Supplementary Figure~\ref{fig:surv-full-iterative-primary}. In the correlated-proxy setting, full-iteration S3VS had TPR $0.066$, unconditional FDP $0.821$, mean selected-model size $10.7$, and test C-index $0.582$, compared with $0.242$, $0.932$, $98.0$, and $0.572$ for SIS--Cox--LASSO. In the no-proxy setting, the corresponding S3VS values were $0.110$, $0.817$, $15.3$, and $0.523$, compared with $0.396$, $0.937$, $157.6$, and $0.531$ for SIS--Cox--LASSO. These results document the behavior of the implemented iterative procedure but are not used as the primary method-to-method comparison.

\begin{table}[htbp]
\centering
\small
\caption{Secondary full-iteration Cox-model simulation performance at $(n,p)=(800,20{,}000)$. Entries are means with Monte Carlo standard errors in parentheses; FDP is the unconditional FDP with empty selections counted as zero.\\[-5pt]}
\label{tab:surv-primary-performance}
\begin{adjustbox}{max width=\textwidth,keepaspectratio}
\begin{tabular}{llrrrrrrr}
\toprule
Scenario & Method & n & p & \shortstack[c]{P(nonempty),\\MCSE} & \shortstack[c]{TPR,\\MCSE} & \shortstack[c]{FDP,\\MCSE} & \shortstack[c]{Mean size,\\MCSE} & \shortstack[c]{Test C-index,\\MCSE} \\
\midrule
Correlated proxy & S3VS & 800 & 20,000 & 1.000 (0.000) & 0.066 (0.003) & 0.821 (0.013) & 10.7 (0.3) & 0.582 (0.005) \\
Correlated proxy & SIS--Cox--LASSO & 800 & 20,000 & 1.000 (0.000) & 0.242 (0.008) & 0.932 (0.003) & 98.0 (2.5) & 0.572 (0.004) \\
No proxy & S3VS & 800 & 20,000 & 1.000 (0.000) & 0.110 (0.006) & 0.817 (0.011) & 15.3 (0.3) & 0.523 (0.002) \\
No proxy & SIS--Cox--LASSO & 800 & 20,000 & 1.000 (0.000) & 0.396 (0.009) & 0.937 (0.001) & 157.6 (0.8) & 0.531 (0.002) \\
\bottomrule
\end{tabular}
\end{adjustbox}
\end{table}

\begin{figure}[htbp]
\centering
\includegraphics[width=\textwidth]{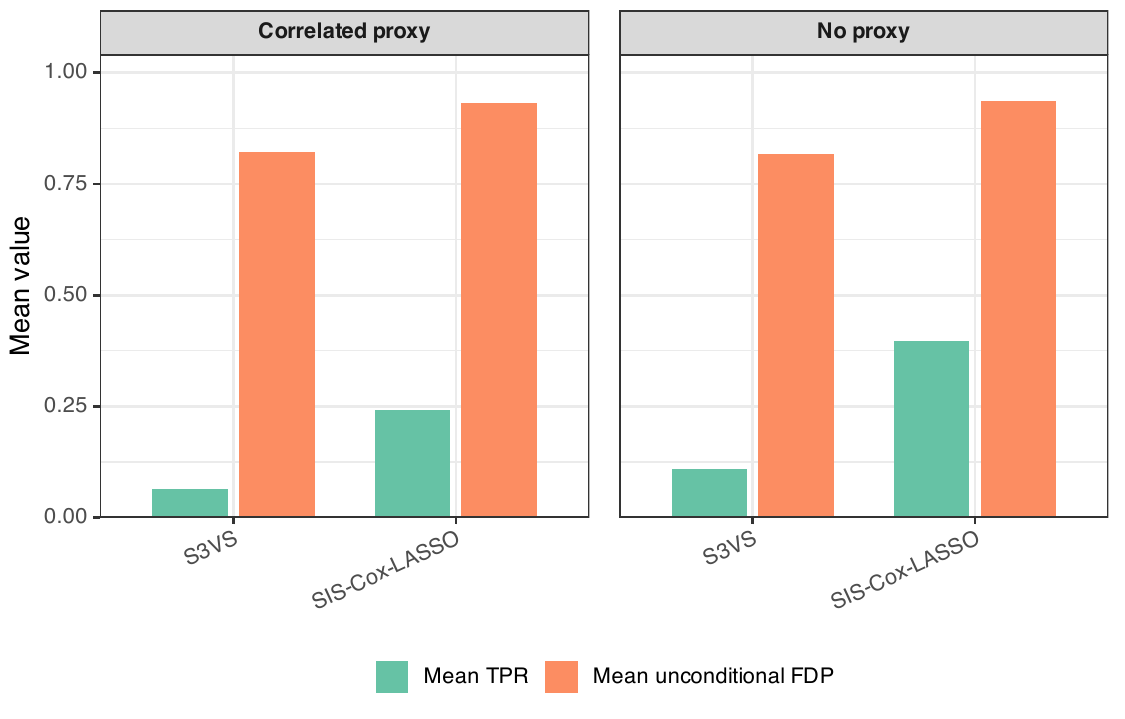}
\caption{Secondary full-iteration Cox-model comparison at $(n,p)=(800,20{,}000)$. Bars show mean TPR and mean unconditional FDP for the full iterative S3VS procedure and the single-pass SIS--Cox--LASSO procedure in the correlated-proxy and no-proxy scenarios.}
\label{fig:surv-full-iterative-primary}
\end{figure}

\subsection{Full grid for the full iterative analysis}
The full-grid results for the iterative analysis are given in Supplementary Table~\ref{tab:surv-full-grid} and Supplementary Figure~\ref{fig:surv-full-grid}. The table additionally reports training event and censoring proportions, and selection and postfit success rates. Across the grid, full-iteration S3VS consistently selected much smaller models than SIS--Cox--LASSO. Its TPR was lower in every correlated-proxy cell and in all but one no-proxy cell, where the two procedures were nearly identical at $(n,p)=(200,10{,}000)$. The FDP comparison was not uniform over the grid: S3VS had lower FDP than SIS--Cox--LASSO at the largest setting but often higher FDP at smaller sample sizes. The figure displays the corresponding patterns for TPR, unconditional FDP, selected-model size, and independent-test concordance.

\begin{table}[htbp]
\centering
\small
\caption{Complete full-iteration Cox-model simulation results over the $(n,p)$ grid. Event and censoring rates are calculated in the training samples; other entries are means with Monte Carlo standard errors in parentheses. FDP is the unconditional FDP with empty selections counted as zero.}
\label{tab:surv-full-grid}
\begin{adjustbox}{max width=\textwidth,max height=0.92\textheight,keepaspectratio}
\begin{tabular}{llrrrrrrrrrr}
\toprule
Scenario & Method & n & p & \shortstack[c]{Event rate,\\MCSE} & \shortstack[c]{Censoring rate,\\MCSE} & \shortstack[c]{Selection success,\\MCSE} & \shortstack[c]{P(nonempty),\\MCSE} & \shortstack[c]{TPR,\\MCSE} & \shortstack[c]{FDP,\\MCSE} & \shortstack[c]{Mean size,\\MCSE} & \shortstack[c]{Test C-index,\\MCSE} \\
\midrule
Correlated proxy & S3VS & 200 & 10,000 & 0.698 (0.003) & 0.302 (0.003) & 1.000 (0.000) & 1.000 (0.000) & 0.021 (0.002) & 0.901 (0.011) & 5.7 (0.1) & 0.546 (0.004) \\
Correlated proxy & SIS--Cox--LASSO & 200 & 10,000 & 0.698 (0.003) & 0.302 (0.003) & 1.000 (0.000) & 0.410 (0.049) & 0.035 (0.005) & 0.334 (0.045) & 16.5 (2.6) & 0.523 (0.004) \\
Correlated proxy & S3VS & 200 & 15,000 & 0.702 (0.003) & 0.298 (0.003) & 1.000 (0.000) & 1.000 (0.000) & 0.017 (0.002) & 0.919 (0.013) & 5.7 (0.2) & 0.554 (0.005) \\
Correlated proxy & SIS--Cox--LASSO & 200 & 15,000 & 0.702 (0.003) & 0.298 (0.003) & 1.000 (0.000) & 0.490 (0.050) & 0.044 (0.006) & 0.433 (0.046) & 19.6 (2.7) & 0.529 (0.004) \\
Correlated proxy & S3VS & 200 & 20,000 & 0.702 (0.003) & 0.298 (0.003) & 1.000 (0.000) & 1.000 (0.000) & 0.012 (0.002) & 0.943 (0.009) & 6.1 (0.2) & 0.544 (0.005) \\
Correlated proxy & SIS--Cox--LASSO & 200 & 20,000 & 0.702 (0.003) & 0.298 (0.003) & 1.000 (0.000) & 0.590 (0.049) & 0.046 (0.006) & 0.532 (0.047) & 29.1 (3.0) & 0.531 (0.005) \\
Correlated proxy & S3VS & 500 & 10,000 & 0.702 (0.002) & 0.298 (0.002) & 1.000 (0.000) & 1.000 (0.000) & 0.044 (0.003) & 0.852 (0.011) & 8.4 (0.3) & 0.567 (0.005) \\
Correlated proxy & SIS--Cox--LASSO & 500 & 10,000 & 0.702 (0.002) & 0.298 (0.002) & 1.000 (0.000) & 0.920 (0.027) & 0.159 (0.009) & 0.804 (0.030) & 68.8 (3.8) & 0.553 (0.005) \\
Correlated proxy & S3VS & 500 & 15,000 & 0.702 (0.002) & 0.298 (0.002) & 1.000 (0.000) & 1.000 (0.000) & 0.042 (0.002) & 0.872 (0.010) & 9.3 (0.2) & 0.567 (0.005) \\
Correlated proxy & SIS--Cox--LASSO & 500 & 15,000 & 0.702 (0.002) & 0.298 (0.002) & 1.000 (0.000) & 0.990 (0.010) & 0.159 (0.008) & 0.896 (0.019) & 83.7 (3.3) & 0.553 (0.004) \\
Correlated proxy & S3VS & 500 & 20,000 & 0.699 (0.002) & 0.301 (0.002) & 1.000 (0.000) & 1.000 (0.000) & 0.037 (0.002) & 0.887 (0.009) & 9.6 (0.3) & 0.564 (0.005) \\
Correlated proxy & SIS--Cox--LASSO & 500 & 20,000 & 0.699 (0.002) & 0.301 (0.002) & 1.000 (0.000) & 0.990 (0.010) & 0.150 (0.008) & 0.923 (0.014) & 89.0 (3.3) & 0.554 (0.004) \\
Correlated proxy & S3VS & 800 & 10,000 & 0.700 (0.002) & 0.300 (0.002) & 1.000 (0.000) & 1.000 (0.000) & 0.064 (0.003) & 0.774 (0.017) & 8.8 (0.3) & 0.569 (0.006) \\
Correlated proxy & SIS--Cox--LASSO & 800 & 10,000 & 0.700 (0.002) & 0.300 (0.002) & 1.000 (0.000) & 1.000 (0.000) & 0.234 (0.010) & 0.877 (0.021) & 94.3 (3.6) & 0.567 (0.005) \\
Correlated proxy & S3VS & 800 & 15,000 & 0.700 (0.002) & 0.300 (0.002) & 1.000 (0.000) & 1.000 (0.000) & 0.062 (0.003) & 0.795 (0.018) & 9.8 (0.4) & 0.570 (0.005) \\
Correlated proxy & SIS--Cox--LASSO & 800 & 15,000 & 0.700 (0.002) & 0.300 (0.002) & 1.000 (0.000) & 1.000 (0.000) & 0.238 (0.009) & 0.931 (0.004) & 101.3 (2.9) & 0.567 (0.004) \\
Correlated proxy & S3VS & 800 & 20,000 & 0.698 (0.001) & 0.302 (0.001) & 1.000 (0.000) & 1.000 (0.000) & 0.066 (0.003) & 0.821 (0.013) & 10.7 (0.3) & 0.582 (0.005) \\
Correlated proxy & SIS--Cox--LASSO & 800 & 20,000 & 0.698 (0.001) & 0.302 (0.001) & 1.000 (0.000) & 1.000 (0.000) & 0.242 (0.008) & 0.932 (0.003) & 98.0 (2.5) & 0.572 (0.004) \\
No proxy & S3VS & 200 & 10,000 & 0.702 (0.003) & 0.298 (0.003) & 1.000 (0.000) & 1.000 (0.000) & 0.011 (0.002) & 0.954 (0.009) & 5.9 (0.2) & 0.502 (0.003) \\
No proxy & SIS--Cox--LASSO & 200 & 10,000 & 0.702 (0.003) & 0.298 (0.003) & 1.000 (0.000) & 0.120 (0.033) & 0.010 (0.003) & 0.116 (0.032) & 8.4 (2.4) & 0.501 (0.001) \\
No proxy & S3VS & 200 & 15,000 & 0.701 (0.003) & 0.299 (0.003) & 1.000 (0.000) & 1.000 (0.000) & 0.007 (0.002) & 0.971 (0.007) & 6.4 (0.2) & 0.505 (0.003) \\
No proxy & SIS--Cox--LASSO & 200 & 15,000 & 0.701 (0.003) & 0.299 (0.003) & 1.000 (0.000) & 0.150 (0.036) & 0.008 (0.003) & 0.147 (0.035) & 10.6 (2.6) & 0.501 (0.001) \\
No proxy & S3VS & 200 & 20,000 & 0.699 (0.003) & 0.301 (0.003) & 1.000 (0.000) & 1.000 (0.000) & 0.003 (0.001) & 0.989 (0.005) & 6.4 (0.1) & 0.503 (0.003) \\
No proxy & SIS--Cox--LASSO & 200 & 20,000 & 0.699 (0.003) & 0.301 (0.003) & 1.000 (0.000) & 0.330 (0.047) & 0.007 (0.002) & 0.328 (0.047) & 23.6 (3.6) & 0.497 (0.002) \\
No proxy & S3VS & 500 & 10,000 & 0.700 (0.002) & 0.300 (0.002) & 1.000 (0.000) & 1.000 (0.000) & 0.063 (0.004) & 0.844 (0.011) & 10.2 (0.2) & 0.513 (0.002) \\
No proxy & SIS--Cox--LASSO & 500 & 10,000 & 0.700 (0.002) & 0.300 (0.002) & 1.000 (0.000) & 0.760 (0.043) & 0.198 (0.013) & 0.714 (0.040) & 88.8 (5.5) & 0.513 (0.002) \\
No proxy & S3VS & 500 & 15,000 & 0.701 (0.002) & 0.299 (0.002) & 1.000 (0.000) & 1.000 (0.000) & 0.048 (0.004) & 0.888 (0.010) & 11.0 (0.2) & 0.510 (0.002) \\
No proxy & SIS--Cox--LASSO & 500 & 15,000 & 0.701 (0.002) & 0.299 (0.002) & 1.000 (0.000) & 0.840 (0.037) & 0.168 (0.010) & 0.801 (0.035) & 101.8 (5.1) & 0.511 (0.002) \\
No proxy & S3VS & 500 & 20,000 & 0.701 (0.002) & 0.299 (0.002) & 1.000 (0.000) & 1.000 (0.000) & 0.041 (0.004) & 0.906 (0.010) & 11.3 (0.2) & 0.512 (0.002) \\
No proxy & SIS--Cox--LASSO & 500 & 20,000 & 0.701 (0.002) & 0.299 (0.002) & 1.000 (0.000) & 0.940 (0.024) & 0.181 (0.009) & 0.905 (0.023) & 120.8 (3.4) & 0.514 (0.002) \\
No proxy & S3VS & 800 & 10,000 & 0.700 (0.002) & 0.300 (0.002) & 1.000 (0.000) & 1.000 (0.000) & 0.153 (0.006) & 0.703 (0.013) & 13.6 (0.3) & 0.525 (0.002) \\
No proxy & SIS--Cox--LASSO & 800 & 10,000 & 0.700 (0.002) & 0.300 (0.002) & 1.000 (0.000) & 0.980 (0.014) & 0.468 (0.011) & 0.900 (0.013) & 145.8 (2.7) & 0.539 (0.002) \\
No proxy & S3VS & 800 & 15,000 & 0.699 (0.002) & 0.301 (0.002) & 1.000 (0.000) & 1.000 (0.000) & 0.122 (0.007) & 0.781 (0.014) & 14.5 (0.3) & 0.524 (0.002) \\
No proxy & SIS--Cox--LASSO & 800 & 15,000 & 0.699 (0.002) & 0.301 (0.002) & 1.000 (0.000) & 1.000 (0.000) & 0.414 (0.010) & 0.932 (0.002) & 154.2 (1.7) & 0.533 (0.002) \\
No proxy & S3VS & 800 & 20,000 & 0.699 (0.002) & 0.301 (0.002) & 1.000 (0.000) & 1.000 (0.000) & 0.110 (0.006) & 0.817 (0.011) & 15.3 (0.3) & 0.523 (0.002) \\
No proxy & SIS--Cox--LASSO & 800 & 20,000 & 0.699 (0.002) & 0.301 (0.002) & 1.000 (0.000) & 1.000 (0.000) & 0.396 (0.009) & 0.937 (0.001) & 157.6 (0.8) & 0.531 (0.002) \\
\bottomrule
\end{tabular}
\end{adjustbox}
\end{table}

\begin{figure}[htbp]
\centering
\includegraphics[width=\textwidth]{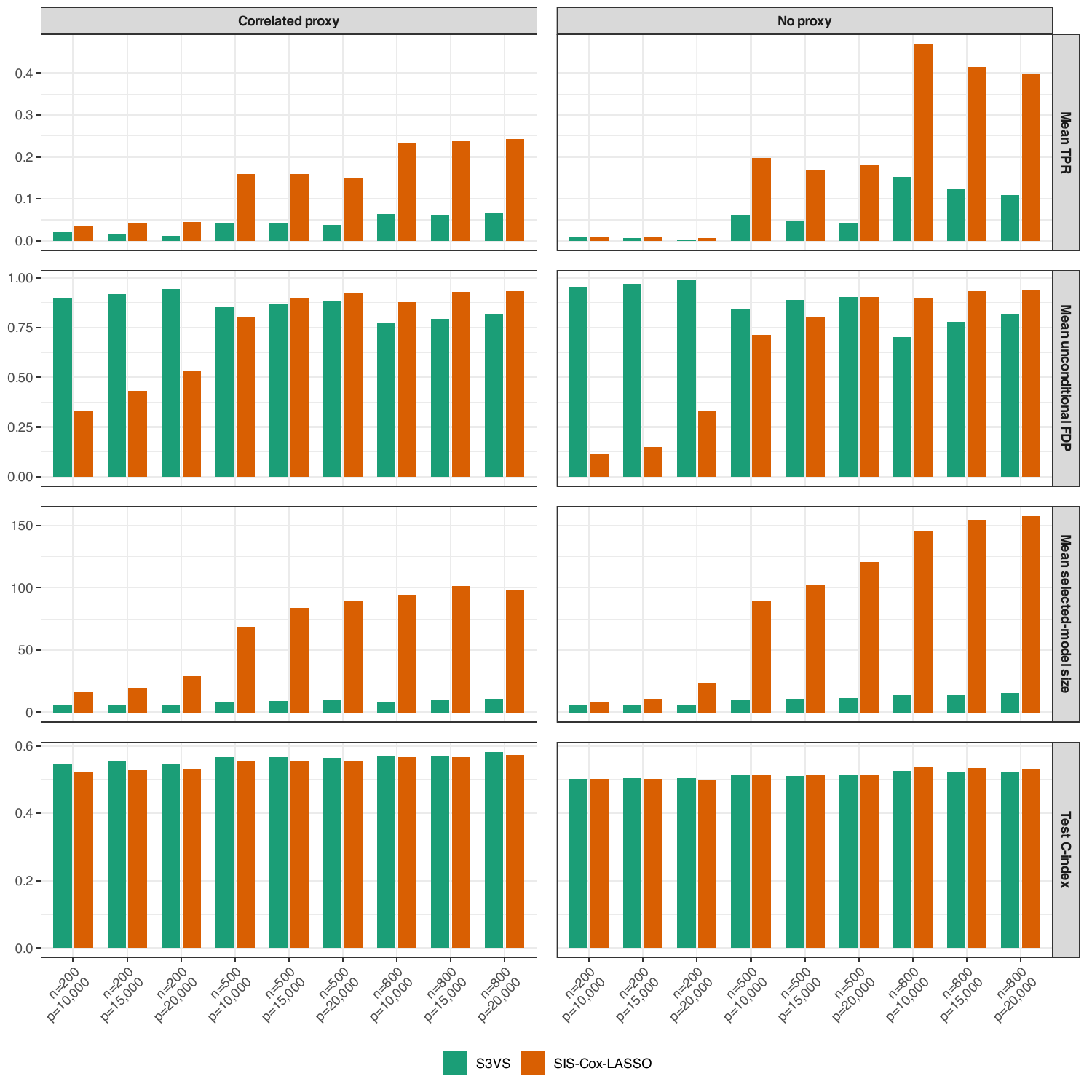}
\caption{Full-grid results for the full iterative Cox-model analysis over $n\in\{200,500,800\}$ and $p\in\{10{,}000,15{,}000,20{,}000\}$. Rows show mean TPR, mean unconditional FDP, mean selected-model size, and test concordance. Columns show the correlated-proxy and no-proxy scenarios.}
\label{fig:surv-full-grid}
\end{figure}

\subsection{Full grid for the one-iteration comparison}
The complete results for the matched one-iteration comparison are reported in Supplementary Table~\ref{tab:surv-one-iteration-full-grid} and Supplementary Figure~\ref{fig:surv-one-iteration-full-grid}. The displayed one-iteration S3VS fits have mean selected-model size $1.0$ and probability of a nonempty selection $1.000$ in every grid cell. S3VS has lower TPR than SIS--Cox--LASSO in every displayed cell. In the correlated-proxy setting, S3VS has higher test concordance in every cell; its FDP is higher than that of SIS--Cox--LASSO at $n=200$ but lower at $n=500$ and $n=800$. In the no-proxy setting, S3VS has higher test concordance at $n=200$ but lower test concordance at $n=500$ and $n=800$, with the same sample-size pattern for the FDP comparison. These results show that the primary one-iteration method is highly conservative in model size and that its recovery, purity, and predictive behavior depend on predictor structure and sample size.

\begin{table}[htbp]
\centering
\small
\caption{Complete one-iteration Cox-model simulation results over the $(n,p)$ grid. Entries are means with Monte Carlo standard errors in parentheses. FDP is the unconditional FDP with empty selections counted as zero.\\[-5pt]}
\label{tab:surv-one-iteration-full-grid}
\begin{adjustbox}{max width=\textwidth,max height=0.92\textheight,keepaspectratio}
\begin{tabular}{llrrrrrrrrrr}
\toprule
Scenario & Method & n & p & \shortstack[c]{Event rate,\\MCSE} & \shortstack[c]{Censoring rate,\\MCSE} & \shortstack[c]{Selection success,\\MCSE} & \shortstack[c]{P(nonempty),\\MCSE} & \shortstack[c]{TPR,\\MCSE} & \shortstack[c]{FDP,\\MCSE} & \shortstack[c]{Mean size,\\MCSE} & \shortstack[c]{Test C-index,\\MCSE} \\
\midrule
Correlated proxy & S3VS (one iteration) & 200 & 10,000 & 0.698 (0.003) & 0.302 (0.003) & 1.000 (0.000) & 1.000 (0.000) & 0.019 (0.002) & 0.520 (0.050) & 1.0 (0.0) & 0.565 (0.005) \\
Correlated proxy & SIS--Cox--LASSO (one iteration) & 200 & 10,000 & 0.698 (0.003) & 0.302 (0.003) & 1.000 (0.000) & 0.410 (0.049) & 0.035 (0.005) & 0.334 (0.045) & 16.5 (2.6) & 0.523 (0.004) \\
Correlated proxy & S3VS (one iteration) & 200 & 15,000 & 0.702 (0.003) & 0.298 (0.003) & 1.000 (0.000) & 1.000 (0.000) & 0.015 (0.002) & 0.620 (0.049) & 1.0 (0.0) & 0.568 (0.006) \\
Correlated proxy & SIS--Cox--LASSO (one iteration) & 200 & 15,000 & 0.702 (0.003) & 0.298 (0.003) & 1.000 (0.000) & 0.490 (0.050) & 0.044 (0.006) & 0.433 (0.046) & 19.6 (2.7) & 0.529 (0.004) \\
Correlated proxy & S3VS (one iteration) & 200 & 20,000 & 0.702 (0.003) & 0.298 (0.003) & 1.000 (0.000) & 1.000 (0.000) & 0.010 (0.002) & 0.740 (0.044) & 1.0 (0.0) & 0.558 (0.006) \\
Correlated proxy & SIS--Cox--LASSO (one iteration) & 200 & 20,000 & 0.702 (0.003) & 0.298 (0.003) & 1.000 (0.000) & 0.590 (0.049) & 0.046 (0.006) & 0.532 (0.047) & 29.1 (3.0) & 0.531 (0.005) \\
Correlated proxy & S3VS (one iteration) & 500 & 10,000 & 0.702 (0.002) & 0.298 (0.002) & 1.000 (0.000) & 1.000 (0.000) & 0.028 (0.002) & 0.290 (0.046) & 1.0 (0.0) & 0.576 (0.004) \\
Correlated proxy & SIS--Cox--LASSO (one iteration) & 500 & 10,000 & 0.702 (0.002) & 0.298 (0.002) & 1.000 (0.000) & 0.920 (0.027) & 0.159 (0.009) & 0.804 (0.030) & 68.8 (3.8) & 0.553 (0.005) \\
Correlated proxy & S3VS (one iteration) & 500 & 15,000 & 0.702 (0.002) & 0.298 (0.002) & 1.000 (0.000) & 1.000 (0.000) & 0.029 (0.002) & 0.280 (0.045) & 1.0 (0.0) & 0.577 (0.004) \\
Correlated proxy & SIS--Cox--LASSO (one iteration) & 500 & 15,000 & 0.702 (0.002) & 0.298 (0.002) & 1.000 (0.000) & 0.990 (0.010) & 0.159 (0.008) & 0.896 (0.019) & 83.7 (3.3) & 0.553 (0.004) \\
Correlated proxy & S3VS (one iteration) & 500 & 20,000 & 0.699 (0.002) & 0.301 (0.002) & 1.000 (0.000) & 1.000 (0.000) & 0.025 (0.002) & 0.380 (0.049) & 1.0 (0.0) & 0.571 (0.004) \\
Correlated proxy & SIS--Cox--LASSO (one iteration) & 500 & 20,000 & 0.699 (0.002) & 0.301 (0.002) & 1.000 (0.000) & 0.990 (0.010) & 0.150 (0.008) & 0.923 (0.014) & 89.0 (3.3) & 0.554 (0.004) \\
Correlated proxy & S3VS (one iteration) & 800 & 10,000 & 0.700 (0.002) & 0.300 (0.002) & 1.000 (0.000) & 1.000 (0.000) & 0.032 (0.002) & 0.200 (0.040) & 1.0 (0.0) & 0.578 (0.004) \\
Correlated proxy & SIS--Cox--LASSO (one iteration) & 800 & 10,000 & 0.700 (0.002) & 0.300 (0.002) & 1.000 (0.000) & 1.000 (0.000) & 0.234 (0.010) & 0.877 (0.021) & 94.3 (3.6) & 0.567 (0.005) \\
Correlated proxy & S3VS (one iteration) & 800 & 15,000 & 0.700 (0.002) & 0.300 (0.002) & 1.000 (0.000) & 1.000 (0.000) & 0.031 (0.002) & 0.220 (0.042) & 1.0 (0.0) & 0.576 (0.004) \\
Correlated proxy & SIS--Cox--LASSO (one iteration) & 800 & 15,000 & 0.700 (0.002) & 0.300 (0.002) & 1.000 (0.000) & 1.000 (0.000) & 0.238 (0.009) & 0.931 (0.004) & 101.3 (2.9) & 0.567 (0.004) \\
Correlated proxy & S3VS (one iteration) & 800 & 20,000 & 0.698 (0.001) & 0.302 (0.001) & 1.000 (0.000) & 1.000 (0.000) & 0.034 (0.001) & 0.150 (0.036) & 1.0 (0.0) & 0.585 (0.003) \\
Correlated proxy & SIS--Cox--LASSO (one iteration) & 800 & 20,000 & 0.698 (0.001) & 0.302 (0.001) & 1.000 (0.000) & 1.000 (0.000) & 0.242 (0.008) & 0.932 (0.003) & 98.0 (2.5) & 0.572 (0.004) \\
No proxy & S3VS (one iteration) & 200 & 10,000 & 0.702 (0.003) & 0.298 (0.003) & 1.000 (0.000) & 1.000 (0.000) & 0.004 (0.001) & 0.890 (0.031) & 1.0 (0.0) & 0.505 (0.003) \\
No proxy & SIS--Cox--LASSO (one iteration) & 200 & 10,000 & 0.702 (0.003) & 0.298 (0.003) & 1.000 (0.000) & 0.120 (0.033) & 0.010 (0.003) & 0.116 (0.032) & 8.4 (2.4) & 0.501 (0.001) \\
No proxy & S3VS (one iteration) & 200 & 15,000 & 0.701 (0.003) & 0.299 (0.003) & 1.000 (0.000) & 1.000 (0.000) & 0.004 (0.001) & 0.900 (0.030) & 1.0 (0.0) & 0.505 (0.003) \\
No proxy & SIS--Cox--LASSO (one iteration) & 200 & 15,000 & 0.701 (0.003) & 0.299 (0.003) & 1.000 (0.000) & 0.150 (0.036) & 0.008 (0.003) & 0.147 (0.035) & 10.6 (2.6) & 0.501 (0.001) \\
No proxy & S3VS (one iteration) & 200 & 20,000 & 0.699 (0.003) & 0.301 (0.003) & 1.000 (0.000) & 1.000 (0.000) & 0.001 (0.001) & 0.970 (0.017) & 1.0 (0.0) & 0.502 (0.003) \\
No proxy & SIS--Cox--LASSO (one iteration) & 200 & 20,000 & 0.699 (0.003) & 0.301 (0.003) & 1.000 (0.000) & 0.330 (0.047) & 0.007 (0.002) & 0.328 (0.047) & 23.6 (3.6) & 0.497 (0.002) \\
No proxy & S3VS (one iteration) & 500 & 10,000 & 0.700 (0.002) & 0.300 (0.002) & 1.000 (0.000) & 1.000 (0.000) & 0.013 (0.002) & 0.670 (0.047) & 1.0 (0.0) & 0.509 (0.002) \\
No proxy & SIS--Cox--LASSO (one iteration) & 500 & 10,000 & 0.700 (0.002) & 0.300 (0.002) & 1.000 (0.000) & 0.760 (0.043) & 0.198 (0.013) & 0.714 (0.040) & 88.8 (5.5) & 0.513 (0.002) \\
No proxy & S3VS (one iteration) & 500 & 15,000 & 0.701 (0.002) & 0.299 (0.002) & 1.000 (0.000) & 1.000 (0.000) & 0.013 (0.002) & 0.670 (0.047) & 1.0 (0.0) & 0.510 (0.002) \\
No proxy & SIS--Cox--LASSO (one iteration) & 500 & 15,000 & 0.701 (0.002) & 0.299 (0.002) & 1.000 (0.000) & 0.840 (0.037) & 0.168 (0.010) & 0.801 (0.035) & 101.8 (5.1) & 0.511 (0.002) \\
No proxy & S3VS (one iteration) & 500 & 20,000 & 0.701 (0.002) & 0.299 (0.002) & 1.000 (0.000) & 1.000 (0.000) & 0.010 (0.002) & 0.760 (0.043) & 1.0 (0.0) & 0.510 (0.002) \\
No proxy & SIS--Cox--LASSO (one iteration) & 500 & 20,000 & 0.701 (0.002) & 0.299 (0.002) & 1.000 (0.000) & 0.940 (0.024) & 0.181 (0.009) & 0.905 (0.023) & 120.8 (3.4) & 0.514 (0.002) \\
No proxy & S3VS (one iteration) & 800 & 10,000 & 0.700 (0.002) & 0.300 (0.002) & 1.000 (0.000) & 1.000 (0.000) & 0.028 (0.002) & 0.300 (0.046) & 1.0 (0.0) & 0.522 (0.002) \\
No proxy & SIS--Cox--LASSO (one iteration) & 800 & 10,000 & 0.700 (0.002) & 0.300 (0.002) & 1.000 (0.000) & 0.980 (0.014) & 0.468 (0.011) & 0.900 (0.013) & 145.8 (2.7) & 0.539 (0.002) \\
No proxy & S3VS (one iteration) & 800 & 15,000 & 0.699 (0.002) & 0.301 (0.002) & 1.000 (0.000) & 1.000 (0.000) & 0.022 (0.002) & 0.440 (0.050) & 1.0 (0.0) & 0.518 (0.002) \\
No proxy & SIS--Cox--LASSO (one iteration) & 800 & 15,000 & 0.699 (0.002) & 0.301 (0.002) & 1.000 (0.000) & 1.000 (0.000) & 0.414 (0.010) & 0.932 (0.002) & 154.2 (1.7) & 0.533 (0.002) \\
No proxy & S3VS (one iteration) & 800 & 20,000 & 0.699 (0.002) & 0.301 (0.002) & 1.000 (0.000) & 1.000 (0.000) & 0.018 (0.002) & 0.560 (0.050) & 1.0 (0.0) & 0.512 (0.002) \\
No proxy & SIS--Cox--LASSO (one iteration) & 800 & 20,000 & 0.699 (0.002) & 0.301 (0.002) & 1.000 (0.000) & 1.000 (0.000) & 0.396 (0.009) & 0.937 (0.001) & 157.6 (0.8) & 0.531 (0.002) \\
\bottomrule
\end{tabular}
\end{adjustbox}
\end{table}

\begin{figure}[htbp]
\centering
\includegraphics[width=\textwidth]{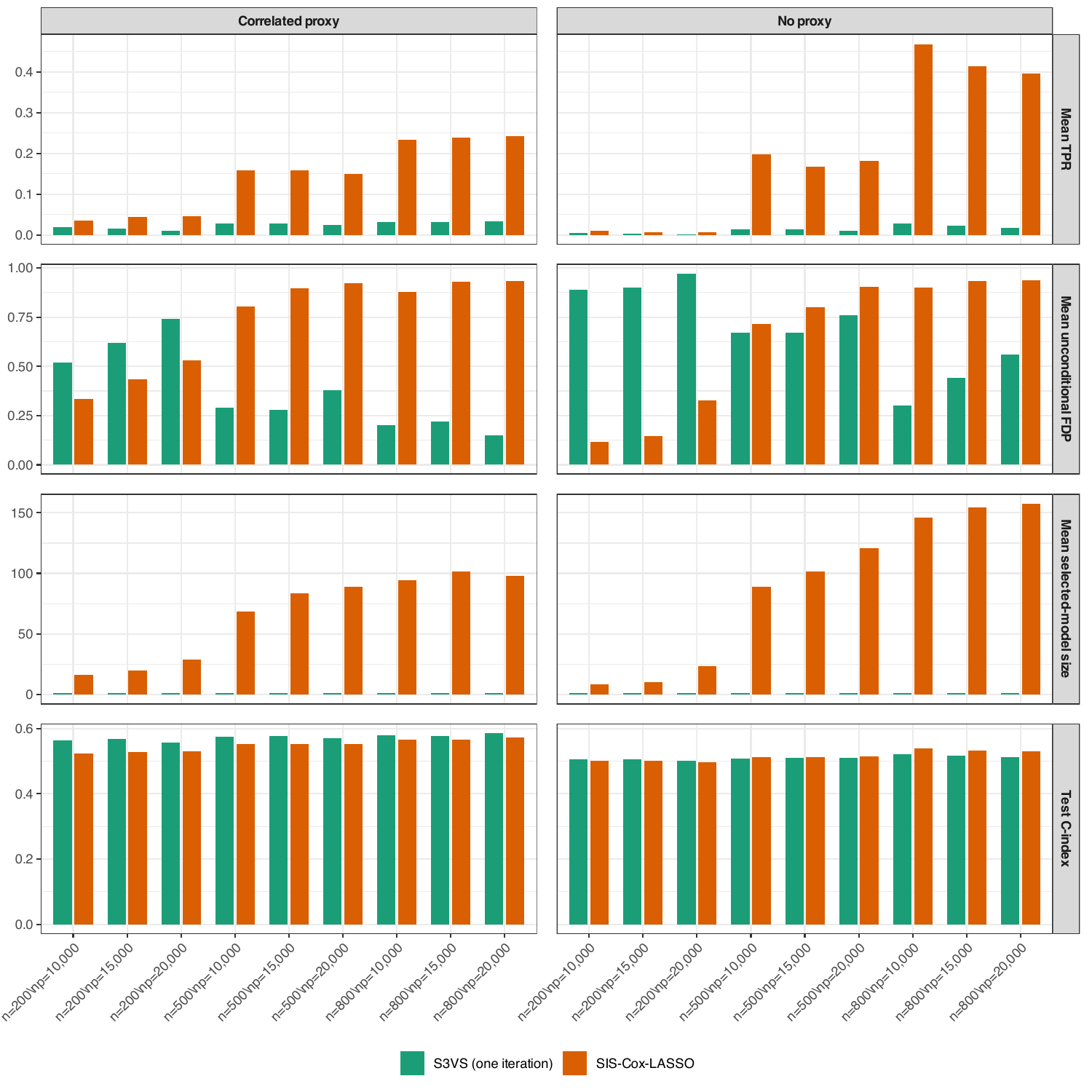}
\caption{Full-grid results for the one-iteration comparison of S3VS and SIS--Cox--LASSO. Rows show mean TPR, mean unconditional FDP, mean selected-model size, and test concordance. Columns show the correlated-proxy and no-proxy scenarios.}
\label{fig:surv-one-iteration-full-grid}
\end{figure}

\end{document}